\numberwithin{equation}{section}
\newcommand{\remove}[1]{}
\pgfplotsset{compat=1.14}
\newtheorem{theorem}{Theorem}[section]
\newtheorem{lemma}[theorem]{Lemma}
\newtheorem{fact}[theorem]{Fact}
\newtheorem{corollary}[theorem]{Corollary}
\newtheorem{definition}[theorem]{Definition}
\newcommand{\pr}[1]{Pr\left[#1 \right] }
\newcommand{\pimul}[4][]{ \ifthenelse{\equal{#2}{}}{\Pi#1}{\ifthenelse{\equal{#3}{}}{\underset{#2}{\prod}#1}{\ifthenelse{\equal{#4}{}}{\underset{#2\in #3}{\prod}#1}{\underset{#3\le #2\le #4}{\prod}#1}}}}
\newcommand{\sigsum}[4][]{ \ifthenelse{\equal{#2}{}}{\Sigma#1}{\ifthenelse{\equal{#3}{}}{\underset{#2}{\sum}#1}{\ifthenelse{\equal{#4}{}}{\underset{#2\in #3}{\sum}#1}{\underset{#3\le #2\le #4}{\sum}#1}}}}
\newcommand{\mech}[1][M]{{#1} }
\newcommand{\chainmech}[3][M]{\mathcal{#1}_{#3\circ#2}}
\newcommand{\multichainmech}[3][M]{\mathcal{#1}_{\underset{#3}{\bigcirc}#2}}
\DeclareMathOperator{\Ber}{Ber}
\newcommand{\exv}[1][]{\ifthenelse{\equal{#1}{}}{\mathbb{E}}{\mathbb{E}\left[#1\right]}}
\newcommand{\ber}[1][p]{\Ber\left(#1\right)}
\DeclarePairedDelimiter\Norm{\big\lVert}{\big\rVert}
\DeclarePairedDelimiter\abs{\big\lvert}{\big\rvert}
\DeclarePairedDelimiter\floor{\lfloor}{\rfloor}
\DeclarePairedDelimiter\ceil{\lceil}{\rceil}
\newcommand{\norm}[2][]{\ifthenelse{\equal{#1}{}}{\Norm{#2}}{\Norm{#2}_{#1}}}
\newcommand{\ldp}[1][$\varepsilon$]{#1-LDP}
\newcommand{\todo}[1]{}
\title{Can Two Walk Together: Privacy Enhancing Methods and Preventing Tracking of Users}
\author{
	Moni Naor\thanks{Department of Computer Science and Applied Mathematics,
		Weizmann Institute of Science,  Rehovot 76100, Israel. Email:
		\texttt{moni.naor@weizmann.ac.il}. Supported in part by grant  from the Israel
		Science Foundation (no.\ 950/16). Incumbent of the Judith Kleeman Professorial
		Chair.}
	\and
	Neil Vexler\thanks{Department of Computer Science and Applied Mathematics,
		Weizmann Institute of Science,  Rehovot 76100, Israel.}
}
\begin{document}
	\maketitle
	\begin{abstract}
		We present a new concern when collecting data from individuals that arises from the attempt to mitigate privacy
		leakage in multiple reporting: tracking of users participating in the data collection via the mechanisms added to provide privacy. We present several definitions for untrackable mechanisms, inspired by the differential privacy framework.
		
		Specifically, we define the trackable parameter as the log of the maximum ratio between the probability that a set of reports originated from a single user and the probability that the same set of reports originated from two users (with the same private value).
		We explore the implications of this new definition.
		%We also present an approximately untrackable definition.
		% that that bound so long as they are very unlikely to happen.
		We show how differentially private and untrackable mechanisms  can be combined to achieve a bound for the problem of detecting when a certain user changed their private value.
		
		Examining Google's deployed solution for everlasting privacy, we show that RAPPOR (Erlingsson et al.\ ACM CCS, 2014) is trackable in our framework for the parameters presented in their paper.
		
		We analyze a variant of randomized response for collecting statistics of single bits, Bitwise Everlasting Privacy, that achieves good accuracy and everlasting privacy, while only being reasonably  untrackable, specifically grows linearly in the number of reports. For  collecting statistics about data from larger domains (for histograms and heavy hitters) we present a mechanism that prevents tracking for a limited number of responses.
		
		We also present the concept of Mechanism Chaining, using the output of one mechanism as the input of another, in the scope of Differential Privacy, and show that the chaining of an $\varepsilon_1$-LDP mechanism with an $\varepsilon_2$-LDP mechanism is $\ln\frac{e^{\varepsilon_1+\varepsilon_2}+1}{e^{\varepsilon_1}+e^{\varepsilon_2}}$-LDP and that this bound is tight.
	\end{abstract}
%\pagebreak
	\section{Introduction}
	The cure should not be worse than the disease. In this paper we raise the issue that mechanisms for Differentially Private data collection  enable the tracking of users. This wouldn't be the first time an innocent solution for an important problem is exploited for the purposes of tracking. Web cookies, designed to let users maintain a session between different web pages, is now the basis of many user tracking implementations. In the Differential Privacy world, we examine how various solutions meant to protect the privacy of users over long periods of time actually enable the tracking of participants.
	
To better understand this, consider the following scenario: A browser developer might wish to learn what which are the most common homepages, for caching purposes, or perhaps to identify suspiciously popular homepages that might be an evidence for the spreading of a new virus. They develop a mechanism for collecting the URLs of users' homepages. Being very privacy aware, they also make sure that the data sent back to them is Differentially Private. They want to ensure they can collect this data twice a day without allowing someone with access to the reports to figure out the homepage of any individual user.
	
If fresh randomness is used to generate each differentially private report, then  the danger is that information about the users homepage would be revealed eventually to someone who follows the user's reports. We strive to what we call ``Everlasting Privacy",  the property  of maintaining privacy no matter how many collections were made. In our example, the users achieve everlasting privacy by correlating the answers given at each collection time: e.g.\ a simple way is that each user fixes the randomness they use, and so sends the same report at each collection.
	
	Now consider Alice, a user who reports from her work place during the day and from her home during the evening. At every collection, Alice always reports regarding the same homepage\footnote{The reader may be wondering why bother reporting about the same value if it does not change. For instance it may for purposes of aggregating information about the currently online population.},  and therefore (since the randomness was fixed) sends identical reports at home and at work. An eavesdropper examining a report from the work IP address and a report from Alice's home IP address would notice that they are the same, while if they examined a report generated by Alice and one generated by Bob (with the same homepage) they will very likely be different. This allows the adversary to find out where Alice lives. %We call this problem trackability.
	
	To elaborate, correlation based solutions open the door to the new kind of issue, tracking users.  The correlation between reports can be used as an instrument of identifying individuals, in particular it makes the decision problem of whether or not two sets of reports originated from the same user much easier. This concern has been suggested by the RAPPOR project~\cite{Erlingsson2014a} but without a formal definition, or analysis in the framework where their solution was provided.
	
	The problem of tracking users is related to the problem of point change detection, i.e.\ identifying when a stream of samples switched from one distribution to another. While this problem has been researched in the past under the lens of privacy by Cummings et al.~\cite{Cummings2018,Cummings2019}, these works focused on private release of point change detection, i.e.\ how to enable researchers to detect changes in the sampled distribution while not being too reliant on any specific sample. Our goal is different. We wish to {\em prevent} change point detection as much as we can; as in our case, a change in distribution correlates to a change in private value. Detecting a change in private value jeopardizes the privacy of the user (think of a case where the gender is changed).
	
\subsection{Our Contributions}
	The main conceptual contribution of this work is the definition of reports being untrackable, presented in Section \ref{Stateful mechanisms and Tracking section}. Roughly, the definition states that the distribution on outputs generated by a single user needs to be sufficiently close to that generated by two users. For the discussion on motivation and possible variants see Section~\ref{untrackable discussion subsection}
	\begin{definition}[informal]
		A mechanism $\mech$ is $\left(\gamma,\delta\right)$-Untrackable for $k$ reports, if for any $k$ reports
		\begin{equation*}
		\pr{\text{Reports were generated by one user}}\le e^{\gamma}\pr{\text{Reports were generated by two users}}+\delta
		\end{equation*}
	\end{definition}
	We present a formal definition to Everlasting Privacy. Roughly speaking, a mechanism is $\left(\gamma,\delta\right)$-Everlasting Privacy if executing it any number of times is $\left(\gamma,\delta\right)$-DP. Our main goal is to simultaneously achieve both tracking prevention and everlasting privacy, while maintaining a reasonable accuracy for the global statistics. We explore the implications of this new definition, specifically how it composes and what a fixed state that is reported in a noisy manner can achieve.
	
We describe how our tracking definitions can be extended to the change point detection framework, namely to bound the probability that a change in the user's private value is ever detected. In that section we also discuss the necessity of correlating answers between data collections to ensure Differential Privacy, and define various general constructions for mechanisms that can achieve this Everlasting Differential Privacy.
	
As a tool for analyzing such constructions, in Section~\ref{mechanism chaining section} we prove a theorem about running a Local Differential Privacy mechanism on the output of another such mechanism.
	\begin{theorem}[informal]
		A mechanism that consists of running an $\varepsilon_2$-LDP mechanism on the result of an $\varepsilon_1$-LDP mechanism results in $\frac{1}{2}\varepsilon_1\cdot\varepsilon_2$-LDP for small $\varepsilon_1$ and $\varepsilon_2$.
	\end{theorem}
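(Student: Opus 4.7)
The plan is to analyze the probability that the composed mechanism produces a fixed output $z$ under two different inputs $x,x'$, and bound the ratio. Let $\mech_1$ be $\varepsilon_1$-LDP, $\mech_2$ be $\varepsilon_2$-LDP, write $a_y = \Pr[\mech_1(x)=y]$, $b_y = \Pr[\mech_1(x')=y]$, and $c_y = \Pr[\mech_2(y)=z]$. Then the privacy loss of the chained mechanism at $z$ is the logarithm of
\begin{equation*}
R(a,b,c) \;=\; \frac{\sum_y a_y c_y}{\sum_y b_y c_y},
\end{equation*}
where $a,b$ are any distributions satisfying $e^{-\varepsilon_1}\le a_y/b_y\le e^{\varepsilon_1}$ and $c$ is any nonnegative vector with $\max_y c_y / \min_y c_y \le e^{\varepsilon_2}$.

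The first step is to reduce to the case where $c_y$ takes only two values. Holding $a,b$ fixed and viewing $R$ as a function of $c$, the partial derivative with respect to $c_{y_0}$ has the same sign as $a_{y_0}/b_{y_0} - R(a,b,c)$, so at any maximizer each $c_y$ is pushed to one of the two extremes $\{c_{\min}, e^{\varepsilon_2} c_{\min}\}$ depending on whether $a_y/b_y$ is above or below $R$. After normalizing so $c_{\min}=1$ and letting $S$ be the set of $y$'s assigned to $e^{\varepsilon_2}$, writing $p=\sum_{y\in S}a_y$ and $q=\sum_{y\in S}b_y$, the problem collapses to maximizing
\begin{equation*}
F(p,q) \;=\; \frac{p\,e^{\varepsilon_2} + (1-p)}{q\,e^{\varepsilon_2} + (1-q)}
\end{equation*}
subject to the $\varepsilon_1$-LDP constraints applied to $\mech_1$ restricted to the event $\{y\in S\}$, namely $e^{-\varepsilon_1}\le p/q,\, (1-p)/(1-q)\le e^{\varepsilon_1}$.

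Next I would solve the two-variable optimization. Since $e^{\varepsilon_2}>1$ we want $p$ large and $q$ small; for fixed $q$, the upper envelope of $p$ is $\min\bigl(e^{\varepsilon_1}q,\ 1-e^{-\varepsilon_1}(1-q)\bigr)$, and the two branches meet at the symmetric point $q^\star=\tfrac{1}{e^{\varepsilon_1}+1}$, $p^\star=\tfrac{e^{\varepsilon_1}}{e^{\varepsilon_1}+1}$. A direct calculation of the derivative of $F$ along each branch (the sign reduces to $(e^{\varepsilon_1}-1)(e^{\varepsilon_2}-1)\ge 0$) shows $F$ is increasing on the first branch and decreasing on the second, so the maximum is attained exactly at $(p^\star,q^\star)$. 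Substituting gives
\begin{equation*}
F(p^\star,q^\star) \;=\; \frac{e^{\varepsilon_1+\varepsilon_2}+1}{e^{\varepsilon_1}+e^{\varepsilon_2}},
\end{equation*}
which is the claimed bound.

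For tightness I would exhibit a binary example: take $\mech_1$ and $\mech_2$ to be randomized-response mechanisms with parameters $\varepsilon_1$ and $\varepsilon_2$ on $\{0,1\}$, so all inputs and outputs are binary. A short direct computation of $\Pr[\mech_2(\mech_1(0))=0]/\Pr[\mech_2(\mech_1(1))=0]$ matches the above ratio exactly, certifying that no better bound is possible. Finally the small-$\varepsilon$ asymptotic follows from Taylor expanding both numerator and denominator to second order: the linear terms cancel, leaving $1+\tfrac{1}{2}\varepsilon_1\varepsilon_2+O(\varepsilon^3)$, whose logarithm is $\tfrac12\varepsilon_1\varepsilon_2$. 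The main obstacle is the optimization step: one must justify the reduction to a two-valued $c$, and then correctly identify that the active constraint switches at $q^\star$ rather than occurring at the naive boundary $p=e^{\varepsilon_1}q,\ q\to e^{-\varepsilon_1}$, which gives a strictly worse bound.
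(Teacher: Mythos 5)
Your proposal is correct and arrives at the same tight bound $\ln\frac{e^{\varepsilon_1+\varepsilon_2}+1}{e^{\varepsilon_1}+e^{\varepsilon_2}}$, but the reduction to the binary case is executed differently from the paper. The paper proceeds by induction on the size of the intermediate domain $V$: at each step it picks two intermediate outputs $a,b$ (with $b$ the arg-max or arg-min of $g^v_s$ depending on whether $p_a/q_a$ lies above or below the current ratio) and merges $a$ into $b$ by modifying the \emph{first} mechanism, using two elementary fraction inequalities ($\frac{a}{b}\le\frac{c}{d}\Leftrightarrow\frac{a+c}{b+d}\ge\frac{a}{b}$ and its subtraction analogue) to show the ratio can only grow; the base case $|V|=2$ is then solved by the same two-variable boundary analysis you perform. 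You instead hold the first mechanism fixed and collapse the \emph{second} mechanism's transition probabilities $c_y$ to two extreme values in one shot via the first-order sign condition $\operatorname{sign}(\partial R/\partial c_{y_0})=\operatorname{sign}(a_{y_0}/b_{y_0}-R)$ (valid since $R$ is linear-fractional, hence monotone, in each $c_{y_0}$ separately), and then aggregate the first mechanism's mass into the scalars $p=\sum_{y\in S}a_y$, $q=\sum_{y\in S}b_y$, whose constraints $e^{-\varepsilon_1}\le p/q,\ (1-p)/(1-q)\le e^{\varepsilon_1}$ follow by summing the pointwise LDP constraints. This avoids the induction and the two fraction facts entirely, at the cost of having to justify the extremality argument on a non-box (ratio) constraint set, which you correctly handle by normalizing $c_{\min}=1$. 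The final optimization, including the observation that the binding constraint switches at $q^\star=\frac{1}{e^{\varepsilon_1}+1}$ rather than at a naive corner, is identical in substance to the paper's Lemma for $|V|=2$. You also supply two items the paper only asserts: the explicit randomized-response example certifying tightness, and the second-order Taylor expansion showing the linear terms cancel to give $\tfrac{1}{2}\varepsilon_1\varepsilon_2$; note, as the paper's Corollary does, that this simplified bound is only meaningful for small $\varepsilon_1,\varepsilon_2$ (for $\varepsilon_i>2$ it is weaker than $\min\{\varepsilon_1,\varepsilon_2\}$).
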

	Theorem~\ref{advanced chaining} and Corollary~\ref{advanced chaining simpler corollary} provide the formal statement and proof.
	
	We then continue to analyze Google RAPPOR's \cite{Erlingsson2014a} performance under the framework of tracking. We show the pure tracking bound RAPPOR achieves as well as estimate its ``average" case performance. We cocnslude that according to our definition of untrackable, RAPPOR achieves poor protection guarantees. This is presented in Section~\ref{Untrackable in RAPPOR section}.
	
	As a warm up, in Section~\ref{Bitwise Everlasting Privacy Mechanism section} we present a mechanism that deals with data collection of a single private bit from each participant. One can view it as the extension of randomized response in this setting.  Each user generates a bit at random and remembers it. At each collection, the user generates an new bit and sends the XOR of the private bit, the remembered bit and the new bit. The remembered bit is generated by flipping one biased coin, parameterized by $\varepsilon_1$. The new bits are generated from fresh coin flips from another biased coin, parameterized by $\varepsilon_2$. The aggregator collects all the reports and outputs estimated frequencies for both $0$ and $1$. We prove that for a choice of privacy parameters $\varepsilon_1,\varepsilon_2<1$, and for $n$ participating users, the mechanism has the properties:
	\begin{enumerate}[label=(\roman*)]
		\item It is $\varepsilon_1$-Everlasting Differentially Private.
		\item Accuracy: the frequency estimation of $0$ and $1$ is no further than $\tilde{O}\left(\frac{1}{\varepsilon_1\cdot\varepsilon_2\cdot\sqrt{n}}\right)$ from the actual values.
		\item It is $\floor{\frac{k}{2}}\varepsilon_2$-untrackable for $k$ reports.
	\end{enumerate}
	
In Section~\ref{Report noisy inner product section} we present a mechanism that allows the collection of statistics of users private values when their data is $d$ bits. This mechanism is particularly relevant for the problems of heavy hitters estimations and histograms. The mechanism's state consists of the results of the inner product of the private value with multiple vectors in a way that is Differentially Private, reporting one such vector and the private result of the inner product at each data collection. The aggregator collects all the reports and produces an estimate for the frequencies of all possible values, such that the sum of frequencies is $1$. We prove that for a choice of privacy parameters $\varepsilon<1$, setting the state to consist of $L$ reports, and for $n$ participating users, the mechanism has the properties:
	\begin{enumerate}[label=(\roman*)]
		\item It is $\left(\varepsilon,\delta\right)$-Approximate Everlasting Differentially Private.
		\item The estimation of the frequency of all values is no further than $\tilde{O}\left(\frac{1}{\varepsilon^\prime}\sqrt{\frac{d}{n}}\right)$ from the actual frequency, for $\varepsilon^\prime=\frac{\varepsilon}{2\sqrt{2L\ln\left(\frac{1}{\delta}\right)}}$.
		\item It is $\left(0,\frac{k^2}{L}\right)$-untrackable.
	\end{enumerate}
Concretely, to obtain  $\left(\varepsilon,\delta\right)$-Everlasting Privacy and $\alpha$ accuracy, then for $k$ reports the guarantee on the mechanism is $\left(0, \widetilde{O}\left(\frac{k^2}{\alpha^2\varepsilon^2n}\right)\right)\text{-Untrackable}$.

Coming up with better bounds or showing the inherent  limitations is the main open direction we propose (see Section~\ref{sec:open}). 
\section{Preliminaries}\label{preliminaries section}
	
\subsection{Differential Privacy}
For background on Differential Privacy see Dwork and Roth \cite{DworkR14} or Vadhan~\cite{Vadhan2017}.
	
Throughout most of this paper we consider a variant of Differential Privacy, called {\em Local Differential Privacy}. Local Differential Privacy regards mechanisms where each individual user runs on their own data to create a report, which is then sent to the server and aggregated there to produce a population level result. The setting we consider is one where the aggregator access the users' data only through a randomized mapping, a mechanism, that has the following property:
\begin{definition}[\cite{Kasiviswanathan2008}]\label{local differential privacy}
Let $\varepsilon,\delta>0$. A mechanism $\mech:U\mapsto O$ is $\left(\varepsilon,\delta\right)$-Local Differentially Private if for every two possible inputs, $u,u^\prime\in U$, and $\forall S\subseteq O$, $\pr{\mech\left(u\right)\in S}\le e^\varepsilon\cdot\pr{\mech\left(u^\prime\right)\in S}+\delta$
\end{definition}
	
One of  the significant properties of Differential Privacy is the way it composes. Composing two mechanisms that are $\left(\varepsilon_1, \delta_1\right)$ and $\left(\varepsilon_2, \delta_2\right)$-Differentially Private respectively is $\left(\varepsilon_1+\varepsilon_2,\delta_1+\delta_2\right)$-Differentially Private. A small deterioration in the $\delta$ parameter achieves a great improvement in the $\varepsilon$ parameter of the composition.
\begin{theorem}[Advanced composition for Differential Privacy \cite{DworkRV2010}]\label{advanced composition for differential privacy}
		Let $\delta^\prime>0$. The $k$ fold composition of $\left(\varepsilon,\delta\right)$-Differentially Private mechanisms is $\left(\varepsilon^\prime,k\delta+\delta^\prime\right)$-Differentially Private for $\varepsilon^\prime=\sqrt{2k\ln\left(1/\delta^\prime\right)}\varepsilon+k\varepsilon\left(e^\varepsilon-1\right)$
\end{theorem}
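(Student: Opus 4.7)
} The plan is to reduce the statement to a concentration inequality applied to the \emph{privacy-loss random variable}. Let $M_1,\dots,M_k$ be the $(\varepsilon,\delta)$-DP mechanisms being composed, and fix two neighboring databases $x,x'$. Write $P$ for the joint output distribution on $(y_1,\dots,y_k)$ when the input is $x$ and $Q$ for the distribution under $x'$; because we are in the adaptive setting, each $M_i$ may depend on $y_1,\dots,y_{i-1}$. For each coordinate define the conditional loss
\begin{equation*}
Z_i \;=\; \ln\frac{P(y_i\mid y_1,\dots,y_{i-1})}{Q(y_i\mid y_1,\dots,y_{i-1})},
\end{equation*}
and let $S=\sum_{i=1}^k Z_i$. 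The main observation is the standard equivalence between DP and tail bounds on $S$: if $\Pr_{y\sim P}[S(y)>\varepsilon']\le\delta''$, then the composition is $(\varepsilon',\delta'')$-DP (one direction goes by splitting any output set $T$ into $T\cap\{S\le\varepsilon'\}$ and its complement).

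Next I would handle the pure DP case $\delta=0$ first and then patch in $\delta$. For pure DP, each conditional distribution satisfies $|Z_i|\le\varepsilon$ almost surely, and a short calculation using $\ln(1+x)\le x$ shows
\begin{equation*}
\mathbb{E}_{y_i\sim P(\cdot\mid y_{<i})}[Z_i]\;\le\;\varepsilon\,\frac{e^\varepsilon-1}{e^\varepsilon+1}\;\le\;\varepsilon(e^\varepsilon-1).
\end{equation*}
Hence $\{Z_i-\mathbb{E}[Z_i\mid y_{<i}]\}_i$ is a bounded martingale difference sequence with differences in $[-2\varepsilon,2\varepsilon]$ (in fact in a range of length $\le 2\varepsilon$, which is what one needs for the sharp constant). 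Applying Azuma--Hoeffding to this martingale and adding back the $k$ conditional-expectation terms yields
\begin{equation*}
\Pr\!\left[S > k\varepsilon(e^\varepsilon-1)+t\right]\;\le\;\exp\!\left(-\frac{t^2}{2k\varepsilon^2}\right).
\end{equation*}
Choosing $t=\varepsilon\sqrt{2k\ln(1/\delta')}$ gives the desired $\varepsilon'$ and a failure probability of $\delta'$, proving the theorem when $\delta=0$.

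Finally, to accommodate $\delta>0$ I would invoke the ``deletion'' / dense-model lemma: any $(\varepsilon,\delta)$-DP mechanism $M_i$ can be written, on each pair $(x,x')$, as a mixture in which, except on a bad event $E_i$ of probability at most $\delta$ (under both $P$ and $Q$), the induced conditional output distributions satisfy the pure $\varepsilon$-DP pointwise bound. Conditioning on $\overline{E}=\bigcap_i\overline{E_i}$ reduces the analysis to the pure case handled above, and a union bound charges an extra $k\delta$ to the failure probability, giving the final $(\,\varepsilon',\,k\delta+\delta'\,)$-DP guarantee. The main technical obstacle I expect is the adaptive conditioning: one has to verify that the martingale structure really holds when each $M_i$ may depend on all previous outputs, and that the dense-model decomposition can be applied coordinatewise so that the bad events across steps can be union-bounded independently of the adversary's strategy.
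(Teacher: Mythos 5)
The paper does not prove this statement at all: Theorem~\ref{advanced composition for differential privacy} is imported by citation from Dwork, Rothblum and Vadhan \cite{DworkRV2010} and used as a black box (e.g.\ in the proof of Theorem~\ref{untrackable advanced composition}). Your proposal is, in substance, a correct reconstruction of the original DRV argument: the reduction of $(\varepsilon',\delta'')$-DP to a tail bound on the privacy-loss sum $S$, the per-step bound $\mathbb{E}[Z_i\mid y_{<i}]\le\varepsilon(e^\varepsilon-1)$ for pure DP, Azuma--Hoeffding on the centered sum with range $2\varepsilon$ giving the $\sqrt{2k\ln(1/\delta')}\,\varepsilon$ term, and the decomposition of each $(\varepsilon,\delta)$-DP step into a pure-DP part plus a bad event of mass $\delta$, union-bounded to yield the extra $k\delta$. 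The one place where your sketch is genuinely lighter than a full proof is the part you yourself flag: in the adaptive, $\delta>0$ case one cannot literally condition on $\overline{E}=\bigcap_i\overline{E_i}$ and union-bound naively, because the bad events live on different pairs of conditional distributions chosen by the adversary; the standard fix is a coupling (or hybrid) argument constructing, step by step, auxiliary distributions that are pointwise $e^\varepsilon$-close and within statistical distance $\delta$ of the true conditionals, then transferring the tail bound back at a total cost of $k\delta$. With that lemma made precise, your proof goes through and matches the cited source.
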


Another useful property of Differential Privacy is that running any function on the output of an $\left(\varepsilon, \delta\right)$-Differentially Private mechanism is $\left(\varepsilon, \delta\right)$-Differentially Private. That is, Differential Privacy is {\em closed under post-processing}.

When using the same mechanism to collect reports multiple times, if not done carefully, the privacy guarantee might deteriorate as the number of collections periods grows. We define {\em Everlasting Differential Privacy} as an upper bound on the privacy parameter of a mechanism, no matter how many times it is executed, as long as the private data had not changed. Definition~\ref{everlasting privacy definition} formalizes this idea.

	\subsection{Background}\label{background subsection}
The need for everlasting privacy became apparent since the early stages of the Differential Privacy research. As mentioned in Section \ref{preliminaries section}, independent repetitive executions of Differential Privacy mechanisms inevitably deteriorate the privacy guarantee. While Theorem \ref{advanced composition for differential privacy} teaches us that the privacy guarantee can grow as low as only the square root of the number of reports, practical implementations might require users to participate in as many as thousands of data collections (e.g. anything requiring daily reports).

This led researchers to suggest data collection mechanisms that allow numerous data collections, while maintaining individuals' privacy. Certain solutions, such as Google's RAPPOR \cite{Erlingsson2014a} and Microsoft's dBitFLip \cite{Ding2017}, use the concept of statefulness, maintaining some data between executions. This enables them to correlate outputs between executions, which allows for a manageable upper bound of the privacy leakage that does not rely on the number of collections made. This effectively allows for a privacy guarantee that holds forever, namely Everlasting Privacy.

\paragraph*{Heavy Hitter Mechanisms:}	
Two problems that have been very interesting for data collectors are the histogram and heavy hitters problems. In the histogram problem the goal is to accurately estimate the frequencies of all possible values the population might hold. The heavy hitters problem is about identifying the most common values amongst the population. Both histograms and heavy hitters in the local model has been researched before by Bassily et al.~\cite{BassilyS2015,BassilyNST2017}, who used Hadamard transformations on the users private data that allow users to send succinct reports to the curator while allowing the required statistics to be generated very efficiently. These works do not fit our framework, as they intrinsically allow for trackability. In their solution, each user is associated with a specific piece of some shared randomness. The aggregator must know to which piece of randomness a specific report belongs to, essentially forcing their solution to be highly trackable. The techniques used in their paper are similar to the ones used by Naor et al.~\cite{NaorPR2018}. In that work the authors use an inner product mechanism to identify and ban the most common passwords. This enables the increase in the effective time an adversary will need to invest in order to guess a user's password. Their mechanism maintains Differential Privacy to prevent the leakage of each individual's password, but it does not maintain Everlasting Privacy. They also mention a modification to their scheme  achieves Everlasting Privacy, by reusing the same random vector for all future inner products, but such a solution is highly trackable. The inner product mechanisms used in ~\cite{BassilyS2015,BassilyNST2017,NaorPR2018} were the inspiration of our Noisy Inner Product mechanism presented in \ref{Report noisy inner product section}.

\paragraph*{Continual Observation and Pan Privacy:}
Other models and solutions to long-lasting privacy have been developed as well, such as the Continual Observation model in~\cite{DworkNPR10,ChanSS11}. The goal is to maintains differential privacy for values that change over time, e.g.\ a counter that updates over time, or streams of data, like traffic conditions and so on. This solution is in the central, or streaming, model and not in the local model. Another model is that of Pan Privacy, where the goal is to maintain privacy even if the internal representation of the secret state is leaked from time to time (Dwork et al.~\cite{DworkNPRY10}).  In Erlignsson et al.~\cite{Erlingsson2019} this idea was extended, transforming the mechanism in~\cite{DworkNPR10} to the local model, in order to solve the 1-bit histogram problem, and thus achieving privacy over extended periods of time. The transformation means that every user reports genuinely only once throughout all data collections, thus resulting in accuracy that relies linearly in the number of times their value changed. This suggests that accuracy will drop as collection times increase.

Joseph et al.~\cite{Joseph2018} suggested an approach where at the beginning, a global update occurs, where each individual participates in a private histogram estimation. At each subsequent potential collection time, each user compares their current contribution to the histogram compared to the last time a global update occurs. Depending on how different it is, they are more likely to suggest that another global update occurs. If enough users vote in favor, the curator initiates another round of global update, creating a more accurate histogram. This solution allows for collections to be made from users only when it is likely that the previously computed output is no longer accurate, greatly increasing the privacy guarantee of individuals. On the other hand, their accuracy analysis relies on the existence of a small number of user types, where all users of the same type behave identically.
	
\section{Stateful Mechanisms and Tracking}\label{Stateful mechanisms and Tracking section}
Consider a mechanism for users to report their values to a center.%the reporting users do with a mapping from inputs to outputs.
Such mechanisms may be	{\em stateless}, i.e.\ ones that receive an input and (probabilistically) produce an output, or {\em stateful} mechanisms, ones that receive in addition to the input  a {\em state} and produce in addition to an output a state for the next execution. The power of stateful mechanisms is that they enable the correlation of outputs between different executions through the states passed from one execution to the next.
	
\subsection{Definitions of Mechanisms and Report Stream Generators}
	Stateless mechanisms are randomized mappings for which each execution is independent of the others. Stateless mechanisms receive the user's data and publicly available information, namely auxiliary information, and output a report. The publicly available information can be anything known to all parties, like time of day, value of some publicly accessible counter, etc.
	\begin{definition}[Stateless Mechanism] \label{stateless mechanism}
		A stateless mechanism $\mech$ is a randomized mapping from a user's data and auxiliary information to the domain of reports, $\mech:U\times A\times\left\{0,1\right\}^\star\mapsto R$. In our setting it is used to generate a stream of reports, $r_1,r_2, \ldots$, where each report is generated independently.
	\end{definition}
	Stateless mechanisms might provide very poor everlasting privacy, as each iteration reveals more information about the user's data.
	
	Therefore, to achieve everlasting privacy one must \emph{correlate} the reports sent by the user(s) (see for instance~\cite{DworkNV2012} where this is proved for counting queries). For this we define \emph{Stateful Mechanisms}, where the mechanism maintains a state that is updated with each call to the mechanism.
	\begin{definition}[Fully Stateful Mechanism] \label{fully stateful mechanism}
		A fully stateful mechanism $\mech$ is a randomized mapping from a user's data, current state and auxiliary information to the domain of reports and to a new state, $\mech:U\times S\times A\times\left\{0,1\right\}^\star\mapsto R\times S$. In our setting it is used to generate a stream of reports $r_1,r_2,\ldots$ and a stream of states $s_0=\bot,s_1,\ldots$, such that each pair of state and report are generated by the previous state, auxiliary information and the user's data, $r_i,s_i=\mech\left(u,s_{i-1},a_{i-1}\right)$
	\end{definition}
	Notice that the execution number and all previous outputs can be encoded into the state. A fully stateful mechanism can achieve everlasting privacy by correlating answers using the data stored in the state. For example, it can execute a DP mechanism on the user's data and remember the result, reporting the same result whenever queried.

One shortcoming of correlating the reports in such a manner is that it   might be used as an identifier by an adversary, potentially allowing the adversary to identify that a group of reports all originated from the same user, thus allowing tracking other activities of the user (see Section~\ref{sec:everlasting_tracking}).

	We define \emph{Permanent State Mechanisms} as mechanisms that maintain the same state once set, i.e.\ $s_1=s_2=s_3...$. As we shall see, such mechanisms are very convenient to work with and have good properties wrt composition.
	
	Report stream generators (RSG) are mappings that use mechanisms to generate a stream of reports. The responsibility of the RSG is to get the user's data and iteratively call the mechanism.
	\begin{definition}[Stateless Report Stream Generator]
For a domain of user data $U$, a range of reports $R$, and a report stream size $n$, A Stateless Report Stream Generator using a stateless mechanism $\mech$ is a mapping $\mech[G]^{\mech}_n:U\mapsto R^n$, that acquires the auxiliary information required at each step and calls $\mech$ to generate the reports $r_1,\ldots, r_n$.
	\end{definition}
Similarly, stateful RGSs use fully stateful mechanisms to generate the stream of reports.
	\begin{definition}[Stateful Report Stream Generator]
		For a domain of user data $U$, a range of reports $R$, and a report stream size $n$, A Stateful Report Stream Generator using a fully stateful mechanism $\mech$ is a mapping $\mech[G]^{\mech}_n:U\mapsto R^n$, that acquires the auxiliary information required at each step and calls $\mech$ with the state of the current step to generate report $r_i$ and the next step's state $s_i$.
	\end{definition}

\subsection{Everlasting Privacy and Tracking}
\label{sec:everlasting_tracking}
The problem we focus on is the ability of an adversary to distinguish whether or not a set of reports originated from a single user or by two users (or more, see Section~\ref{untrackable discussion subsection}). For example, If an adversary had two sets of reports belonging to two different IP addresses, the adversary could learn if those IP addresses belong to the same user or not (potentially identifying the user's work place or home address). The definition of untrackable we propose is inspired by definition of Differential Privacy.
	
\begin{definition} \label{approximately untrackable}
For a domain of user data $U$, a range of reports $R$ and a report stream size $k$, a report stream generator $\mech[G]^{\mech}_k$ is $(\gamma,\delta)$-{\bf untrackable} if for all user data $u\in U$, for all subsets of indices $J\subseteq[k],J^\complement=[k]\setminus J$ and $\forall T\subseteq R^{k}$ we have:
		
		$$\pr{\mech[G]^{\mech}_k\left(u\right)\in T}\le e^\gamma\cdot\pr{\mech[G]^{\mech}_{\left|J\right|}\left(u\right)\in T_J}\cdot\pr{\mech[G]^{\mech}_{k-\left|J\right|}\left(u\right)\in T_{J^\complement}}+\delta$$
and
		$$\pr{\mech[G]^{\mech}_{\left|J\right|}\left(u\right)\in T_J}\cdot\pr{\mech[G]^{\mech}_{k-\left|J\right|}\left(u\right)\in T_{J^\complement}}\le e^\gamma\cdot\pr{\mech[G]^{\mech}_k\left(u\right)\in T}+\delta$$
	\end{definition}
	
	For report stream generators that are $\left(\gamma,\delta\right)$-untrackable, an adversary has  only a small advantage in distinguishing between the following two cases: the reports originated from a single user or two users. A discussion for the idea behind this definition and its benefits can be found in Section~\ref{untrackable discussion subsection}. If we  want this property to hold for any possible output (i.e.\ always have the ambiguity), then we can demand that the mechanism be $\left(\gamma,0\right)$-untrackable. We call such mechanisms $\gamma$-untrackable. We leverage the similarity to DP show  composition theorems on untrackable mechanisms.
	
	Everlasting Privacy is meant to limit the leakage of information users suffer, no matter how many executions a mechanism had. For the following definitions let $T$ be a collection of report streams. For a set of indices $J$ let $T_J$ be the collection of partial report stream, where the reports taken are those in indices $J$.
	\begin{definition}[Everlasting Privacy]\label{everlasting privacy definition}
		
		For a domain of user data $U$, a range of reports $R$, a report stream generator $\mech[G]^{\mech}_k$ is $\left(\varepsilon,\delta\right)$-{\bf Everlasting Privacy} if for all user data $u, u^\prime \in U$,  for all report stream size $k$ and for all sets of output streams $T\subseteq R^{k}$, $\pr{\mech[G]^{\mech}_k\left(u\right)\in T}\le e^{\varepsilon}\pr{\mech[G]^{\mech}_k\left(u^\prime\right)\in T}+\delta$
	\end{definition}
	If a mechanism is $\left(\varepsilon,0\right)$-Everlasting Privacy we say it is $\varepsilon$-Everlasting Privacy.
	
	These definitions are tightly related to the problem of change-point detection. We define undetectability similarly to untrackability, only we do not assume both report sets originated from the same private data:
	\begin{definition} \label{approximately undetectable}
		For a domain of user data $U$, a range of reports $R$ and a report stream size $k$, a report stream generator $\mech[G]^{\mech}_k$ is $\left(\gamma,\delta\right)$-{\bf undetectable} if for all pairs of user data $u,u^\prime\in U$, for all subsets of indices $J\subseteq[k],J^\complement=[k]\setminus J$ and $\forall T\subseteq R^{k}$ we have:
		
		$$\pr{\mech[G]^{\mech}_k\left(u\right)\in T}\le e^\gamma\cdot\pr{\mech[G]^{\mech}_{\left|J\right|}\left(u\right)\in T_J}\cdot\pr{\mech[G]^{\mech}_{k-\left|J\right|}\left(u^\prime\right)\in T_{J^\complement}}+\delta$$ and
		
		$$\pr{\mech[G]^{\mech}_{\left|J\right|}\left(u\right)\in T_J}\cdot\pr{\mech[G]^{\mech}_{k-\left|J\right|}\left(u^\prime\right)\in T_{J^\complement}}\le e^\gamma\cdot\pr{\mech[G]^{\mech}_k\left(u\right)\in T}+\delta.$$
	\end{definition}
	We can now connect being untrackable and everlasting privacy with being undetectable.
	\begin{theorem}\label{undetectable from untrackable and everlasting}
		A mechanism that is $\left(\gamma, \delta\right)$-untrackable and $\left(\varepsilon,\delta^\prime\right)$-everlasting differentially private is also $\left(\gamma+\varepsilon,\delta_{\max}\right)$-undetectable, for $\delta_{\max}=\max\left\{e^{\varepsilon}\delta+\delta^\prime,\delta+e^{\gamma}\delta^\prime\right\}$.
	\end{theorem}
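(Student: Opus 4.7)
The plan is to prove the two inequalities required by Definition~\ref{approximately undetectable} separately, by composing one application of untrackability with one application of everlasting privacy. The only real decision is the order in which to apply them, and this order is exactly what produces the two candidates inside the $\max$ that defines $\delta_{\max}$.

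For the first inequality, I would start from $\pr{\mech[G]^{\mech}_k(u)\in T}$ and first apply $(\gamma,\delta)$-untrackability (which, being defined relative to a single user $u$, is directly applicable) to factor it as
\[
\pr{\mech[G]^{\mech}_k(u)\in T}\le e^{\gamma}\cdot\pr{\mech[G]^{\mech}_{|J|}(u)\in T_J}\cdot\pr{\mech[G]^{\mech}_{k-|J|}(u)\in T_{J^\complement}}+\delta.
\]
Next I would invoke $(\varepsilon,\delta')$-everlasting privacy (Definition~\ref{everlasting privacy definition}) on the second factor to replace $u$ with $u'$, giving the bound $e^{\varepsilon}\pr{\mech[G]^{\mech}_{k-|J|}(u')\in T_{J^\complement}}+\delta'$. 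Multiplying out and using the trivial bound $\pr{\mech[G]^{\mech}_{|J|}(u)\in T_J}\le 1$ to absorb the cross term $e^{\gamma}\cdot\pr{\mech[G]^{\mech}_{|J|}(u)\in T_J}\cdot\delta'$ into the additive slack, the residual additive term is $\delta+e^{\gamma}\delta'$.

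For the second inequality, I would reverse the order. Starting from $\pr{\mech[G]^{\mech}_{|J|}(u)\in T_J}\cdot\pr{\mech[G]^{\mech}_{k-|J|}(u')\in T_{J^\complement}}$, I would first apply everlasting privacy on the $u'$ factor to turn it into $e^{\varepsilon}\pr{\mech[G]^{\mech}_{k-|J|}(u)\in T_{J^\complement}}+\delta'$, again bounding the coefficient $\pr{\mech[G]^{\mech}_{|J|}(u)\in T_J}\le 1$ to keep the $\delta'$ clean. Then I would apply the other direction of $(\gamma,\delta)$-untrackability to pull the product back into $e^{\gamma}\pr{\mech[G]^{\mech}_k(u)\in T}+\delta$, and multiply through by $e^{\varepsilon}$. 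The residual additive term is $e^{\varepsilon}\delta+\delta'$. Taking $\delta_{\max}$ as the maximum of the two additive slacks yields both inequalities simultaneously.

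The only subtle step is the bookkeeping on the additive terms: the untrackability and everlasting-privacy inequalities each produce a sum of a multiplicative part and an additive part, and because untrackability has a \emph{product} of probabilities on one side, combining the two incurs mixed terms like $\pr{\mech[G]^{\mech}_{|J|}(u)\in T_J}\cdot\delta'$. The main (minor) obstacle is ensuring these mixed terms are absorbed correctly via the bound that probabilities are at most $1$, so that the final additive errors are exactly $\delta+e^{\gamma}\delta'$ and $e^{\varepsilon}\delta+\delta'$ rather than something larger. Once that is handled, no additional tools beyond the two hypotheses are needed.
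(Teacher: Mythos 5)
Your proposal is correct and follows essentially the same route as the paper's proof: apply untrackability and then everlasting privacy (on the $u'$ factor) for the first inequality, and the reverse combination for the second, with the mixed $\delta'$ and $\delta$ terms absorbed via the bound $\pr{\mech[G]^{\mech}_{\left|J\right|}\left(u\right)\in T_J}\le 1$, yielding exactly the two slacks $\delta+e^{\gamma}\delta^\prime$ and $e^{\varepsilon}\delta+\delta^\prime$ whose maximum is $\delta_{\max}$. The only cosmetic difference is that the paper presents the second direction by rearranging the everlasting-privacy inequality into a lower bound on $\pr{\mech[G]^{\mech}_{k-\left|J\right|}\left(u\right)\in T_{J^\complement}}$ before invoking untrackability, whereas you chain the upper bounds forward; the arithmetic is identical.
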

	The proof for this theorem can be found in Appendix~\ref{proof of undetectable from untrackable and everlasting}

	\subsection{Tracking Bounds, Composition Theorems and Generalizations}
	For the special case of Permanent State Mechanisms, we can show an upper bound on the untrackable parameter. If the mechanism is $\varepsilon$-Differentially Private in its state, i.e.\ the mechanism protects the privacy of the state,  then the untrackable parameter grows linearly in $\varepsilon$:
	
	\begin{theorem}\label{untrackable bound for permanent state mechanisms}
		A Permanent state mechanism whose reports are generated by an $\varepsilon$-Differentially Private mechanism receiving the state as its input is $\floor*{\frac{k}{2}}\varepsilon$-untrackable for $k$ reports.
	\end{theorem}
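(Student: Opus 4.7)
The plan is to work at the level of pointwise densities on the output stream and exploit two facts about a permanent state mechanism: (i) once the state $s$ is drawn from some distribution $p_u$ (depending on the user data $u$), all reports are conditionally independent given $s$, since each report is obtained by a fresh call to the report mechanism $M_{rep}$; and (ii) each factor is $\varepsilon$-DP in $s$ by hypothesis. Fix an output stream $(r_1,\ldots,r_k)$ and an auxiliary sequence $(a_1,\ldots,a_k)$, write $f_s(r_i)=\Pr[M_{rep}(s,a_i)=r_i]$, and for any subset $I\subseteq[k]$ define $A_I(s):=\prod_{i\in I} f_s(r_i)$. Then the single-user density is $q_1=\sum_s p_u(s)\,A_J(s)\,A_{J^\complement}(s)$, whereas the two-user density (two independent copies of the generator, each started from an independently drawn state) factors as $q_2=\bigl(\sum_{s_1} p_u(s_1)A_J(s_1)\bigr)\bigl(\sum_{s_2}p_u(s_2)A_{J^\complement}(s_2)\bigr)$.

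The crucial one-line observation is that pointwise $\varepsilon$-DP of $M_{rep}$ in the state gives $f_s(r_i)\le e^\varepsilon f_{s'}(r_i)$ for every $s,s'$, hence $A_I(s)\le e^{|I|\varepsilon} A_I(s')$ uniformly in $s'$. Averaging the right-hand side against $p_u$ yields
\[
A_J(s)\;\le\;e^{|J|\varepsilon}\sum_{s'} p_u(s')\,A_J(s'),\qquad \text{for every }s.
\]
Substituting this into the integrand defining $q_1$ and pulling the (now $s$-independent) factor out of the sum decouples $A_J$ from $A_{J^\complement}$ and gives $q_1\le e^{|J|\varepsilon}q_2$. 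Repeating the argument with the roles of $J$ and $J^\complement$ swapped gives $q_1\le e^{|J^\complement|\varepsilon}q_2$, so
\[
q_1\;\le\;e^{\min(|J|,|J^\complement|)\varepsilon}\,q_2\;\le\;e^{\lfloor k/2\rfloor\varepsilon}\,q_2,
\]
since the minimum of $|J|$ and $k-|J|$ is at most $\lfloor k/2\rfloor$.

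For the reverse direction I would use $\varepsilon$-DP the other way, namely $\sum_{s'} p_u(s')A_J(s')\le e^{|J|\varepsilon} A_J(s)$ for every $s$, and plug this constant factor inside the $s_2$-sum that defines $q_2$. The same factorization collapses, and by symmetry we obtain $q_2\le e^{\lfloor k/2\rfloor\varepsilon}q_1$. Finally, pointwise bounds on densities integrate to the same bound on $\Pr[G_k(u)\in T]$ for any event $T\subseteq R^k$ (and analogously for $T_J,T_{J^\complement}$), giving $\lfloor k/2\rfloor\varepsilon$-untrackability with $\delta=0$.

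The main thing to be careful about is getting the factor $\lfloor k/2\rfloor$ rather than the naive $k$: the trick is that one only needs to apply DP to the \emph{smaller} of the two halves $J$, $J^\complement$, because one of the two $A$-factors can be absorbed directly into the two-user product while the other gets paid for in DP cost. A secondary care point is that the argument relies on reports being conditionally independent given the state, which is exactly what ``permanent state'' plus ``report mechanism drawing fresh randomness each call'' buys us; any dependence across calls beyond the shared state would break the factorization of both $q_1$ and $q_2$.
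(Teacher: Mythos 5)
Your proof is correct and follows essentially the same route as the paper's: both arguments exploit the factorization of the conditional report distribution given the permanent state, pay the DP cost $e^{|J|\varepsilon}$ only on the smaller block to decouple the single-user sum into the two-user product, and then take $\min(|J|,|J^\complement|)\le\lfloor k/2\rfloor$. The only cosmetic difference is that the paper decouples via an extremal state $s^\star=\arg\min_s$ (resp.\ $\arg\max_s$) of $q^s_{t_J}$ while you average the DP inequality against $p_u$ directly; these are interchangeable.
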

	The proof for this theorem can be found in \ref{proof of permanent state tracking upper bound}
	
	An important question is how tracking composes, i.e.\ how does a user's participation in multiple Report Stream Generators affect his untrackable guarantees. The similarity between the definition of untrackability and differential privacy allows us to apply results regarding the latter to obtain results on the former.  We show an advanced composition for untrackable mechanisms that is analogous to advanced composition for differential privacy~\cite{DworkRV2010} and Theorem~\ref{advanced composition for differential privacy}.
	\begin{theorem}[Advanced composition for untrackability]\label{untrackable advanced composition}
		Let $m$ be a positive integer. Let $\left\{\mech_i\right\}_{i\in\left[m\right]}$ be $m$ mechanisms that are $\left(\gamma,\delta\right)$-untrackable for $k_i$ reports respectively. The composition of these mechanisms, $\widehat{\mech}$, is $\left(\gamma^\prime,m\delta+\delta^\prime\right)$-untrackable for
		\begin{gather*}
		\gamma^\prime=\sqrt{2m\ln\left(1/\delta^\prime\right)}\cdot\gamma + m\cdot\gamma\left(e^\gamma-1\right)
		\end{gather*}
	\end{theorem}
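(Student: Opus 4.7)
I would reduce to the advanced composition of $(\gamma,\delta)$-indistinguishable distribution pairs---the max-divergence fact that actually drives the proof of Theorem~\ref{advanced composition for differential privacy} in~\cite{DworkRV2010}. The key observation is that $(\gamma,\delta)$-untrackability of $\mech_i$ for a fixed partition $J_i$ is precisely a $(\gamma,\delta)$-closeness statement between two distributions on $R^{k_i}$: the ``one-user'' distribution $P_i$ of $\mech[G]^{\mech_i}_{k_i}(u)$, and the ``two-user'' distribution $Q_i$ obtained by placing an independent $\mech[G]^{\mech_i}_{|J_i|}(u)$ in positions $J_i$ alongside an independent $\mech[G]^{\mech_i}_{k_i-|J_i|}(u)$ in positions $J_i^\complement$. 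So for each sub-mechanism, untrackability is a standard indistinguishability hypothesis; all that remains is to multiply up.

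\medskip

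\noindent\textbf{Factorization.} Fix $u\in U$, an arbitrary partition $J$ of the composition's $K=\sum_i k_i$ output slots, and a set $T\subseteq R^K$; write $J_i$ and $T_i$ for the restrictions to the $i$-th block. Since the $m$ mechanisms employ independent randomness, both the one-user and two-user distributions of $\widehat{\mech}$ factor across the blocks:
\begin{align*}
\Pr\bigl[\mech[G]^{\widehat{\mech}}_K(u)\in T\bigr] &= \prod_{i=1}^m P_i(T_i),\\
\Pr\bigl[\mech[G]^{\widehat{\mech}}_{|J|}(u)\in T_J\bigr]\cdot\Pr\bigl[\mech[G]^{\widehat{\mech}}_{K-|J|}(u)\in T_{J^\complement}\bigr] &= \prod_{i=1}^m Q_i(T_i).
\end{align*}
Applying untrackability of each $\mech_i$ to its induced partition $J_i$ then gives $(\gamma,\delta)$-indistinguishability of $P_i$ and $Q_i$ in both directions.

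\medskip

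\noindent\textbf{Applying advanced composition, and the main obstacle.} It is standard that if $m$ pairs $(P_i,Q_i)$ are each $(\gamma,\delta)$-indistinguishable, then $\bigotimes_i P_i$ and $\bigotimes_i Q_i$ are $(\gamma',m\delta+\delta')$-indistinguishable with the stated $\gamma'$---this is the content of~\cite{DworkRV2010} stripped of its DP-specific phrasing. To invoke Theorem~\ref{advanced composition for differential privacy} as a black box, one can package each pair $(P_i,Q_i)$ as the output of a $(\gamma,\delta)$-DP mechanism fed one of two neighboring inputs (a bit selecting $P_i$ or $Q_i$) and compose the $m$ such mechanisms. Since $u$, $J$, and $T$ were arbitrary, this yields both inequalities of Definition~\ref{approximately untrackable} for $\widehat{\mech}$. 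I expect the only subtle step to be verifying the factorization of the two-user distribution: one must check that ``user $1$ emits the composition's reports in positions $J$ while user $2$ emits those in $J^\complement$'' decomposes, mechanism by mechanism, as independent runs of each $\mech_i$ on positions $J_i$ and $J_i^\complement$ respectively. This uses nothing beyond independence of the mechanisms' randomness, but it is the place where the slightly unusual form of untrackability (a product of two independent sub-executions) interacts with the definition of $\widehat{\mech}$; once it is nailed down, no new probabilistic argument is needed beyond the max-divergence inequality already established for DP.
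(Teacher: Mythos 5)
Your proposal is correct and follows essentially the same route as the paper: both factor the one-user and two-user distributions across the $m$ independent blocks, encode the ``one user vs.\ two users'' choice for each block as a bit fed to an auxiliary $(\gamma,\delta)$-DP mechanism (your $(P_i,Q_i)$ selector is exactly the paper's $\mech[F]_i$), and then invoke Theorem~\ref{advanced composition for differential privacy} as a black box. The only detail the paper spells out that you leave implicit is the degenerate case where a block's induced partition is trivial, which your framing handles automatically since then $P_i=Q_i$.
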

	The proof for this theorem, as well as the formal definition of composition, can be found in \ref{untrackable advanced composition proof}
	
	Another important question is what can be said about the untrackable guarantees in the settings where the reports are split into more than two sets, i.e.\ when we want to answer the question whether some reports were generated by a single user or any number of users. For this we define untrackable for $n$ users for $k$ reports.
	\begin{definition}[Multiple User Untrackable] \label{multiple user untrackable}
		For a domain of user data $U$, a range of reports $R$ and a report stream size $k$, and $n$ users, a report stream generator $\mech[G]^{\mech}_k$ is $\gamma$-multiple user untrackable if for all user data $u\in U$, all partitions $P=\left\{P_i\right\}_{i\in\left[n\right]}$ of $\left[k\right]$ into $n$ parts, and all output stream sets $T\subseteq R^{k}$:
		\begin{gather*}
		e^{-\gamma} \le \frac{\prod_{j \in \left[n\right]}{}\pr{\mech[G]^{\mech}_{\left|P_j\right|}\left(u\right)\in T_{P_j}}}{\pr{\mech[G]^{\mech}_k\left(u\right)\in T}}
\le e^\gamma
		\end{gather*}
	\end{definition}
We  show two connections between Definitions~\ref{approximately untrackable} and~\ref{multiple user untrackable}: the first is a general bound, essentially saying that the untrackable parameter  increases linearly in the number of users.
	\begin{theorem}\label{tracking for more than two users}
		A mechanism that is $\gamma$-untrackable for $k$ reports, is $\left(n-1\right)\gamma$-multiple user untrackable for $n$ users for $k$ reports.
	\end{theorem}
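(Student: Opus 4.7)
The plan is to prove the bound by induction on $n$, using the two-sided form of Definition~\ref{approximately untrackable} (with $\delta = 0$) as the single-step building block and iterating it $n-1$ times.

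For the base cases: when $n = 1$ the partition is trivial ($P_1 = [k]$) so the ratio equals $1 = e^{0}$ and the inequalities hold with slack $(1-1)\gamma = 0$. When $n = 2$ the statement is exactly $\gamma$-untrackability applied with $J = P_1$ and $\delta = 0$.

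For the inductive step, fix $n \geq 3$ and a partition $P_1,\ldots,P_n$ of $[k]$. I would first apply two-user untrackability with $J = P_1$ and $J^{\complement} = P_2 \cup \cdots \cup P_n$, yielding
$$e^{-\gamma}\cdot \pr{\mech[G]^{\mech}_{|P_1|}(u)\in T_{P_1}}\cdot\pr{\mech[G]^{\mech}_{k-|P_1|}(u)\in T_{J^{\complement}}} \;\leq\; \pr{\mech[G]^{\mech}_k(u)\in T} \;\leq\; e^{\gamma}\cdot \pr{\mech[G]^{\mech}_{|P_1|}(u)\in T_{P_1}}\cdot\pr{\mech[G]^{\mech}_{k-|P_1|}(u)\in T_{J^{\complement}}}.$$
Then I would apply the inductive hypothesis to the shorter RSG of length $k - |P_1|$ with the induced partition $P_2,\ldots,P_n$ into $n-1$ parts, which by induction is $(n-2)\gamma$-multiple user untrackable. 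This decomposes $\pr{\mech[G]^{\mech}_{k-|P_1|}(u)\in T_{J^{\complement}}}$ into $\prod_{j=2}^{n}\pr{\mech[G]^{\mech}_{|P_j|}(u)\in T_{P_j}}$ up to a multiplicative factor $e^{\pm(n-2)\gamma}$. Multiplying the two bounds gives total slack $e^{\pm(n-1)\gamma}$, exactly as required, and the upper and lower directions go through symmetrically because Definition~\ref{approximately untrackable} is two-sided.

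The main subtlety I expect is that the two-user assumption is phrased as $\gamma$-untrackability ``for $k$ reports,'' whereas the inductive step invokes the property for an RSG of length $k-|P_1| < k$. I would want to read the hypothesis as saying that the underlying (stateful) mechanism produces $\gamma$-untrackable streams at every length up to $k$, which is the natural reading when $\mech[G]^{\mech}_{k'}$ is obtained by running the same mechanism $k'$ times. If one insisted on a strictly length-$k$ argument, an alternative is to iteratively apply Definition~\ref{approximately untrackable} directly to the length-$k$ generator, first with $J = P_1$, then with $J = P_1 \cup P_2$, and so on, marginalizing the trailing coordinates at each step; the bookkeeping is messier but the accumulated factor is again $e^{(n-1)\gamma}$.
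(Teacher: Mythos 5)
Your proof is correct and follows essentially the same route as the paper's: induction on the number of users, paying one application of two-user untrackability per step for a total factor of $e^{(n-1)\gamma}$, and relying on the fact that $\gamma$-untrackability for $k$ reports carries over to shorter streams (which the paper simply asserts, while you correctly flag it and justify it by marginalization). The only cosmetic difference is the direction of the decomposition --- you peel $P_1$ off the full-length stream and apply the inductive hypothesis to the shorter remainder, whereas the paper merges $P_1$ and $P_2$ into one part of a shorter substream and applies the inductive hypothesis to the full-length stream with one fewer part; both yield the identical bound.
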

	We can significantly improve this bound for {\em permanent state mechanisms} by leveraging the fact that their untrackable parameter is linear in the number of reports used.
	\begin{theorem}\label{tracking for more than two users permanent state}
		A permanent state mechanism $\mech$, who generates reports using an $\varepsilon$-Differentially Private mechanism receiving the state as its input, is $\ceil*{\log n}\floor*{\frac{k}{2}}\varepsilon$-multiple user untrackable for $n$ users for $k$ reports.
	\end{theorem}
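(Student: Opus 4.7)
The plan is to prove this by a binary-tree hybrid argument that repeatedly invokes Theorem~\ref{untrackable bound for permanent state mechanisms} (the two-user bound $\floor*{k/2}\varepsilon$) at every internal node of the tree, rather than peeling off users one at a time as in Theorem~\ref{tracking for more than two users}.

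Given a partition $P = \{P_1,\ldots,P_n\}$ of $[k]$ and an output stream set $T \subseteq R^k$, I build a balanced binary tree with $n$ leaves, one for each $P_j$, of depth at most $\ceil*{\log n}$ (e.g.\ recursively split $n$ leaves into groups of sizes $\ceil*{n/2}$ and $\floor*{n/2}$). For each node $v$, let $P_v \subseteq [k]$ be the union of the parts at the leaves below $v$, so $P_{\text{root}} = [k]$ and $P_\ell = P_j$ for the $j$-th leaf. Let $Q_v := \pr{\mech[G]^{\mech}_{|P_v|}(u) \in T_{P_v}}$. The quantity we want to sandwich is $\prod_\ell Q_\ell \,/\, Q_{\text{root}}$.

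Next, for every internal node $v$ with children $v_L, v_R$, Theorem~\ref{untrackable bound for permanent state mechanisms} applied to the $|P_v|$ reports with index split $J = P_{v_L}$ yields
\begin{equation*}
e^{-\floor*{|P_v|/2}\varepsilon}\, Q_{v_L} Q_{v_R} \;\le\; Q_v \;\le\; e^{\floor*{|P_v|/2}\varepsilon}\, Q_{v_L} Q_{v_R}.
\end{equation*}
Chaining these inequalities bottom-up over the tree (a straightforward induction on the height of $v$) gives
\begin{equation*}
e^{-\alpha\,\varepsilon}\, \prod_{\ell} Q_\ell \;\le\; Q_{\text{root}} \;\le\; e^{\alpha\,\varepsilon}\, \prod_{\ell} Q_\ell,
\qquad \alpha \;:=\; \sum_{v \text{ internal}} \floor*{|P_v|/2}.
\end{equation*}

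It remains to bound $\alpha$ by $\ceil*{\log n}\floor*{k/2}$. For each depth $d \in \{0,1,\ldots,\ceil*{\log n}-1\}$, the internal nodes at depth $d$ have pairwise disjoint labels $P_v$, so $\sum_{v \in I_d} |P_v| \le k$. Combining this with the elementary inequality $\floor*{a/2} + \floor*{b/2} \le \floor*{(a+b)/2}$ (applied inductively), we get $\sum_{v \in I_d} \floor*{|P_v|/2} \le \floor*{k/2}$ at each depth, and summing over the at most $\ceil*{\log n}$ internal levels yields $\alpha \le \ceil*{\log n}\floor*{k/2}$. Plugging this back gives the desired $\gamma = \ceil*{\log n}\floor*{k/2}\varepsilon$ bound in Definition~\ref{multiple user untrackable}.

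The main subtlety I expect to have to handle carefully is the level-counting step: the naive bound $\sum \floor*{|P_v|/2} \le \sum |P_v|/2$ loses a factor, so it is important to invoke $\floor{a/2}+\floor{b/2}\le\floor{(a+b)/2}$ at each level in order to get the tight $\floor*{k/2}$ (rather than $k/2$) and match the statement exactly for odd $k$. Everything else is routine once the tree and hybrid chain are set up.
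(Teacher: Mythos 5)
Your proposal is correct and is essentially the paper's own proof: the paper argues by induction, pairing up the $n$ parts and applying Theorem~\ref{untrackable bound for permanent state mechanisms} to each disjoint pair so that each of the $\ceil*{\log n}$ levels costs at most $\floor*{\frac{k}{2}}\varepsilon$, which is exactly your binary-tree hybrid with the per-level disjointness/floor-summation bound made explicit. No substantive difference in decomposition or key lemma.
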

	The proofs of these theorems can be found in Section ~\ref{tracking for more than two users proof} and~\ref{tracking for more than two users permanent state proof}
	
	\subsection{Discussion}\label{untrackable discussion subsection}
	
	The way we defined untrackable is not the only one possible. The ``typical" attack we wish to prevent is against an adversary that sees many sets of reports and tries to identify two that belong to the same user. However, making this the basis of a definition might result in weak guarantees, as it disregards any prior information that an adversary might have. The adversary might know that Alice only lives in one of two houses, and only tries to identify where she lives. Our definition is designed to protect against exactly this kind of attacker, who only tries to distinguish whether a stream of reports was generated by Alice, or partly by Alice and partly by Bob.
	
	Another natural definition is to prevent distinguishing whether a stream of reports was generated by any combination of users vs.\ any other combination of users. Our definition, though appearing weaker than this one, actually implies it, with some deterioration to the parameter; Theorem \ref{tracking for more than two users} suggests that the parameter deteriorates linearly in the number of users, while Theorem \ref{multiple user untrackable} suggests that in some cases it can deteriorate logarithmically.
	
	Our definition also implies that it would be hard to decide whether any two reports were both generated by Alice, or one by Alice and one by Bob. This property might seem tempting as a basis of an alternative untrackable definition, but it is too weak on its own. A mechanism that has this property might have very poor protection against adversaries with access to more than two reports.
	
	Finally, Theorem \ref{undetectable from untrackable and everlasting} teaches us that our definition, when combined with everlasting privacy, naturally extends to the problem of change point (un)detection. That is, a mechanism that adheres both to the everlasting privacy requirement and our untrackable definition also protects the fact that a user changed their private value.
	
	In conclusion, This definition is strong enough to protect users against reasonable adversaries, i.e.\ ones who have some prior knowledge about the locations of users. On the other hand, while seeming weaker than other definitions it actually implies them. Additionally, as can be seen in Sections \ref{Bitwise Everlasting Privacy Mechanism section} and \ref{Report noisy inner product section}, it is achievable while also allowing for reasonable everlasting privacy guarantees and accuracy.

	\section{Mechanism Chaining} \label{mechanism chaining section}
	In this section we generalize the idea presented in Theorem \ref{untrackable bound for permanent state mechanisms} of using a Differential Privacy mechanism on the output of another such mechanism. We first provide a formal definition for this mechanism chaining, and then state and prove two theorems about the Differential Privacy guarantee achieved by doing such chaining. The first weak, but intuitive, the second much more powerful and also optimal.
	
\subsection{Definitions}
	We now present mechanism chaining in three different settings:
	In the first setting we simply define the chaining of two mechanisms as taking the output of the first and using it as the input of the second.
	\begin{definition}[$2$ Local Mechanism Chaining]
		Given two mechanisms $\mech[A]:U\rightarrow V$ and $\mech[B]:V\rightarrow O$, the chaining of these two mechanisms $\chainmech{\mech[A]}{\mech[B]}:U\rightarrow O$ is defined as $\chainmech{\mech[A]}{\mech[B]}\left(u\right)=\mech[B]\left(\mech[A]\left(u\right)\right)$
	\end{definition}
	
	The second setting we examine is the chaining of $k$ mechanism, and the third and final setting is the chaining of $k$ families of mechanisms that are not necessarily local. They are not relevant for the rest of this paper, but for completeness we present them in Appendix \ref{other trackability definitions appendix}.
	
\subsection{Differential Privacy Guarantees for Two Mechanism Chaining}
	We now present a tight bound on the Differential Privacy guarantee of the chaining of two mechanisms. We begin by presenting the ``Basic Chaining Upper Bound", which is not tight, but is perhaps more intuitive. We then present a better upper bound called the ``Advanced Chaining Upper Bound". Basic Chaining simply says that the resulting Differential Privacy is no worse than the Differential Privacy of either mechanisms.
	\begin{theorem}[Basic Chaining]\label{basic chaining}
		Given two mechanisms $\mech[A]:U\rightarrow V$ and $\mech[B]:V\rightarrow O$ that are \ldp[$\varepsilon_1$] and \ldp[$\varepsilon_2$] respectively, $\chainmech{\mech[A]}{\mech[B]}:U\rightarrow O$ is \ldp[$\min\left\{\varepsilon_1, \varepsilon_2\right\}$].
	\end{theorem}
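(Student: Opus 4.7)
The plan is to establish the two bounds $\varepsilon_1$ and $\varepsilon_2$ separately and then observe that both apply simultaneously, giving the stated $\min\left\{\varepsilon_1,\varepsilon_2\right\}$-LDP guarantee.

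First, I would derive the $\varepsilon_1$-LDP bound as an immediate corollary of the closure of Differential Privacy under post-processing noted in Section~\ref{preliminaries section}. Interpreting $\mech[B]$ as a (randomized) post-processing of the output of $\mech[A]$, the chained mechanism $\chainmech{\mech[A]}{\mech[B]}$ is itself a post-processing of $\mech[A]$, and so it inherits the $\varepsilon_1$-LDP guarantee of $\mech[A]$.

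For the $\varepsilon_2$-LDP bound I would argue directly from the definition. Fix any $u,u^\prime\in U$ and any $S\subseteq O$, and expand
\[
\pr{\chainmech{\mech[A]}{\mech[B]}\left(u\right)\in S}=\sum_{v\in V}\pr{\mech[A]\left(u\right)=v}\cdot\pr{\mech[B]\left(v\right)\in S},
\]
and analogously for $u^\prime$. The key observation is that by $\varepsilon_2$-LDP of $\mech[B]$, all the numbers $\left\{\pr{\mech[B]\left(v\right)\in S}\right\}_{v\in V}$ lie in some interval $\left[c,C\right]$ with $C\le e^{\varepsilon_2}\cdot c$. Any convex combination of numbers in $\left[c,C\right]$ is itself in $\left[c,C\right]$, so the displayed expansion shows $\pr{\chainmech{\mech[A]}{\mech[B]}\left(u\right)\in S}\le C$ and $\pr{\chainmech{\mech[A]}{\mech[B]}\left(u^\prime\right)\in S}\ge c$, yielding the desired ratio of at most $e^{\varepsilon_2}$. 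The argument is symmetric in $u$ and $u^\prime$, giving $\varepsilon_2$-LDP.

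Neither step presents a real obstacle; the only subtle point is to notice that the envelope $\left[c,C\right]$ of the quantities $\pr{\mech[B]\left(v\right)\in S}$ depends only on $S$ and not on $u$ or $u^\prime$, which is precisely what makes the convex combination argument uniform in the pair of inputs. When $V$ is not discrete the sum should be replaced by the appropriate integral against the law of $\mech[A]\left(u\right)$, but the argument is otherwise unchanged.
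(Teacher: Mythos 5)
Your proposal is correct and follows essentially the same route as the paper: the $\varepsilon_1$ bound via post-processing, and the $\varepsilon_2$ bound by noting that the chained output probability is a convex combination of the quantities $\pr{\mech[B]\left(v\right)\in S}$, all of which lie within a multiplicative factor $e^{\varepsilon_2}$ of their minimum, uniformly over the input to $\mech[A]$. Your envelope phrasing is just a compact restatement of the paper's chain of inequalities (replace each term by the minimum, pay $e^{\varepsilon_2}$, swap $u$ for $u^\prime$ since the $\mech[A]$ weights sum to one, and restore the original terms).
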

	The advanced chaining bound is always better:
	\begin{theorem}[Advanced Chaining]\label{advanced chaining}
		Given two mechanisms $\mech[A]:U\rightarrow V$ and $\mech[B]:V\rightarrow O$ that are \ldp[$\varepsilon_1$] and \ldp[$\varepsilon_2$] respectively, $\chainmech{\mech[A]}{\mech[B]}:U\rightarrow O$ is \ldp[$\ln\frac{e^{\varepsilon_1+\varepsilon_2}+1}{e^{\varepsilon_1}+e^{\varepsilon_2}}$].
	\end{theorem}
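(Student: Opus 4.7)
The plan is to bound, for arbitrary $u, u' \in U$ and $o \in O$, the pointwise ratio
\[
\frac{\pr{\chainmech{\mech[A]}{\mech[B]}(u)=o}}{\pr{\chainmech{\mech[A]}{\mech[B]}(u')=o}},
\]
from which the LDP inequality over subsets follows in the usual way. By the law of total probability this ratio equals $\sum_v p_v r_v / \sum_v q_v r_v$, where $p_v = \pr{\mech[A](u)=v}$, $q_v = \pr{\mech[A](u')=v}$, and $r_v = \pr{\mech[B](v)=o}$.

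The central observation is that this expression is linear-fractional in the vector $r = (r_v)$, while the $\varepsilon_2$-LDP constraint on $\mech[B]$ forces $r_v/r_{v'} \le e^{\varepsilon_2}$ for all $v, v'$. Since the ratio is invariant under positive rescaling of $r$, we may assume $r_v \in [1, e^{\varepsilon_2}]$ for every $v$. Fixing all other coordinates, the ratio is a monotonic linear-fractional function of any single $r_v$, so its maximum is attained with each $r_v \in \{1, e^{\varepsilon_2}\}$. This partitions $V$ into $V^+ = \{v : r_v = e^{\varepsilon_2}\}$ and $V^- = V \setminus V^+$. Writing $a = \sum_{V^+} p_v$ and $c = \sum_{V^+} q_v$, the ratio collapses to
\[
\frac{1 + (e^{\varepsilon_2}-1)a}{1 + (e^{\varepsilon_2}-1)c}.
\]

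Next I would invoke the $\varepsilon_1$-LDP property of $\mech[A]$, which survives aggregation and gives $a \le e^{\varepsilon_1} c$ together with $1-a \le e^{\varepsilon_1}(1-c)$. For the interesting regime $a \ge c$, fixing $a$ and minimizing $c$ yields $c = \max\!\left(a e^{-\varepsilon_1},\, 1-e^{\varepsilon_1}(1-a)\right)$; the two branches meet at $a = e^{\varepsilon_1}/(1+e^{\varepsilon_1})$. A short calculus check shows the outer ratio is monotonically increasing in $a$ on the first branch and decreasing on the second, so the global maximum is achieved at this crossover, where $c = 1/(1+e^{\varepsilon_1})$. Substituting gives exactly $(e^{\varepsilon_1+\varepsilon_2}+1)/(e^{\varepsilon_1}+e^{\varepsilon_2})$, matching the claim.

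The main obstacle is justifying the vertex reduction in the intermediate space $V$. When $V$ is finite this is routine linear-fractional programming, but for a general output space one has to argue either by reducing to the finite sub-$\sigma$-algebra generated by the two-cell partition $V^+, V^-$ (which is all that ultimately matters), or by a mass-shifting argument that moves mass between $v$'s of intermediate $r_v$ to the extremes without altering the relevant aggregates. I expect this reduction and the accompanying measurability bookkeeping to be the most delicate step; once done, the remainder is an elementary one-variable optimization. Tightness of the bound can then be witnessed by instantiating both $\mech[A]$ and $\mech[B]$ as binary randomized response with biases $e^{\varepsilon_1}/(1+e^{\varepsilon_1})$ and $e^{\varepsilon_2}/(1+e^{\varepsilon_2})$ respectively, which realizes the worst case above exactly.
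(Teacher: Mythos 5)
Your proposal is correct, and it reaches the bound by a genuinely different reduction than the paper's. The paper proves the $\lvert V\rvert=2$ case by optimizing first over the $\mech[A]$-side probabilities (showing the worst case is tight randomized response with $p_v=\tfrac{e^{\varepsilon_1}}{1+e^{\varepsilon_1}}$, $q_v=\tfrac{1}{1+e^{\varepsilon_1}}$) and then over $\mech[B]$, and handles general $V$ by induction on $\lvert V\rvert$: at each step two elements $a,b\in V$ are merged by redirecting $\mech[A]$'s mass from $a$ to the argmax or argmin of $g^v_s$, with the monotonicity justified by the mediant inequalities (Facts~\ref{fraction sum lemma} and~\ref{fraction sub lemma}) and a case split on whether $p_a/q_a$ sits above or below the current ratio. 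You instead collapse $V$ in one shot on the $\mech[B]$ side: after rescaling, the objective is a monotone linear-fractional function of each $r_v$ over the box $[1,e^{\varepsilon_2}]^V$ (which is exactly the feasible set cut out by the pairwise ratio constraints, up to scale), so the maximum sits at a vertex, and only the induced two-cell partition $V^+,V^-$ with aggregated masses $a=\sum_{V^+}p_v$, $c=\sum_{V^+}q_v$ survives; the constraints $a\le e^{\varepsilon_1}c$ and $1-a\le e^{\varepsilon_1}(1-c)$ then give the same two-branch optimization the paper performs in its base case, with the maximum at the crossover $a=\tfrac{e^{\varepsilon_1}}{1+e^{\varepsilon_1}}$. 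Your route avoids the induction and the argmax/argmin case analysis entirely, and it isolates the one genuinely delicate point, namely justifying the vertex reduction when $V$ is not finite; note that the paper's induction on $\lvert V\rvert$ silently assumes finiteness as well, so you are not losing generality relative to the published argument, and your suggested fix (pass to the two-cell partition of $V$ by the level of $r_v$, or shift mass to the extremes without changing the aggregates) is the right one. Your tightness witness (binary randomized response for both mechanisms) matches the extremal configuration identified in the paper's Lemma~\ref{chaining for two}.
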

	The proof of these theorem can be found in \ref{basic chaining proof} and \ref{advanced chaining proof}.
	The privacy parameter can be upper bounded by a more simple bound that is meaningful for small $\varepsilon_1$ and $\varepsilon_2$:
	\begin{corollary}\label{advanced chaining simpler corollary}
		Given two mechanisms $\mech[A]:U\rightarrow V$ and $\mech[B]:V\rightarrow O$ that are \ldp[$\varepsilon_1$] and \ldp[$\varepsilon_2$] respectively, $\chainmech{\mech[A]}{\mech[B]}:U\rightarrow O$ is \ldp[$\frac{1}{2}\varepsilon_1\cdot\varepsilon_2$].
	\end{corollary}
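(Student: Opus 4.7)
The plan is to derive the corollary directly from the tight bound in Theorem~\ref{advanced chaining}; it suffices to prove the purely analytic inequality
$$\ln\frac{e^{\varepsilon_1+\varepsilon_2}+1}{e^{\varepsilon_1}+e^{\varepsilon_2}}\;\le\;\tfrac{1}{2}\varepsilon_1\varepsilon_2$$
for all $\varepsilon_1,\varepsilon_2\ge 0$. No more probabilistic machinery is needed; everything reduces to bounding the ratio on the left-hand side by $e^{\varepsilon_1\varepsilon_2/2}$.

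The first step I would take is to factor out $e^{(\varepsilon_1+\varepsilon_2)/2}$ from numerator and denominator to expose a $\cosh$ structure: writing $x=(\varepsilon_1+\varepsilon_2)/2$ and $y=(\varepsilon_1-\varepsilon_2)/2$, one gets
$$\frac{e^{\varepsilon_1+\varepsilon_2}+1}{e^{\varepsilon_1}+e^{\varepsilon_2}}\;=\;\frac{\cosh(x)}{\cosh(y)}.$$
Since $x^2-y^2=\varepsilon_1\varepsilon_2$, the inequality we want becomes
$$\ln\cosh(x)-\ln\cosh(y)\;\le\;\tfrac{1}{2}(x^2-y^2).$$

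The heart of the argument is then to analyze the auxiliary function $g(t):=\tfrac{1}{2}t^2-\ln\cosh(t)$. I would observe that $g$ is even (since $\cosh$ is), and that $g'(t)=t-\tanh(t)\ge 0$ for $t\ge 0$, so $g$ is non-decreasing on $[0,\infty)$ and hence satisfies $g(a)\ge g(b)$ whenever $|a|\ge|b|$. Because $\varepsilon_1,\varepsilon_2\ge 0$, we have $|x|=(\varepsilon_1+\varepsilon_2)/2\ge|\varepsilon_1-\varepsilon_2|/2=|y|$, and therefore $g(x)\ge g(y)$, which is exactly the desired inequality after rearrangement.

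I do not anticipate a genuine obstacle here; the only thing to be careful about is checking the sign condition $|x|\ge |y|$, which is where the assumption $\varepsilon_1,\varepsilon_2\ge 0$ (always in force for LDP parameters) enters. Everything else is a one-line monotonicity argument on a smooth one-variable function.
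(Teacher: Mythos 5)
Your proposal is correct. The paper states the corollary as an immediate consequence of Theorem~\ref{advanced chaining} and does not spell out the analytic inequality $\ln\frac{e^{\varepsilon_1+\varepsilon_2}+1}{e^{\varepsilon_1}+e^{\varepsilon_2}}\le\frac{1}{2}\varepsilon_1\varepsilon_2$; your reduction to that inequality is exactly the intended route, and your verification via $\frac{e^{\varepsilon_1+\varepsilon_2}+1}{e^{\varepsilon_1}+e^{\varepsilon_2}}=\frac{\cosh(x)}{\cosh(y)}$ with $x^2-y^2=\varepsilon_1\varepsilon_2$, together with the monotonicity of $g(t)=\frac{1}{2}t^2-\ln\cosh(t)$ on $[0,\infty)$ and the observation $|x|\ge|y|$, is complete and correct. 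It also confirms the paper's remark that the bound remains valid (merely non-tight) for large $\varepsilon_1,\varepsilon_2$.
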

	When $\varepsilon_1$ or $\varepsilon_2$ are greater than $2$ this upper bound is worse than the bound in Theorem \ref{basic chaining}, let alone the optimal one in Theorem \ref{advanced chaining}, but otherwise this bound has little error compared to the optimal bound and is easier to work with.

	\section{(Un)Trackability in RAPPOR}\label{Untrackable in RAPPOR section}
	Equipped with a new framework to analyze tracking, we first consider one of the most significant deployments of a differentially private mechanism, used in all Chrome copies, and analyze its trackability. Introduced in~\cite{Erlingsson2014a}, RAPPOR is a DP mechanism designed to allow repeated collection of telemetry data from users in Chrome. This mechanism was the starting point of this work, since some of the goals stated in the original paper indicate the desirability of being untrackable.
	
	Roughly speaking, RAPPOR reports a value (e.g.\ the homepage of a user) from a large set. It does so with the help of a Bloom filter that initially encodes a set that contains a single element, the desired value. A Bloom filter's output is an all $0$ array that is set to $1$ at locations corresponding to hashes of the value. The mechanism proceeds to randomly flip bits in the Bloom filter, generating what we call the Permanent Randomization. At each point in time when data is to be collected, the mechanism generates a report by taking a copy of the Permanent Randomization and, again, randomly flipping bits and reporting the resulting array. The details of the mechanism can be found in the original paper, but for completeness we also present them in Appendix~\ref{RAPPOR appendix}.
	
	In the paper introducing RAPPOR, the authors mention that preventing tracking of users is an issue with their construction: \textit{``RAPPOR responses can even affect client anonymity, when they are collected on immutable client values that are the same across all clients: if the responses contain too many bits (e.g.\, the Bloom filters are too large), this can facilitate tracking clients, since the bits of the Permanent randomized responses are correlated"}. On the other hand, when talking about the reason behind the second phase of the mechanism execution, generating a report from the permanent randomization, they mention that \textit{``Instead of directly reporting $B^\prime$ [The Permanent Randomization] on every request, the client reports a randomized version of $B^\prime$. This modification significantly increases the difficulty of tracking a client based on $B^\prime$, which could otherwise be viewed as a unique identifier in longitudinal reporting scenarios"}. We wish to show that in our framework, using the same parameters they used in the RAPPOR data collections, RAPPOR is more aligned with the first statement than with the second. We analyzed RAPPOR's untrackable parameter in the worst case setting, which can be found in Appendix~\ref{worst case untracable parameter analysis appendix}. We present an analysis of the ``average case" behavior of RAPPOR.
	
\subsection{Estimated Percentile of the Trackability Random Varaible}
	We estimate the statistics of the trackabiltiy random variable for RAPPOR. In essence, the trackability random variable is the distribution of trackability leaks that happen when participating in the mechanism. The pure version of the untrackable bound in Definition \ref{approximately untrackable} is an upper bound on the possible values of the trackability random variable.

Formally, denote the RAPPOR mechanism by $\mech[R]$. For $k$ reports we define a vector of partitions $\vec{J}=\left\{J_{i}\right\}_{i\in\left[k\right]}$, where $J_{i}=\left[i\right]$. We also define two report vectors $\vec{T}=\left\{T_i\right\}_{i\in\left[k\right]}$ and $\vec{T}^\prime=\left\{T^\prime_i\right\}_{i\in\left[k\right]}$, where $T_i$ is drawn from the product distribution $\left(\mech[G]^{\mech[R]}_{i}\left(u\right),\mech[G]^{\mech[R]}_{n-i}\left(u\right)\right)$ and all of the $T^\prime_i$ are drawn from $\mech[G]^{\mech[R]}_n\left(u\right)$. The trackability random variable for $k$ reports is the value: $\tau\coloneqq\max\left\{\underset{i\in\left[\floor{\frac{k}{2}}\right]}{\max}C_{T_i,J_i},\underset{i\in\left[\floor{\frac{k}{2}}\right]}{\max}C_{T^\prime_i,J_i}\right\}$, where $C_{T,J}$ is as defined in Appendix~\ref{worst case untracable parameter analysis appendix}:
	\begin{equation*}
	C_{T,J}\coloneqq\left\lvert\ln\frac{\pr{\mech[G]^{\mech[R]}_n\left(u\right)=T}}{\pr{\mech[G]^{\mech[R]}_{\left|J\right|}\left(u\right)=T_J}\cdot\pr{\mech[G]^{\mech[R]}_{n-\left|J\right|}\left(u\right)=T_{J^\complement}}}\right\rvert
	\end{equation*}
	The random variable $\tau$ is the maximum measured tracking for the $\floor*{\frac{k}{2}}$ cases where the reports are generated by two users and the $\floor*{\frac{k}{2}}$ cases where the reports are generated by one user. In our setting a mechanism should protect against both types of cases.
		
	The measures of interest are percentiles of the trackability random variable distribution. We estimate the median and the $90^{th}$ percentile of the trackability random variable. Appendix \ref{Details of Approximation of RAPPOR} presents the details of the estimation process.

Figure \ref{rappor empirical trackability figure}, in Appendix \ref{RAPPOR appendix}, shows the estimated median and $90^{th}$ percentile of the Trackability random variable for between $2$ and $15$ reports, and their respective $95\%$ confidence interval. Our estimation shows that RAPPOR's trackability random variable's median is better than the worst case trackability, but reaches high values, around $5$ after as few as $10$ reports. The $90^{th}$ percentile is worse, reaching trackability of $5$ after as little as $7$ reports.
	
\section{Bitwise Everlasting Privacy Mechanism}\label{Bitwise Everlasting Privacy Mechanism section}

We present a mechanism for collecting statistics about the distribution of a single bit in the population, in such a way that everlasting privacy is maintained. Our mechanism is a permanent state one, using a state that consists of a noisy copy of the private bit. At each report, the user sends a noisy version of the state, effectively sending a doubly noisy version of their private bit. We show the mechanism achieves good accuracy, and reasonable everlasting privacy. Since this mechanism is a permanent state mechanism, we can use Theorem~\ref{untrackable bound for permanent state mechanisms} to give a less than reasonable upper bound on the untrackable parameter of this mechanism. We show, however a lower bound of the untrackable parameter of this mechanism that is not far off from the upper bound in Theorem~\ref{untrackable bound for permanent state mechanisms}.

Consider the mechanism where each user holds one bit, $b$. First they generate a permanent randomization, $b^\prime=b\oplus x$, where $x\sim\ber[\frac{1}{e^{\varepsilon_1}+1}]$. Then at each report they generate a report bit, $r=b^\prime\oplus y$, where $y\sim\ber[\frac{1}{e^{\varepsilon_2}+1}]$.
The aggregator receives these reports from all users and invokes the frequency oracle to output an estimate:
	\begin{equation*}
	\tilde{p}_0=
\frac{e^{\varepsilon_1+\varepsilon_2}+1-\left(e^{\varepsilon_1}+1\right)\left(e^{\varepsilon_2}+1\right)\sum_{i \in \left[n\right]} {} r_i}
{\left(e^{\varepsilon_1}-1\right)\left(e^{\varepsilon_2}-1\right)}
	\end{equation*}
	and $\tilde{p}_1=1-\tilde{p}_0$. Let $\tilde{p}$ be the vector whose coordinates are $\tilde{p}_0$ and $\tilde{p}_1$. Let $p$ be the vector of true frequencies.

\subsection{Privacy, Accuracy and Trackability}
Bitwise Everlasting Privacy is $\varepsilon_1$-EDP, outputs $\tilde{p}$ such that with probability $1-\beta$:
\begin{equation*}
\norm[\infty]{\tilde{p}-p} \le \frac{\left(\varepsilon_1+2\right)\left(\varepsilon_2+2\right)}{\varepsilon_1\cdot\varepsilon_2}\sqrt{\frac{32\ln\left(2/\beta\right)}{n}}
\end{equation*}
and is $\floor*{\frac{k}{2}}\varepsilon_2$-Untrackable, but no better than $\frac{k}{2}\varepsilon_2-\varepsilon_1-\ln 2$-Untrackable.
The proof of these claims can be found in Appendix~\ref{privacy accuracy and tracability of bitwise appendix}.
	
\section{Report Noisy Inner Product}\label{Report noisy inner product section}
	
In this section we present a method for collecting statistics about users' data when it is encoded in a vector of $d$ bits. This mechanism allows us to solve the heavy hitters  or histograms problems, while maintaining everlasting privacy. This solution achieves good accuracy with high probability and is effectively untrackable with high probability, but only for a ``not so large" number of reports (where ``not so large" is approximately the square root of the number of vectors in the state).

The ``delta" part of the untrackable bound of this solution can be small, but most likely not \emph{cryptographically} small. While in Differential Privacy one should make sure the ``delta" part is cryptographically small, it is not clear whether or not the same requirement applies to the framework of tracking.

The construction of this mechanism follows a general transformation from a Locally Differential Privacy mechanism to an Everlasting Privacy mechanism with certain trackability parameters:
memorize a fixed number ($L$) of executions of a local privacy preserving computation.
At each collection the mechanism mimics one of these stored executions, choosing one of them at random.
Everlasting Privacy is maintained by the finite access to a user's data: only $L$ total different executions are ever available to the adversary. On the other hand, in terms of trackability, as long as no two different stored execution are played, there is no difference between one user and two users. No guarantees are given if the same stored execution is chosen twice.

In our instantiation of this idea, Report Noisy Inner Product is based on creating a state that contains random $d$-bit vectors as well as their noisy inner product with the user's private value.
	
In this setting there are $n$ users. Let:
	\begin{equation*}
	\varepsilon^\prime\coloneqq\frac{\varepsilon}{2\sqrt{2L\ln\left(\frac{1}{\delta}\right)}}
	\end{equation*}

At initialization, every user $i$, with private value $u_i\in\left\{0,1\right\}^d$ chooses $L$ random vectors $\left\{v_{i,j}\right\}_{j\in\left[L\right]}$, $v_{i,j}\in\left\{0,1\right\}^d\setminus\{\boldsymbol{\vec{0}}\}$, and $L$ noisy bits $\left\{x_{i,j}\right\}_{j\in\left[L\right]}$ such that $x_{i,j}\sim\ber[\frac{1}{e^{\varepsilon^\prime}+1}]$
and calculates $b_{i,j}=\left\langle v_{i,j},\,u_i\right\rangle\oplus x_{i,j}$.

At each time of collection, every user $j$ picks at random a vector from the state generated in the previous step. That is, they choose one of the $v_i$'s generated before and the corresponding result of the inner product $s_i$. They then send it to the server. We refer to the report user $i$ sends at a given collection time as $\left(V_i, B_i\right)$ (i.e.\ the vector and the noisy inner product). The aggregator receives these reports from all users and invokes the frequency oracle to output an estimate:
\begin{equation*}
\tilde{p}_u\coloneqq\frac{2^d-1}{2^dn}\frac{e^{\varepsilon^\prime}+1}{e^{\varepsilon^\prime}-1}
\sum_{i \in \left[n\right]}{}\left(-1\right)^{\left\langle V_i,\,u\right\rangle\oplus B_i}+\frac{1}{2^d}
%\tilde{p}_u\coloneqq\dfrac{2^d-1}{2^dn}\dfrac{e^{\varepsilon^\prime}+1}{e^{\varepsilon^\prime}-1}\sigsum{i}{\left[n\right]}{}\left(-1\right)^{\left\langle V_i,\,u\right\rangle\oplus B_i}+\dfrac{1}{2^d}
\end{equation*}
	
Since we never choose the vector $\vec{0}$ as one of the vectors of the state, we introduce a small bias to the probability that a report will agree with any other value than the one used to generate it. This bias is corrected by the multiplicative $\frac{2^d-1}{2^d}$ factor and the additive $\frac{1}{2^d}$ factor, resulting in an unbiased estimator.

Let $p$ be the entire true frequency vector and $\tilde{p}$ as the entire estimated frequency vector.
	
	\subsection{Privacy, Accuracy and Trackability}

The mechanism Report Noisy Inner Product (RNIP) maintains $\left(\varepsilon,\,\delta\right)$-Approximate Everlasting Privacy, outputs $\tilde{p}$ such that with probability $1-\beta$:
	\begin{align*}
	\norm[\infty]{\tilde{p}-p}
	& \le \frac{\varepsilon^\prime+2}{\varepsilon^\prime}\sqrt{\frac{8\ln(2^{d+1}/\beta)}{n}}\\
	& = O\left(\sqrt{\frac{\ln(2^{d+1}/\beta)\ln(1/\delta)L}{n\varepsilon^2}}\right)
	\end{align*}
	And it is $\left(0,\frac{k^2}{L}+\frac{L^2}{2^d}\right)$-untrackable for $k$ reports.
	
	The proofs of these claims can be found in Appendix~\ref{privacy accuracy and tracability of inner product appendix} and are similar to the analysis in~\cite{NaorPR2018}.

	\subsection{Parameter Selection}
	
When deploying this mechanism, the significant parameters considered are the everlasting privacy and desired accuracy. In our setting we have $n$ users and our data consists of values that can be encoded into $d$ bits. Assume we wish to have everlasting privacy $\left(\varepsilon,\delta\right)$ and accuracy $\alpha$ with probability $1-\beta$. By the analysis of the accuracy made in Section~\ref{report noisy inner product accuracy subsection}, the required value of the Differential Privacy parameter of every report, which we denoted $\varepsilon^\prime$, needs to be at least
$$ \frac{ 2 \sqrt{2\ln( 2^{d+1}/\beta)}}{\alpha\cdot\sqrt{n}-\sqrt{2\ln( 2^{d+1}/\beta )}}.$$
For most interesting settings we can assume that $\alpha>2\sqrt{\frac{2\ln(2^{d+1}/\beta)}{n}}$,
which allows us to choose $\varepsilon^\prime=\frac{4}{\alpha}\sqrt{\frac{2\ln(2^{d+1}/\beta)}{n}}$.
	Once we have $\varepsilon^\prime$ we can say that the mechanism needs to have a state of size at most $L=\floor*{\frac{\varepsilon^2}{8\varepsilon^{\prime 2}\ln\left(1/\delta\right)}}$. This means that the mechanism is $\left(0,\frac{k^2}{L}\right)$-Untrackable for $k$ reports.
	
To summarize, if we were to require $\left(\varepsilon,\delta\right)$-Everlasting Privacy and $\alpha$ accuracy with probability at least $1-\beta$, then for $k$ reports we can guarantee:
	\begin{equation*}
	\left(0, \widetilde{O}\left(\frac{k^2}{\alpha^2\varepsilon^2n}\right)\right)\text{-Untrackable}
	\end{equation*}
	where the $\widetilde{O}$ hides the logarithmic factors in the relaxation parameter for differential privacy $\delta$, the failure probability $\beta$ and the size of the vectors $d$.

	\section{Conclusions and Open Problems}
\label{sec:open}

The issue of using differentially private mechanisms in order to track users is a newly formulated problem. While avoiding tracking is very natural, it has not been investigated before in a formal manner. The notion of Everlasting Privacy is very tempting, and indeed, some companies implemented and deployed it. But Everlasting Privacy should be handled with caution; We have shown that one such deployment of Everlasting Privacy left much to be desired in terms of the untrackable parameter. The risks of tracking are real, and as such every mechanism deployed to a user base must try to prevent it as much as it can.

Many questions concerning tracking are open and the results presented here should be treated as a preliminary investigation. The most important one is how do you combine the constraints on accuracy and on everlasting differential privacy to produce a lower bound on the untrackable parameter. In particular, are the schemes of Sections \ref{Bitwise Everlasting Privacy Mechanism section} and \ref{Report noisy inner product section} the best one can hope for, or are there better mechanisms?
One downside of the scheme of Section \ref{Report noisy inner product section} is the rapid deterioration in the untrackable parameter once $k$ reaches $\sqrt{L}$. Is there a scheme with a more graceful degradation of the untrackable parameter?

The mechanisms we presented are permanent state mechanisms. Perhaps mechanisms which transform the state between executions can achieve better untrackable parameter bounds? Doing such a construction is delicate, since if not done correctly one of two things might happen:
\begin{enumerate}
	\item The Differential Privacy guarantee will decline the more the state alters.
	\item The accuracy will decline, as many different inputs might converge to the same states over time.
\end{enumerate}
But perhaps a clever construction of a mechanism that transforms its state can achieve a much better untrackable parameter bound for given Differential Privacy and accuracy requirements.

Also, perhaps everlasting privacy is an unreasonable demand. A mechanism that achieves privacy for many executions, but not for infinite executions, can be very suitable for practical purposes as well. If so, how can we extend these results to these ``long-lasting" privacy mechanisms?

\section*{Acknowledgements}
We thank Alexandra Korolova for many discussions and suggestions regarding the RAPPOR project as well as Guy Rothblum and Kobbi Nissim for much appreciated comments for their invaluable insights into various parts of this work. Part of this work was done while the second author was visiting the Simons Institute as part of the Data Privacy: Foundations and Applications program.
	\bibliographystyle{acm}
	\bibliography{all_refs}
	\appendix
\section{Chernoff Bounds}
	We use in some proofs the additive Chernoff bound. We use a slightly easier to work with version of Corollary A.1.7 of \cite{Alon2000}.
	\begin{theorem}\label{Chernoff Bound}
		Let $a>0$. Let $X_1,...,X_n$ be random variables, where each $X_i$ is drawn independently from a Bernoulli distribution. Denote with $X$ the average of all $X_i$. The probability the distance between $X$ and its expectation is larger than $a$ is bounded by $\pr{\abs{X-\exv[X]}>a}\le 2e^{-2n\cdot a^2}$
	\end{theorem}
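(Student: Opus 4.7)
The plan is to derive the bound via the standard Chernoff/Hoeffding method, handling the upper and lower deviations separately and combining them by a union bound. First I would reduce to centered variables by setting $Y_i = X_i - \exv[X_i]$, so that each $Y_i$ has mean zero and is supported in an interval of length $1$ (since $X_i \in \{0,1\}$). Then $X - \exv[X] = \tfrac{1}{n}\sum_i Y_i$, and the event $\abs{X-\exv[X]} > a$ is equivalent to $\abs{\sum_i Y_i} > na$.

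For the upper tail, I would invoke the exponential Markov inequality: for any $t > 0$,
\begin{equation*}
\pr{\textstyle\sum_i Y_i > na} \le e^{-tna}\,\exv[e^{t\sum_i Y_i}] = e^{-tna}\prod_{i=1}^n \exv[e^{tY_i}],
\end{equation*}
where the factorization uses independence of the $X_i$ (and hence of the $Y_i$). The heart of the argument is Hoeffding's lemma: for any mean-zero random variable $Y$ supported in an interval of length $1$, one has $\exv[e^{tY}] \le e^{t^2/8}$. This follows by convexity of $y\mapsto e^{ty}$ on the supporting interval (bounding the MGF by the MGF of the two-point distribution at the endpoints) and then a second-order Taylor analysis of the resulting function of $t$, in which the extremal variance $1/4$ of a $\{0,1\}$-valued variable produces the constant $1/8$.

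Plugging the lemma in yields $\pr{\sum_i Y_i > na} \le e^{-tna + nt^2/8}$, which is minimized at $t = 4a$, giving the bound $e^{-2na^2}$. The lower tail follows by running the identical argument with $-Y_i$ in place of $Y_i$ (which still has mean zero and range in an interval of length $1$), and the factor of $2$ in the statement comes from the union bound over the two tails. The only mildly nontrivial step is Hoeffding's lemma with the sharp constant $1/8$; everything else is routine optimization in $t$. Since the theorem is explicitly stated as a mild reformulation of Corollary~A.1.7 of Alon–Spencer, the cleanest presentation is to simply cite that bound and observe that the rescaling from $\sum_i X_i$ to $X = \tfrac{1}{n}\sum_i X_i$ transforms their exponent $-2a^2/n$ (for deviations of size $a$ in the sum) into the stated $-2na^2$ (for deviations of size $a$ in the average).
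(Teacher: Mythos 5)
Your proof is correct: the centering, exponential Markov inequality, Hoeffding's lemma with constant $1/8$, the optimization at $t=4a$ yielding exponent $-2na^2$, and the union bound over the two tails are all exactly right, and the argument correctly accommodates Bernoulli variables with possibly different parameters since each centered $Y_i$ remains mean-zero with range of length $1$. The paper itself gives no proof of this theorem at all --- it simply cites Corollary A.1.7 of Alon--Spencer and uses the rescaled form --- so your closing observation (cite the reference and rescale from the sum to the average) is precisely what the paper does, while your self-contained Hoeffding derivation is a valid and standard way to fill in the details.
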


	\section{Proofs for Section \ref{Stateful mechanisms and Tracking section}}
	
	\subsection{Proof for Theorem \ref{undetectable from untrackable and everlasting}}\label{proof of undetectable from untrackable and everlasting}
	\begin{proof}
		The proof follows directly from the definitions of everlasting privacy, untrackable and undetectability, by simply plugging in the bound from everlasting privacy in the bounds from the untrackable definition.
		
For the first direction notice that:
		\begin{align*}
		\pr{\mech[G]^{\mech}_k\left(u\right)\in T}
		& \le e^\gamma\cdot\pr{\mech[G]^{\mech}_{\left|J\right|}\left(u\right)\in T_J}\cdot\pr{\mech[G]^{\mech}_{k-\left|J\right|}\left(u\right)\in T_{J^\complement}}+\delta\\
		& \le e^\gamma\cdot\pr{\mech[G]^{\mech}_{\left|J\right|}\left(u\right)\in T_J}\cdot\left(e^\epsilon\pr{\mech[G]^{\mech}_{k-\left|J\right|}\left(u^\prime\right)\in T_{J^\complement}}+\delta^\prime\right)+\delta\\
		& \le e^{\gamma+\epsilon}\cdot\pr{\mech[G]^{\mech}_{\left|J\right|}\left(u\right)\in T_J}\cdot\pr{\mech[G]^{\mech}_{k-\left|J\right|}\left(u^\prime\right)\in T_{J^\complement}}+\delta+e^\gamma\delta^\prime\\
		& \le e^{\gamma+\epsilon}\cdot\pr{\mech[G]^{\mech}_{\left|J\right|}\left(u\right)\in T_J}\cdot\pr{\mech[G]^{\mech}_{k-\left|J\right|}\left(u^\prime\right)\in T_{J^\complement}}+\delta_{\max}
		\end{align*}
		Where the first inequality is due to the definition of untrackable and the second inequality is due to the definition of Everlasting Privacy. For the second direction notice that:
		\begin{align*}
		e^{-\varepsilon}\left(\pr{\mech[G]^{\mech}_{k-\left|J\right|}\left(u^\prime\right)\in T_{J^\complement}}-\delta^\prime\right)
		& \le \pr{\mech[G]^{\mech}_{k-\left|J\right|}\left(u\right)\in T_{J^\complement}}
		\end{align*}
		Which means that:
		\begin{align*}
		\pr{\mech[G]^{\mech}_{\left|J\right|}\left(u\right)\in T_J}\cdot e^{-\varepsilon}\left(\pr{\mech[G]^{\mech}_{k-\left|J\right|}\left(u^\prime\right)\in T_{J^\complement}}-\delta^\prime\right)
		& \le e^\gamma\pr{\mech[G]^{\mech}_k\left(u\right)\in T}+\delta
		\end{align*}
		And so:
		\begin{align*}
		\pr{\mech[G]^{\mech}_{\left|J\right|}\left(u\right)\in T_J}\cdot\pr{\mech[G]^{\mech}_{k-\left|J\right|}\left(u^\prime\right)\in T_{J^\complement}}
		& \le e^{\gamma+\varepsilon}\pr{\mech[G]^{\mech}_k\left(u\right)\in T}+e^\varepsilon\delta+\delta^\prime\\
		& \le e^{\gamma+\varepsilon}\pr{\mech[G]^{\mech}_k\left(u\right)\in T}+\delta_{\max}
		\end{align*}
	\end{proof}

\subsection{Proof for Theorem \ref{untrackable bound for permanent state mechanisms}}\label{proof of permanent state tracking upper bound}
\begin{proof}
		To prove this upper bound we have to prove that the two ratios in the definition of untrackable are upper bounded by $e^{\floor*{\frac{k}{2}}\varepsilon}$. Since we are talking about pure untrackable, it suffices for us to show that the bound holds for any output stream. Namely, we will consider individual output streams $t\in R^k$ as opposed to sets of report streams $T\subseteq R^k$, and show that for them the following bound holds:
		\begin{gather*}
		\frac{\pr{\mech[G]^{\mech}_{\left|J\right|}\left(u\right)=t_J}\cdot\pr{\mech[G]^{\mech}_{k-\left|J\right|}\left(u\right)=t_{J^\complement}}}{\pr{\mech[G]^{\mech}_k\left(u\right)=t}}\le e^{\floor*{\frac{k}{2}}\varepsilon}
		\end{gather*}
		and
		\begin{gather*}
		\frac{\pr{\mech[G]^{\mech}_k\left(u\right)=t}}{\pr{\mech[G]^{\mech}_{\left|J\right|}\left(u\right)=t_J}\cdot\pr{\mech[G]^{\mech}_{k-\left|J\right|}\left(u\right)=t_{J^\complement}}}\le e^{\floor*{\frac{k}{2}}\varepsilon}.
		\end{gather*}
		Since this mechanism is a Permanent State Mechanism, if its possible states are $S$, then we have that:
		\begin{gather*}
		\pr{\mech[G]^{\mech}_k\left(u\right)=t}
		=\sum_{s \in S}{}\pr{\text{State is }s}\pr{\text{Reports are }t\;|\;\text{State is } s}
		\end{gather*}
		Denote the first probability (that for a user with private data $u$, the permanent state is $s$) with  $p^{u}_{s}$ and the second probability (that a user with permanent state $s$ will output reports $t$) with $q^{s}_{t}$. The fact that we have a permanent state mechanism leads to the equality:
		\begin{equation*}
		q^{s}_{t}=q^{s}_{t_J}\cdot q^{s}_{t_{J^\complement}}
		\end{equation*}
		Let $j$ be the size of  $J$ and hence $J^\complement$ is of size $k-j$. W.l.o.g.\ assume that $j\le\frac{k}{2}$, (otherwise swap $J$ and $J^\complement$).
		
		We prove a slightly stronger bound, that if $J$ is of size $j$, then the ratios are bounded by $e^{j\varepsilon}$.
		For the first ratio, let $s^\star=\underset{s\in S}{\arg\min}\left\{q^{s}_{t_J}\right\}$. We can bound the ratio as:
		\begin{align*}
		\dfrac{\pr{\mech[G]^{\mech}_{\left|J\right|}\left(u\right)=t_J}\cdot\pr{\mech[G]^{\mech}_{k-\left|J\right|}\left(u\right)=t_{J^\complement}}}{\pr{\mech[G]^{\mech}_k\left(u\right)=t}}
		& =
		\dfrac{\left(\sigsum{s}{S}{}p^{u}_{s}q^{s}_{t_J}\right)\left(\sigsum{s}{S}{}p^{u}_{s}q^{s}_{t_{J^\complement}}\right)}{\sigsum{s}{S}{}p^{u}_{s}q^{s}_{t_J}q^{s}_{t_{J^\complement}}}\\
		& \le \dfrac{\left(\sigsum{s}{S}{}p^{u}_{s}q^{s}_{t_J}\right)\left(\sigsum{s}{S}{}p^{u}_{s}q^{s}_{t_{J^\complement}}\right)}{q^{s^\star}_{t_J}\sigsum{s}{S}{}p^{u}_{s}q^{s}_{t_{J^\complement}}}\\
		& \le \dfrac{e^{j\varepsilon}q^{s^\star}_{t_J}\left(\sigsum{s}{S}{}p^{u}_{s}\right)\left(\sigsum{s}{S}{}p^{u}_{s}q^{s}_{t_{J^\complement}}\right)}{q^{s^\star}_{t_J}\sigsum{s}{S}{}p^{u}_{s}q^{s}_{t_{J^\complement}}}\\
		& = e^{j\varepsilon}\\
		\end{align*}
		The first inequality is due to the definition of $s^\star$, the second inequality is due to the fact that the process of generating a report from a state is $\varepsilon$-DP and the report size is $j$.
		
		For the second ratio, let $s^\star=\underset{s\in S}{\arg\max}\left\{q^{s}_{t_J}\right\}$. We can bound the ratio as:
		\begin{align*}
\frac{\pr{\mech[G]^{\mech}_k(u)=t}}{\pr{\mech[G]^{\mech}_{\left|J\right|}\left(u\right)=t_J}\cdot
\pr{\mech[G]^{\mech}_{k-\left|J\right|}\left(u\right)=t_{J^\complement}}}
		& =
		\frac{\sigsum{s}{S}{}p^{u}_{s}q^{s}_{t_J}q^{s}_{t_{J^\complement}}}{\left(\sigsum{s}{S}{}p^{u}_{s}q^{s}_{t_J}\right)\left(\sigsum{s}{S}{}p^{u}_{s}p_{s,t_{J^\complement}}\right)}\\
		& \le \frac{q^{s^\star}_{t_J}\sigsum{s}{S}{}p^{u}_{s}q^{s}_{t_{J^\complement}}}{\left(\sigsum{s}{S}{}p^{u}_{s}q^{s}_{t_J}\right)\left(\sigsum{s}{S}{}p^{u}_{s}q^{s}_{t_{J^\complement}}\right)}\\
		& \le \frac{q^{s^\star}_{t_J}\sigsum{s}{S}{}p^{u}_{s}q^{s}_{t_{J^\complement}}}{e^{-j\varepsilon}q^{s^\star}_{t_J}\left(\sigsum{s}{S}{}p^{u}_{s}\right)\left(\sigsum{s}{S}{}p^{u}_{s}q^{s}_{t_{J^\complement}}\right)}\\
		& = e^{j\varepsilon}
		\end{align*}
		The first inequality is due to the definition of $s^\star$, the second inequality is due to the fact that the process of generating a report from a state is $\varepsilon$-DP and the report size is $j$.
		
		Since the maximal possible $j$ is $\floor*{\frac{k}{2}}$, we have that for all $J$ the ratios are bounded by $e^{\floor*{\frac{k}{2}}\varepsilon}$ and so the mechanism is $\floor*{\frac{k}{2}}\varepsilon$-untrackable.
	\end{proof}

	\subsection{Proof of Theorem \ref{untrackable advanced composition}}\label{untrackable advanced composition proof}
	First, we properly define the composition of $m$ mechanisms, $\left\{\mech_i\right\}_{i\in\left[m\right]}$. In our setting a user generate $m$ report streams using the $m$ mechanisms. Namely, each $\mech_i$ was used to generate $k_i$ reports. Let the sum of all $k_i$'s be $k$. Let $\widehat{\mech}$ be the composition of all the $\mech_i$'s. Formally, $\widehat{\mech}$ will first generate $k_1$ reports from the report stream generator of $\mech_1$, it will then continue to generate $k_2$ reports from the report stream generator of $\mech_2$, and so on, until all $k$ reports were generated. Let the indices of the reports generated by $\mech_i$ be $J_i$, i.e.\ $J_1=\left\{1,...,k_1\right\}$, $J_2=\left\{k_1+1,...,k_1+k_2\right\}$, and so on. Notice that the probability that $\mech[G]^{\widehat{\mech}}_k$, on input $u$, will generate a report stream $t$ of $k$ reports is exactly:
	\begin{equation*}
	\pr{\mech[G]^{\widehat{\mech}}_k\left(u\right)=t}=\pimul{i}{\left[m\right]}{}\pr{\mech[G]^{\mech_i}_{k_i}\left(u\right)=t_{J_i}}
	\end{equation*}
	since all mechanisms $\mech_i$ are executed independently.
	
	We are now ready to prove Theorem \ref{untrackable advanced composition proof}.
	\begin{proof}
		In the definition of untrackable, we consider whether a set of reports were generated by a single user or two, according to any partition. To prove that the composition mechanism is $\left(\gamma^\prime,m\delta+\delta^\prime\right)$-untrackable we will prove that the bound in the definition holds for every partition possible.
		
		Consider the partition $P=\left\{P_1,P_2\right\}$ of all the reports generated by $\widehat{\mech}$, where $P_1$ are the reports associated with the first user and $P_2$ are the reports associated with the second. We will split this partition into partitions for each mechanism separately. Namely, for every $\mech_i$ we define a partition $P^i=\left\{P_1^i,P_2^i\right\}$, such that each $P_1^i=P_1\cap J_i$ and similarly for $P_2^i$. The partition $P^i$ is exactly the partition on reports generated by $\mech_i$ induced by $P$. Notice that we allow $P_1^i$ (or $P_2^i$) to be empty.
		
		Consider new mechanisms $\left\{\mech[F]_i\right\}_{i\in\left[m\right]}$. that each receives as input a bit $b$. For every $\mech[F]_i$, If $b=0$ the mechanism outputs a stream generated by one copy of $\mech_i$, and if $b=1$ the mechanism outputs a stream generated by two independent copies of $\mech_i$ according to partition $P_i$. If either $P_1^i$ or $P_2^i$ are empty, the output $\mech[F]_i$ will not depend on its input $b$. If the $\mech_i$'s are $\left(\gamma,\delta\right)$-untrackable then the $\mech[F]_i$'s are $\left(\gamma,\delta\right)$-differentially private. This allows us to use Advanced Composition for differential privacy (Theorem \ref{advanced composition for differential privacy}) to say that the $m$-fold composition of all of the $\mech[F]_i$'s is $\left(\gamma^\prime,m\delta+\delta^\prime\right)$-differentially private for $\gamma^\prime$ as is in the theorem statement. Notice that conditioned on all mechanisms receiving input 1, the output product distribution over reports is identical to the case where all reports, for each mechanism, were generated by two users. Similarly if all inputs are 0, the output product distribution over reports is identical to the case where all reports, for each mechanism, were generated by one user. This implies that the composition of the original $\mech_i$'s, $\widehat{\mech}$, is $\left(\gamma^\prime,m\delta+\delta^\prime\right)$-untrackable.
	\end{proof}

	\subsection{Proof of Theorem \ref{tracking for more than two users}}\label{tracking for more than two users proof}
	\begin{proof}
		The proof for this theorem is very intuitive. Since the mechanism is $\gamma$-untrackable for $k$ reports, it is also $\gamma$-untrackable for fewer reports. This teaches us that by paying no more than $\gamma$ we can reduce the question of being untrackable for $n$ users to the question of being untrackable for $n-1$ users. Continuing this until we have 2 users costs us $\left(n-2\right)\gamma$ to the untrackable parameter, resulting in a total of $\left(n-1\right)\gamma$ untrackable.
		
		Formally, we prove this by induction. Assume that a mechanism $\mech$ is $\left(t-1\right)\gamma$-untrackable for $t$ users for $k$ reports. We wish to prove that the mechanism $\mech$ is $t\gamma$-untrackable for $t+1$ users for $k$ reports. The base case, $t=2$, follows directly from the fact that the mechanism is $\gamma$-untrackable for $k$ reports.
		
If the mechanism $\mech$ is $\gamma$-untrackable for $k$ reports, then it is $\gamma$-untrackable for fewer reports as well. We denote $\pr{A}\coloneqq\pr{\mech[G]^{\mech}_{\left|A\right|}\left(u\right)\in T_{A}}$. Notice that for all user data $u\in U$, all partitions $P=\left\{P_i\right\}_{i\in\left[t+1\right]}$ of $\left[k\right]$ into $t+1$ parts and all output stream sets $T\subseteq R^{k}$:
		\begin{align*}
		\pimul{j}{\left[t+1\right]}{}\pr{P_j}
		& = \pr{P_1}\cdot\pr{P_2}\cdot\pimul{j}{\left[t+1\right]\setminus\left\{1,2\right\}}{}\pr{P_j}\\
		& \le e^\gamma\pr{P_1\cup P_2}\pimul{j}{\left[t+1\right]\setminus\left\{1,2\right\}}{}\pr{\mech[G]^{\mech}_{\left|P_j\right|}\left(u\right)\in T_{P_j}}\\
		& \le e^{t\gamma}\pr{\mech[G]^{\mech}_{k}\left(u\right)\in T}
		\end{align*}
		Where the first inequality is due to the mechanism being $\gamma$-untrackable for $\left|P_1\right|+\left|P_2\right|$ reports and the second inequality is due to the induction hypothesis.
		
		Similarly, in the other direction:
		\begin{align*}
		\pr{\mech[G]^{\mech}_{k}\left(u\right)\in T}
		& \le e^{\left(t-1\right)\gamma}\pr{P_1\cup P_2}\pimul{j}{\left[t+1\right]\setminus\left\{1,2\right\}}{}\pr{P_j}\\
		& \le e^{t\gamma}\pr{P_1}\cdot\pr{P_2}\cdot\pimul{j}{\left[t+1\right]\setminus\left\{1,2\right\}}{}\pr{P_j}\\
		& = e^{t\gamma}\pimul{j}{\left[t+1\right]}{}\pr{P_j}
		\end{align*}
		Where the first inequality is due to the induction hypothesis and the second inequality is due to the mechanism being $\gamma$-untrackable for $\left|P_1\right|+\left|P_2\right|$ reports.
	\end{proof}

	\subsection{Proof of Theorem \ref{tracking for more than two users permanent state}}\label{tracking for more than two users permanent state proof}

\begin{proof}
		The proof for this is also rather intuitive. Theorem \ref{untrackable bound for permanent state mechanisms} teaches us that we can exchange the probability that two sets of reports, of size totaling $k^\prime$, originated from two users to the probability they originated from one user by paying no more than $\floor*{\frac{k^\prime}{2}}\varepsilon$ in the untrackable parameter. By combining pairs of users, we can use this to reduce the question of being untrackable for $n$ users to the question of untrackable for $\ceil*{\frac{n}{2}}$ users, by paying no more than $\floor*{\frac{k}{2}}\varepsilon$. By repeating this process $\ceil*{\log n}-1$ times we can reduce the question of untrackable for $n$ users to the question of untrackable for $2$ users, by paying no more than $\ceil*{\log n}\floor*{\frac{k}{2}}\varepsilon$.
		
		Formally, we prove this by induction. Assume that the mechanism is $\ceil*{\log t}\floor*{\frac{k}{2}}\gamma$-untrackable for $t$ users for $k$ reports. We wish to prove that the mechanism is $\ceil*{\log \left(t+1\right)}\floor*{\frac{k}{2}}\gamma$-untrackable for $t+1$ users for $k$ reports. The base case $t=2$ follows directly from the fact that the mechanism is $\gamma$-untrackable for $k$ reports.
		
		Assume $t$ is odd. The proof is very similar when it is even, but for simplicity we will only show it for odd values of $t$. We denote $\ell\coloneqq\ceil*{\log\left(t+1\right)}$ and use the same notation for $\pr{A}$ as before. Notice that for all user data $u\in U$, all partitions $P=\left\{P_i\right\}_{i\in\left[t+1\right]}$ of $\left[k\right]$ into $t+1$ parts and all output stream sets $T\subseteq R^{k}$:
		\begin{align*}
		\pimul{j}{\left[t+1\right]}{}\pr{P_j}
		& = \pimul{j}{\left[\frac{t+1}{2}\right]}{}\pr{P_{2j}}\cdot\pr{P_{2j+1}}\\
		& \le \pimul{j}{\left[\frac{t+1}{2}\right]}{}e^{\floor*{\frac{\left|P_{2j}\cup P_{2j+1}\right|}{2}}\gamma}\pr{P_{2j}\cup P_{2j+1}}\\
		& \le e^{\floor*{\frac{k}{2}}\gamma}\pimul{j}{\left[\frac{t+1}{2}\right]}{}\pr{p_{2j}\cup P_{2j+1}}\\
		& \le e^{\ell\floor*{\frac{k}{2}}\gamma}\pr{\mech[G]^{\mech}_{k}\left(u\right)\in T}
		\end{align*}
		Where the first inequality is due to Theorem~\ref{untrackable bound for permanent state mechanisms} and the third inequality is due to the induction hypothesis.
		
Similarly, for the other direction:

\begin{align*}
		\pr{\mech[G]^{\mech}_{k}\left(u\right)\in T}
		& \le e^{\left(\ell-1\right)\floor*{\frac{k}{2}}\gamma}\pimul{j}{\left[\frac{t+1}{2}\right]}{}
\pr{P_{2j}\cup P_{2j+1}}\\
		& \le e^{\ell\floor*{\frac{k}{2}}\gamma}\pimul{j}{\left[\frac{t+1}{2}\right]}{}\pr{P_{2j}}\cdot\pr{P_{2j+1}}\\
		& = e^{\ell\floor*{\frac{k}{2}}\gamma}\pimul{j}{\left[t+1\right]}{}\pr{P_{j}}
		\end{align*}

Where the first inequality is due to the induction hypothesis and the second inequality is due to Theorem~\ref{untrackable bound for permanent state mechanisms}.
	\end{proof}

\remove{
		\begin{align*}
		\pr{\mech[G]^{\mech}_{k}\left(u\right)\in T}
		& \le e^{\left(\ceil*{\log\left( t+1\right)}-1\right)\floor*{\frac{k}{2}}\gamma}\pimul{j}{\left[\frac{t+1}{2}\right]}{}\pr{\mech[G]^{\mech}_{\left|P_{2j}\right|+\left|P_{2j+1}\right|}\left(u\right)\in T_{P_{2j}\cup P_{2j}}}\\
		& \le e^{\left(\ceil*{\log \left( t+1\right)}-1\right)\floor*{\frac{k}{2}}\gamma}\pimul{j}{\left[\frac{t+1}{2}\right]}{}\left(e^{\floor*{\frac{\left|P_{2j}\right|+\left|P_{2j+1}\right|}{2}}\gamma}\pr{\mech[G]^{\mech}_{\left|P_{2j}\right|}\left(u\right)\in T_{P_{2j}}}\cdot\pr{\mech[G]^{\mech}_{\left|P_{2j+1}\right|}\left(u\right)\in T_{P_{2j+1}}}\right)\\
		& \le e^{\left(\ceil*{\log \left( t+1\right)}-1\right)\floor*{\frac{k}{2}}\gamma}e^{\floor*{\frac{k}{2}}\gamma}\pimul{j}{\left[\frac{t+1}{2}\right]}{}\left(\pr{\mech[G]^{\mech}_{\left|P_{2j}\right|}\left(u\right)\in T_{P_{2j}}}\cdot\pr{\mech[G]^{\mech}_{\left|P_{2j+1}\right|}\left(u\right)\in T_{P_{2j+1}}}\right)\\
		& = e^{\ceil*{\log \left( t+1\right)}\floor*{\frac{k}{2}}\gamma}\pimul{j}{\left[t+1\right]}{}\pr{\mech[G]^{\mech}_{\left|P_j\right|}\left(u\right)\in T_{P_j}}
		\end{align*}
}%remove

	\section{Mechanism Chaining Generalized Definitions}\label{other trackability definitions appendix}
	\subsection{Mechanism Chaining Definition for Multiple Mechanisms}

The second setting we examine is the chaining of $k$ mechanisms. Here the definition follows the same pattern as the previous definition; chaining $k$ mechanisms is done by taking the output of each mechanism and using it as the input of the next. An equivalent way to think about it is by recursively applying the previous definition on the $k$ mechanisms: Take two mechanisms and transform them to a single mechanism by chaining them. Now repeat the process with the resulting mechanism and the next one and so forth.
	\begin{definition}[$k$ Local Mechanism Chaining]
		Given $k$ mechanisms $\mech[A]_i:V_i\rightarrow V_{i+1}$, the $k$ chaining of these mechanisms $\multichainmech{\mech[A]_i}{i\in \left[k\right]}:V_1\rightarrow V_{k+1}$ is defined as $\multichainmech{\mech[A]_i}{i\in \left[k\right]}\left(u\right)=\mech[A]_k\left(...\mech[A]_1\left(u\right)...\right)$, or recursively, as:
		\begin{equation*}
		\multichainmech{\mech[A]_i}{i\in \left[k\right]}\left(u\right)=\mech[A]_k\left(\multichainmech{\mech[A]_i}{i\in \left[k-1\right]}\left(u\right)\right)
		\end{equation*}
	\end{definition}

	\subsection{Mechanism Chaining Definition for the Non-Local Case}\label{final trackability definitions appendix}
	Before delving into the specifics of this definition we  first give an example. Assume we wanted to do a nationwide research on people's sleeping habits. One way we might implement it is by doing a hierarchy of mechanism sets. The first mechanism set is the personal one, where each individual generates a report by applying a mechanism from this set on their own data. The second mechanism set is the familial mechanism set, where each family generates a familial report by applying a mechanism from this set on the reports generated on each of its members. The third mechanism set is the cities mechanism set, where each city generates a city report by applying a mechanism from this set on the reports generated by each family in the city. We continue on like this, defining the country, continent and finally global mechanism sets. The resulting interactions between these sets and the inputs is the general definition of mechanism chaining.

A different way to think of this is by considering the set of inputs, i.e.\ the users' private information database. By partitioning of this ``input" database and a set of mechanisms, we construct a new "output" database where each entry is the output of a mechanism from this set on a section of the partition of the ``input" database (where each section of the ``input" database's partition is used once and only once in the creation of a new entry). We repeat this procedure by using the ``output" database of the previous stage as the ``input" database of the current stage. Note that in this definition we require neither the mechanisms of each set to be the same (or even to uphold the same Differential Privacy guarantee) nor that the final ``output" database is of size $1$.
	\begin{definition}[$k$ Mechanism Chaining]
		Given $k$ sets of mechanisms $M_i=\left\{\mech_{i,j}\right\}$ and a $k$ Partitions $P^{i}$ such that $\left[\abs{P^{i}}\right]=\underset{j}{\bigcup}P^{i+1}_j$, where each $\mech_{i,j}:\underset{\ell\in P^{i}_{j}}{\prod}V_{i,\ell}\mapsto V_{i+1,j}$, the $k$ chaining of these sets mechanisms $\multichainmech{M_i}{i\in \left[k\right]}:\underset{j}{\prod}V_{1,j}\mapsto \underset{j}{\prod}V_{k+1,j}$ is defined recursively as
		\begin{equation*}
		\multichainmech{M_i}{i\in \left[k\right]}\left(D\right)=\left(\mech[M]_{k,1}\left(\multichainmech{M_i}{i\in \left[k-1\right]}\left(D\right)\right),\dots,\mech[M]_{k,\abs{P^k}}\left(\multichainmech{M_i}{i\in \left[k-1\right]}\left(D\right)\right)\right)
		\end{equation*}
	\end{definition}
	\section{Proofs for Section \ref{mechanism chaining section}}
	\begin{figure}[ht]
		\caption{Illustration of the transition probabilities for going from input $u\in U$ to output $s\in O$ through $V$, $V$ being of size $k$}
		\label{mechanism chaining figure}
		\centering
		\begin{tikzpicture}[shorten >=1pt,->]
		\tikzstyle{vertex}=[circle,fill=black!25,minimum size=17pt,inner sep=0pt]
		\node[vertex] (G-u) at (1,3) {$u$};
		\node[vertex] (G-v1) at (5,5) {$v_1$};
		\node[vertex] (G-v2) at (5,4) {$v_2$};
		\node[vertex] (G-v3) at (5,1) {$v_k$};
		\node[vertex] (G-s) at (9,3) {$s$};
		\tikzstyle{vertex}=[circle,minimum size=17pt,inner sep=0pt]
		\node[vertex] (G-ddots) at (5,2.5) {$\vdots$};
		\draw (G-u) -- (G-v1) node[draw=none,fill=none,midway,above] {$p_{v_1}$};
		\draw (G-u) -- (G-v2) node[draw=none,fill=none,midway,below] {$p_{v_2}$};
		\draw (G-u) -- (G-v3) node[draw=none,fill=none,midway,below] {$p_{v_k}$};
		\draw (G-v1) -- (G-s) node[draw=none,fill=none,midway,above] {$g^{v_1}_s$};
		\draw (G-v2) -- (G-s) node[draw=none,fill=none,midway,below] {$g^{v_2}_s$};
		\draw (G-v3) -- (G-s) node[draw=none,fill=none,midway,below] {$g^{v_k}_s$};
		\end{tikzpicture}
	\end{figure}
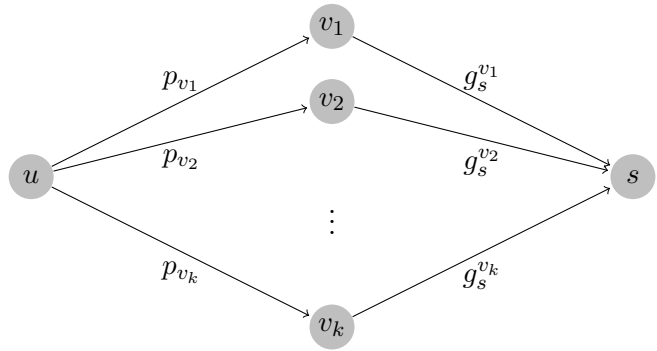
	Let $p_{v}$ be $\pr{\mech[A]\left(u\right)=v}$  and $g^{v}_{s}$ be $\pr{\mech[B]\left(v\right)=s}$. Figure~\ref{mechanism chaining figure} displays the transition probabilities for going from any input $u$ to a middle output $v$ and from $v$ to output $s$. This shows us that the probability of going from $u$ to $s$ is:
	\begin{gather*}
	\pr{\chainmech{\mech[A]}{\mech[B]}\left(u\right)=s}=\sigsum{v}{V}{}p_{v}g^{v}_{s}
	\end{gather*}
	
\subsection{Proof for Theorem \ref{basic chaining}}\label{basic chaining proof}
	\begin{proof}
The proof consists of proving that the chained mechanism is both \ldp[$\varepsilon_1$] and \ldp[$\varepsilon_2$]. The fact that the mechanism is \ldp[$\varepsilon_1$] is a direct result of $\mech[B]$ being a {\em post-processing} of the result of $\mech[A]$.
		
The proof that the mechanism is \ldp[$\varepsilon_2$] is a bit trickier. We consider the separation of the probability of sending $u$ to $s$ into the sum of the multiplications of sending $u$ to $d$ and sending $v$ to $s$ for all $v\in V$, namely:
		\begin{equation*}
		\pr{\chainmech{\mech[A]}{\mech[B]}\left(u\right)=s} = \sigsum{v}{V}{}\pr{\mech[A]\left(u\right)=v}\cdot \pr{\mech[B]\left(v\right)=s}
		\end{equation*}
		We can replace $\pr{\mech[B]\left(v\right)=s}$ with the minimal one of those $\underset{v^\prime\in V}{\min}\left\{\pr{\mech[B]\left(v^\prime\right)=s}\right\}$ by paying no more than an $e^{\varepsilon_2}$ multiplicative value, reducing all of the $\mech[B]$ contribution to the sum to a constant. Since what remains is simply a summation of the probabilities of moving from $u$ to $v$, which sums up to $1$, we can simply replace $u$ with $u^\prime$. Finally, we can replace $\underset{v^\prime\in V}{\min}\left\{\pr{\mech[B]\left(v^\prime\right)=s}\right\}$ with the corresponding original $\pr{\mech[B]\left(v\right)=s}$ without paying anything, completing the proof. Formally:
		\begin{align*}
		\pr{\chainmech{\mech[A]}{\mech[B]}\left(u\right)\in S}
		& = \sigsum[{\left(\sigsum[\pr{\mech[A]\left(u\right)=v}\cdot \pr{\mech[B]\left(v\right)=s}]{v}{V}{}\right)}]{s}{S}{}\\
		& \le \sigsum[{\left(\sigsum[\pr{\mech[A]\left(u\right)=v}\cdot e^{\varepsilon_2}\cdot \underset{v^\prime\in V}{\min}\left\{\pr{\mech[B]\left(v^\prime\right)=s}\right\}]{v}{V}{}\right)}]{s}{S}{}\\
		& = e^{\varepsilon_2}\cdot\sigsum[{\left(\sigsum[\pr{\mech[A]\left(u^\prime\right)=v}\cdot \underset{v^\prime\in V}{\min}\left\{\pr{\mech[B]\left(v^\prime\right)=s}\right\}]{v}{V}{}\right)}]{s}{S}{}\\
		& \le e^{\varepsilon_2}\cdot\sigsum[{\left(\sigsum[\pr{\mech[A]\left(u^\prime\right)=v}\cdot \pr{\mech[B]\left(v\right)=s}]{v}{V}{}\right)}]{s}{S}{}\\
		& = e^{\varepsilon_2}\pr{\chainmech{\mech[A]}{\mech[B]}\left(u^\prime\right)\in S}
		\end{align*}
		And so $\chainmech{\mech[A]}{\mech[B]}$ is \ldp[$\varepsilon_2$] and is therefore \ldp[$\min\left\{\varepsilon_1, \varepsilon_2\right\}$].
	\end{proof}

\subsection{Proof of Theorem \ref{advanced chaining}}\label{advanced chaining proof}
	
The proof is by induction over the size of the set of possible outputs of the first mechanism that the bound holds.
	First, let us prove the base case where the size of the set of all possible outputs of the first mechanism is 2. The proof begins in a general construction of such a mechanism and shows that the worst possible Differential Privacy guarantee is achieved when each mechanism is similar to a biased coin toss that has the required LDP guarantee (e.g.\, if the mechanism is {\ldp} then the worst case Differential Privacy guarantee is achieved when the mechanism outputs one output w.p. $\frac{e^\varepsilon}{e^\varepsilon + 1}$).
	\begin{lemma}\label{chaining for two}
		Given two mechanisms $\mech[A]:U\rightarrow V$ and $\mech[B]:V\rightarrow O$ that are \ldp[$\varepsilon_1$] and \ldp[$\varepsilon_2$] respectively, and given that $\left|V\right|$=2 we have that $\chainmech{\mech[A]}{\mech[B]}:U\rightarrow O$ is \ldp[$\ln\frac{e^{\varepsilon_1+\varepsilon_2}+1}{e^{\varepsilon_1}+e^{\varepsilon_2}}$].
	\end{lemma}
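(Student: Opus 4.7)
The plan is to reduce the claim to an optimization over four scalar parameters and show that the worst case is achieved when $\mech[A]$ is the canonical $\varepsilon_1$-randomized response and $\mech[B]$ behaves on $V$ like a biased coin with ratio exactly $e^{\varepsilon_2}$.

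Let $V = \{v_1, v_2\}$. Fix inputs $u, u' \in U$ and an output $s \in O$. Introduce the four scalars $p = \Pr[\mech[A](u) = v_1]$, $q = \Pr[\mech[A](u') = v_1]$, $a = \Pr[\mech[B](v_1) = s]$, and $b = \Pr[\mech[B](v_2) = s]$. The chained probabilities are $pa + (1-p)b$ and $qa + (1-q)b$, so the quantity to bound is
$$R \;=\; \frac{pa + (1-p)b}{qa + (1-q)b}.$$
Since $R$ is invariant under positive rescaling of $(a,b)$, after handling the degenerate case $b = 0$ (where $\varepsilon_2$-LDP forces $a = 0$ too, reducing to a different output), normalize by $b$ and write $R = \frac{1 + p(r-1)}{1 + q(r-1)}$ with $r = a/b \in [e^{-\varepsilon_2}, e^{\varepsilon_2}]$.

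Next I would push the optimum to the corners of the feasible region by a monotonicity argument. A direct partial-derivative computation shows that when $r > 1$ the ratio $R$ is strictly increasing in $p$ and strictly decreasing in $q$, while $\partial R/\partial r$ has the sign of $p - q$. Hence without loss of generality take $r = e^{\varepsilon_2}$ and push $p$ up and $q$ down as far as the $\varepsilon_1$-LDP constraints between $u$ and $u'$ allow. Those constraints are $p \le e^{\varepsilon_1}\,q$ (from output $v_1$) and $1 - q \le e^{\varepsilon_1}(1 - p)$ (from output $v_2$). Saturating both simultaneously yields the unique extreme point $p = \frac{e^{\varepsilon_1}}{e^{\varepsilon_1}+1}$ and $q = \frac{1}{e^{\varepsilon_1}+1}$, the canonical $\varepsilon_1$-randomized response.

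Plugging back in gives
$$R \;\le\; \frac{\tfrac{e^{\varepsilon_1}}{e^{\varepsilon_1}+1}\cdot e^{\varepsilon_2} + \tfrac{1}{e^{\varepsilon_1}+1}}{\tfrac{1}{e^{\varepsilon_1}+1}\cdot e^{\varepsilon_2} + \tfrac{e^{\varepsilon_1}}{e^{\varepsilon_1}+1}} \;=\; \frac{e^{\varepsilon_1 + \varepsilon_2} + 1}{e^{\varepsilon_1} + e^{\varepsilon_2}},$$
and the reciprocal direction follows by swapping the roles of $u$ and $u'$ (equivalently, the regime $r < 1$). The main obstacle is the corner reduction: the partial-derivative signs are routine but easy to botch when $r < 1$ or when only one of the two $\varepsilon_1$-LDP constraints is tight, so I would treat the regimes $r > 1$ and $r < 1$ explicitly and verify that both $\varepsilon_1$-LDP constraints must be active at the maximum (otherwise one could further increase $p$ or decrease $q$, contradicting optimality).
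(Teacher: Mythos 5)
Your proposal is correct and follows essentially the same route as the paper's proof of Lemma~\ref{chaining for two}: both reduce to a linear-fractional ratio in the transition probabilities, use a monotonicity/derivative argument to push $(p,q)$ to the corner where both $\varepsilon_1$-LDP constraints are tight (the canonical randomized response) and the ratio $a/b$ to its extreme value $e^{\varepsilon_2}$, and then evaluate. The only differences are cosmetic — you normalize by $b$ and fix $r=e^{\varepsilon_2}$ before optimizing $(p,q)$, whereas the paper optimizes $(p_v,q_v)$ first and bounds $g^w_s\ge e^{-\varepsilon_2}g^v_s$ last — and your stated caution about verifying the derivative signs in each regime is exactly the case analysis the paper carries out.
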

	\begin{proof}
		Let $V=\left\{v,w\right\}$ and let:
		\begin{align*}
		p_{v}&\coloneqq\pr{\mech[A]\left(u\right)=v}\\
		q_{v}&\coloneqq\pr{\mech[A]\left(u^\prime\right)=v}\\
		g^{v}_{s}&\coloneqq\pr{\mech[B]\left(v\right)=s}
		\end{align*}
		and similarly for $w$ (the other element in $V$). The proof first shows that the worst case differential privacy happens when $p_v=\frac{e^{\varepsilon_1}}{1+e^{\varepsilon_1}}$ and $q_v=\frac{1}{1+e^{\varepsilon_1}}$. The proof  simply uses straightforward  calculus to show that the differential privacy is the worst when both the ratios are $\abs{\log\frac{p_v}{q_v}}=e^{\varepsilon_1}$ and $\abs{\log\frac{1-p_v}{1-q_v}}=e^{\varepsilon_1}$, i.e.\ the \ldp[$\varepsilon_1$] constraint is tight everywhere. To see this notice that for all outputs $s\in O$ and pairs of user inputs $u,u^\prime\in U$ we have that:
		\begin{align*}
		\frac{\pr{\chainmech{\mech[A]}{\mech[B]}\left(u\right)=s}}{\pr{\chainmech{\mech[A]}{\mech[B]}\left(u^\prime\right)=s}} = \frac{p_{v}\cdot g^{v}_{s}+\left(1-p_{v}\right)\cdot g^{w}_{s}}{q_{v}\cdot g^{v}_{s}+\left(1-q_{v}\right)\cdot g^{w}_{s}}
		\end{align*}
		Assume w.l.o.g.\ that $g^{v}_{s}\ge g^{w}_{s}$, which means that this ratio only increases as $p_{v}$ increases and $q_{v}$ decreases. Specifically, for each value of $p_{v}$ we have that the ratio is maximized when $q_{v}$ is as small as it can get, subject to \ldp[$\varepsilon_1$]. The lower bounds on $q_{v}$ from the \ldp[$\varepsilon_1$] conditions are:
		\begin{align*}
		q_{v}\ge\frac{1}{e^{\varepsilon_1}}\cdot p_{v}\\
		\end{align*}
		And
		\begin{align*}
		1-q_{v} \le e^{\varepsilon_1}\cdot \left(1 - p_{v}\right)
		\end{align*}
		Which is equivalent to:
		\begin{align*}
		q_{v} \ge e^{\varepsilon_1}p_{v} - e^{\varepsilon_1} + 1
		\end{align*}
		And so the worst case in terms of differential privacy happens when:
		\begin{gather*}
		q_{v}=\max\left\{\dfrac{1}{e^{\varepsilon_1}}\cdot p_{v},e^{\varepsilon_1}p_{v} - e^{\varepsilon_1} + 1\right\}
		\end{gather*}
		And so the minimal value of $q_{v}$ is $\frac{1}{e^{\varepsilon_1}}\cdot p_{v}$ when:
		\begin{align*}
		\dfrac{1}{e^{\varepsilon_1}}\cdot p_{v} \ge e^{\varepsilon_1}p_{v} - e^{\varepsilon_1} + 1
		\end{align*}
		Which is equivalent to:
		\begin{gather*}
		p_{v} \le \dfrac{e^{\varepsilon_1}}{e^{\varepsilon_1} + 1}
		\end{gather*}
		On the other hand, minimal value of $q_{v}$ is $e^{\varepsilon_1}p_{v} - e^{\varepsilon_1} + 1$ when:
		\begin{gather*}
		p_{v}\ge\dfrac{e^{\varepsilon_1}}{e^{\varepsilon_1} + 1}
		\end{gather*}
		Let's split this into two cases:
		When $p_{v}\le\frac{e^{\varepsilon_1}}{e^{\varepsilon_1} + 1}$ we have that:
		\begin{align*}
		\dfrac{\pr{\chainmech{\mech[A]}{\mech[B]}\left(u\right)=s}}{\pr{\chainmech{\mech[A]}{\mech[B]}\left(u^\prime\right)=s}} &=\dfrac{p_{v}\cdot g^{v}_{s}+\left(1-p_{v}\right)\cdot g^{w}_{s}}{q_{v}\cdot g^{v}_{s}+\left(1-q_{v}\right)\cdot g^{w}_{s}}\\
		&\ge\dfrac{p_{v}\cdot g^{v}_{s}+\left(1-p_{v}\right)\cdot g^{w}_{s}}{\left(\frac{1}{e^{\varepsilon_1}}\cdot p_{v}\right)\cdot g^{v}_{s}+\left(1-\frac{1}{e^{\varepsilon_1}}\cdot p_{v}\right)\cdot g^{w}_{s}}
		\end{align*}
		Call the last expression $f_1$. If we want to minimize $f$ we have that:
		\begin{align*}
		\dfrac{\partial}{\partial p_{v}}f_1 = \dfrac{g^{w}_{s}e^{\varepsilon_1}\left(g^{v}_{s}-g^{w}_{s}\right)\left(e^{\varepsilon_1}-1\right)}{\left(\left(\frac{1}{e^{\varepsilon_1}}\cdot p_{v}\right)\cdot g^{v}_{s}+\left(1-\frac{1}{e^{\varepsilon_1}}\cdot p_{v}\right)\cdot g^{w}_{s}\right)^2}
		\end{align*}
Both the numerator and the denominator are positive (given that $g^{v}_{s}\ge g^{w}_{s}$, which we assumed w.l.o.g.), and so this function only increases as the value of $p_{v}$ increases.
		
Examining the case when $p_{v}\ge\frac{e^{\varepsilon_1}}{e^{\varepsilon_1} + 1}$ we have that:
		\begin{align*}
		\frac{\pr{\chainmech{\mech[A]}{\mech[B]}\left(u\right)=s}}{\pr{\chainmech{\mech[A]}{\mech[B]}\left(u^\prime\right)=s}} &=\dfrac{p_{v}\cdot g^{v}_{s}+\left(1-p_{v}\right)\cdot g^{w}_{s}}{q_{v}\cdot g^{v}_{s}+\left(1-q_{v}\right)\cdot g^{w}_{s}}\\
		& \le \frac{p_{v}\cdot g^{v}_{s}+\left(1-p_{v}\right)\cdot g^{w}_{s}}{\left(e^{\varepsilon_1}p_{v} - e^{\varepsilon_1} + 1\right)\cdot g^{v}_{s}+\left(e^{\varepsilon_1}-e^{\varepsilon_1}p_{v}\right)\cdot g^{w}_{s}}
		\end{align*}
		and similarly call the last expression  $f_2$. If we want to minimize $f$ we have that:
		\begin{align*}
		\frac{\partial}{\partial p_{v}} f_2 = \frac{g^{v}_{s}\left(g^{w}_{s}-g^{v}_{s}\right)\left(e^{\varepsilon_1}-1\right)}{\left(\left(e^{\varepsilon_1}p_{v} - e^{\varepsilon_1} + 1\right)\cdot g^{v}_{s}+\left(e^{\varepsilon_1}-e^{\varepsilon_1}p_{v}\right)\cdot g^{w}_{s}\right)^2}
		\end{align*}
		This is always negative (given that $g^{v}_{s}\ge g^{w}_{s}$, which we assumed w.l.o.g.), and so this function only increases as the value of $p_{v}$ decreases. So in both cases we get that the value is maximized when $p_{v}$ is on its boundary, i.e.\ $p_{v}=\frac{e^{\varepsilon_1}}{e^{\varepsilon_1}+1}$ and so $q_{v}=\frac{1}{e^{\varepsilon_1}+1}$.
		This means that:
		\begin{align*}
		\frac{\pr{\chainmech{\mech[A]}{\mech[B]}\left(u\right)=s}}{\pr{\chainmech{\mech[A]}{\mech[B]}\left(u^\prime\right)=s}} &=\frac{p_{v}\cdot g^{v}_{s}+\left(1-p_{v}\right)\cdot g^{w}_{s}}{q_{v}\cdot g^{v}_{s}+\left(1-q_{v}\right)\cdot g^{w}_{s}}\\
		&\le\frac{e^{\varepsilon_1}\cdot g^{v}_{s}+g^{w}_{s}}{g^{v}_{s}+e^{\varepsilon_1}\cdot g^{w}_{s}}
		\end{align*}
		We know that $g^{w}_{s}\ge\frac{1}{e^{\varepsilon_2}}g^{v}_{s}$ and that the ratio only increases as $g^{w}_s$ decreases, resulting in:
		
\begin{align*}
\frac{\pr{\chainmech{\mech[A]}{\mech[B]}\left(u\right)=s}}
{\pr{\chainmech{\mech[A]}{\mech[B]}\left(u^\prime\right)=s}}&\le\dfrac{e^{\varepsilon_1}\cdot g^{v}_{s}+g^{w}_{s}}{g^{v}_{s}+e^{\varepsilon_1}\cdot g^{w}_{s}}\\
		& \le \frac{e^{\varepsilon_1}\cdot g^{v}_{s}+ \frac{1}{e^{\varepsilon_2}}g^{v}_{s}}{g^{v}_{s}+e^{\varepsilon_1}\cdot\dfrac{1}{e^{\varepsilon_2}}g^{v}_{s}}\\
		&=\frac{e^{\varepsilon_1+\varepsilon_2}+1}{e^{\varepsilon_2}+e^{\varepsilon_1}}
		\end{align*}
	\end{proof}
	The proof for general sizes of $V$ relies on two facts:
	\begin{fact}\label{fraction sum lemma}
		For all $a,b,c,d>0$ we have that: $\frac{a}{b}\le\frac{c}{d}\Leftrightarrow\frac{a+c}{b+d}\ge\frac{a}{b}$
	\end{fact}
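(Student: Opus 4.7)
The plan is to reduce both inequalities in the claimed equivalence to a single common polynomial inequality by clearing denominators, which is legitimate since all four quantities $a,b,c,d$ are strictly positive (so in particular $b$, $d$, and $b+d$ are positive and the direction of each inequality is preserved under multiplication by them).

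Concretely, I would first rewrite the left-hand inequality $\tfrac{a}{b}\le\tfrac{c}{d}$. Multiplying both sides by $bd>0$ shows that this is equivalent to $ad\le bc$. Next I would rewrite the right-hand inequality $\tfrac{a+c}{b+d}\ge\tfrac{a}{b}$. Multiplying both sides by $b(b+d)>0$ yields $b(a+c)\ge a(b+d)$, which after expanding and canceling the common $ab$ term on each side simplifies exactly to $bc\ge ad$. Thus both inequalities in the equivalence are equivalent to the single inequality $ad\le bc$, and the biconditional follows immediately.

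I do not expect any real obstacle: the entire content is cross-multiplication, and the positivity hypothesis on $a,b,c,d$ ensures that none of the multiplications flip an inequality or divide by zero. If anything, the only thing to be careful about is to phrase the equivalence as a chain of iff's rather than a one-way implication, so that both directions of Fact~\ref{fraction sum lemma} are established at once. This fact will then be used in the inductive step of the proof of Theorem~\ref{advanced chaining} (in conjunction with Lemma~\ref{chaining for two}) to combine the ``two-output'' case with a larger $V$ by treating the contribution of each additional element of $V$ as a mediant-style averaging of ratios.
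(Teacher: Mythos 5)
Your proposal is correct and matches the paper's own argument: both reduce the two inequalities, via cross-multiplication justified by positivity, to the common condition $ad\le bc$ (the paper writes this as a single chain of equivalences adding $ab$ to both sides, which is the same computation). No further comment is needed.
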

	\begin{fact}\label{fraction sub lemma}
		For all  $a,b,c,d>0$, $b>d$ we have that $\frac{a}{b}\ge\frac{c}{d}\Leftrightarrow\frac{a-c}{b-d}\ge\frac{a}{b}$
	\end{fact}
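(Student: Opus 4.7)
The plan is to prove Fact~\ref{fraction sub lemma} by a short chain of equivalences obtained by clearing denominators, using nothing beyond the positivity hypotheses. The key observation is that $b > d > 0$ forces $b - d > 0$, so $b-d$ is a legitimate positive denominator; together with $b, d > 0$ this means every cross-multiplication I perform is by a positive quantity and therefore preserves the direction of the inequality.

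For the forward direction ($\Rightarrow$), I start from $\frac{a}{b} \ge \frac{c}{d}$ and cross-multiply by $bd > 0$ to get the equivalent inequality $ad \ge bc$. To reach the target $\frac{a-c}{b-d} \ge \frac{a}{b}$, I instead cross-multiply the target by $b(b-d) > 0$, reducing it to $b(a-c) \ge a(b-d)$; expanding both sides and cancelling the common $ab$ term collapses this to exactly $ad \ge bc$. So both the hypothesis and the conclusion are equivalent to the single inequality $ad \ge bc$.

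The backward direction ($\Leftarrow$) is then immediate: the same chain of equivalences runs in reverse, since every step was an ``if and only if'' justified by multiplication or division by a positive quantity. No case analysis or sign juggling is needed.

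There is essentially no obstacle; the only thing to watch is the sign of $b - d$, which is precisely why the hypothesis $b > d$ (and not merely $b, d > 0$) is included in the statement. Note that $a - c$ need not be positive for the argument to work, since cross-multiplication only requires positivity of the denominators. This fact, together with Fact~\ref{fraction sum lemma}, is intended to drive the inductive step extending Lemma~\ref{chaining for two} from $|V| = 2$ to arbitrary $|V|$ in the full proof of Theorem~\ref{advanced chaining}, by letting one replace a ratio of two weighted sums by a ratio involving one fewer term while controlling the direction of the bound.
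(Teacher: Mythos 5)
Your proposal is correct and takes essentially the same route as the paper: both clear denominators by positive quantities and reduce each side of the biconditional to the single inequality $ad \ge bc$ (the paper subtracts $ab$ and divides by the negative $b(d-b)$, while you multiply the target by $b(b-d)>0$, which is only a cosmetic difference). Nothing further is needed.
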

	For completeness we show their proof at Section~\ref{facts proofs appendix}.
	
	We are now ready to prove our advanced upper bound for general $V$. We use induction over the size of the set of all possible outputs of the first mechanism. For every two user inputs, the reduction will find two elements $a,b \in V$, s.t.\ if we were to output $b$ whenever we were supposed to output $a$ for both user inputs, the Differential Privacy guarantee of the chaining will only weaken. This would allow both mechanisms to maintain their original Differential Privacy guarantee and so they will have the same structure as the one in the induction hypothesis, and so the reduction holds.
	
	\begin{proof}[Proof of Theorem \ref{advanced chaining}]
		We prove this using induction on the size of $V$.
		Lemma~\ref{chaining for two} proves the base case, where $\left|V\right|=2$.
		Assume the statement is true for $\left|V\right|=k\ge2$.
		For $\left|V\right|=k+1$, using the same notation as in Lemma \ref{chaining for two} we have that $\forall s\in O$, $\forall u,u^\prime\in U$, the ratio between the probability of seeing output $s$, given that the input was $u$ and the probability of seeing output $s$, given that input was $u^\prime$ is:
		\begin{gather*}
		\dfrac{\pr{\chainmech{\mech[A]}{\mech[B]}\left(u\right)=s}}{\pr{\chainmech{\mech[A]}{\mech[B]}\left(u^\prime\right)=s}}=\dfrac{\sigsum[p_{v}\cdot g^{v}_{s}]{v}{V}{}}{\sigsum[q_{v}\cdot g^{v}_{s}]{v}{V}{}}
		\end{gather*}
		We first show that there always exist $a,b\in V$ where $a\ne b$ s.t.\ the ratio mentioned before is upper bounded by:
		\begin{gather*}
		\dfrac{\pr{\chainmech{\mech[A]}{\mech[B]}\left(u\right)=s}}{\pr{\chainmech{\mech[A]}{\mech[B]}\left(u^\prime\right)=s}}\le\dfrac{\sigsum[p_{v}\cdot g^{v}_{s}]{v}{ V\setminus\left\{a,b\right\}}{}+\left(p_{a}+p_{b}\right)\cdot g^{b}_{s}}{\sigsum[q_{v}\cdot g^{v}_{s}]{v}{ V\setminus\left\{a,b\right\}}{}+\left(q_{a}+q_{b}\right)\cdot g^{b}_{s}}
		\end{gather*}
		Then we show that the right hand side is equal to:
		\begin{gather*}
		\frac{\pr{\chainmech{\mech[\hat{A}]}{\mech[\hat{B}]}\left(u\right)=s}}{\pr{\chainmech{\mech[\hat{A}]}{\mech[\hat{B}]}\left(u^\prime\right)=s}}
		\end{gather*}
		For some $\mech[\hat{A}]$ and $\mech[\hat{B}]$, for which the induction hypothesis holds.
		If there exist $a,b\in V$ s.t.\ $g^{a}_{s}=g^{b}_{s}$ then:
		\begin{gather*}
		\frac{\pr{\chainmech{\mech[A]}{\mech[B]}\left(u\right)=s}}{\pr{\chainmech{\mech[A]}{\mech[B]}\left(u^\prime\right)=s}}=\dfrac{\sigsum[p_{v}\cdot g^{v}_{s}]{v}{ V\setminus\left\{a,b\right\}}{}+\left(p_{a}+p_{b}\right)\cdot g^{b}_{s}}{\sigsum[q_{v}\cdot g^{v}_{s}]{v}{ V\setminus\left\{a,b\right\}}{}+\left(q_{a}+q_{b}\right)\cdot g^{b}_{s}}
		\end{gather*}
		Else, we have that $\forall a,b\in V$: $g^{a}_{s}\ne g^{b}_{s}$.
		
		Since $\abs{V}\ge3$ we have that we can choose $a\in V$ s.t.\ $a\notin\left\{\underset{v\in V}{\arg\max}\left\{g^{v}_{s}\right\},\underset{v\in V}{\arg\min}\left\{g^{v}_{s}\right\}\right\}$, i.e.\ $g^{a}_{s}$ is not the maximal, nor the minimal, of all $g^{v}_{s}$. From here, we are going to construct a mechanism $\mech[\hat{B}]$ which is the same as $\mech[B]$, but without the element $a$ in its domain, and a mechanism $\mech[\hat{A}]$ that would mimic $\mech[A]$ except for where $\mech[A]$ would output $a$, $\mech[\hat{A}]$ would output $b$. $b$ will be either $\underset{v\in V}{\arg\max}\left\{g^{v}_{s}\right\}$ or $\underset{v\in V}{\arg\min}\left\{g^{v}_{s}\right\}$ such that the resulting ratio $\frac{\pr{\chainmech{\mech[\hat{A}]}{\mech[\hat{B}]}\left(u\right)=s}}{\pr{\chainmech{\mech[\hat{A}]}{\mech[\hat{B}]}\left(u^\prime\right)=s}}$ will be at least as big as $\frac{\pr{\chainmech{\mech[A]}{\mech[B]}\left(u\right)=s}}{\pr{\chainmech{\mech[A]}{\mech[B]}\left(u^\prime\right)=s}}$. Figure \ref{advanced chaining figure} demonstrates this behavior.
		\begin{figure}[ht]
			\caption{Visualization of the constructions of $\mech[\hat{A}]$ and $\mech[\hat{B}]$.}
			\label{advanced chaining figure}
			\centering
			\begin{tikzpicture}[shorten >=1pt, ->]
			\tikzstyle{vertex}=[circle,fill=black!25,minimum size=17pt,inner sep=0pt]
			\node[vertex] (G-u) at (1,4) {$u$};
			\node[vertex] (G-v1) at (5,7) {$v_1$};
			\node[vertex] (G-vb) at (5,6) {$b$};
			\node[vertex] (G-va) at (5,4) {$a$};
			\node[vertex] (G-v3) at (5,1) {$v_k$};
			\node[vertex] (G-s) at (9,4) {$s$};
			\tikzstyle{vertex}=[circle,minimum size=17pt,inner sep=0pt]
			\node[vertex] (G-ddots) at (5,3) {$\vdots$};
			\node[vertex] (G-spl) at (4,4.3) {$p_{a}$};
			\node[vertex] (G-mecha) at (2.5,1) {$\mech[\hat{A}]$};
			\node[vertex] (G-mechb) at (7.5,1) {$\mech[\hat{B}]$};
			\draw (G-u) -- (G-v1) node[draw=none,fill=none,midway,above] {$p_{v_1}$};
			\draw (G-u) -- (G-vb) node[draw=none,fill=none,midway,below] {$p_{b}$};
			\draw (G-u) to [bend right] (G-vb);
			\draw (G-u) -- (G-v3) node[draw=none,fill=none,midway,below] {$p_{v_k}$};
			\draw (G-v1) -- (G-s) node[draw=none,fill=none,midway,above] {$g^{v_1}_s$};
			\draw (G-vb) -- (G-s) node[draw=none,fill=none,midway,below] {$g^{b}_s$};
			\draw (G-v3) -- (G-s) node[draw=none,fill=none,midway,below] {$g^{v_k}_s$};
			\draw[dashed] (G-u) to (G-va);
			\draw[dashed] (G-va) to (G-s);
			\end{tikzpicture}
		\end{figure}
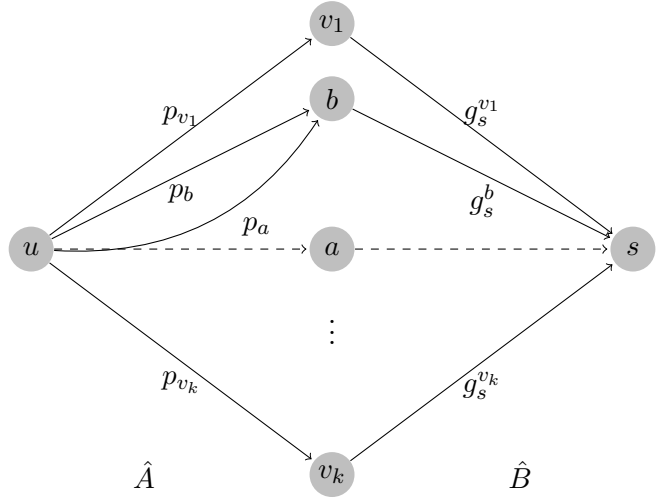
		
		The way we are going to prove this is by examining:
		\begin{equation*}
		\frac{\pr{\chainmech{\mech[A]}{\mech[B]}\left(u\right)=s}}{\pr{\chainmech{\mech[A]}{\mech[B]}\left(u^\prime\right)=s}}=\dfrac{\sigsum[p_{v}\cdot g^{v}_{s}]{v}{V}{}}{\sigsum[q_{v}\cdot g^{v}_{s}]{v}{V}{}}
		\end{equation*}
		This is a ratio of weighted sums, where each $p_{v}$ and $q_{v}$ have the same weight $g^{v}_{s}$. We use the nice property that by increasing the weight $g^{v}_{s}$ the ratio of weighted sums cannot become further from $\frac{p_{v}}{q_{v}}$, and if we were to decrease the weight $g^{v}_{s}$ it cannot become nearer. Facts \ref{fraction sum lemma} and \ref{fraction sub lemma} give us a technical way to leverage these behaviors. All in all, this allows us to exchange $g^{a}_{s}$ with $g^{b}_{s}$ for the right $b$ such that we have not pushed the ratio of the weighted sums to decrease at all.
		
		Case 1:
		\begin{equation*}
		\frac{p_{a}}{q_{a}}\ge\dfrac{\sigsum[p_{v}\cdot g^{v}_{s}]{v}{V}{}}{\sigsum[q_{v}\cdot g^{v}_{s}]{v}{V}{}}
		\end{equation*}
		In this case we wish to not push the ratio further from $\frac{p_{a}}{q_{a}}$. This means that we should choose $b=\underset{v\in V}{\arg\max}\left\{g^{v}_{s}\right\}$. Now we have that:
		$\left(g^{b}_{s}-g^{a}_{s}\right)>0$, and so we can say that $\left(g^{b}_{s}-g^{a}_{s}\right)p_{a}>0$ and $\left(g^{b}_{s}-g^{a}_{s}\right)q_{a}>0$.
		And so:
		\begin{align*}
		\frac{\left(g^{b}_{s}-g^{a}_{s}\right)p_{a}}{\left(g^{b}_{s}-g^{a}_{s}\right)q_{a}} &= \dfrac{p_{a}}{q_{a}}\\
		& \ge\dfrac{\sigsum[p_{v}\cdot g^{v}_{s}]{v}{V}{}}{\sigsum[q_{v}\cdot g^{v}_{s}]{v}{V}{}}
		\end{align*}
		Now, using Fact \ref{fraction sum lemma} (replacing parameter $a$ with $\sigsum[p_{v}\cdot g^{v}_{s}]{v}{V}{}$, parameter $b$ with $\sigsum[q_{v}\cdot g^{v}_{s}]{v}{V}{}$, parameter $c$ with $\left(g^{b}_{s}-g^{a}_{s}\right)p_{a}$ and parameter $d$ with $\left(g^{b}_{s}-g^{a}_{s}\right)q_{a}$) we have that:
		\begin{align*}
		\dfrac{\sigsum[p_{v}\cdot g^{v}_{s}]{v}{V}{}}{\sigsum[q_{v}\cdot g^{v}_{s}]{v}{V}{}} &\le\dfrac{\sigsum[p_{v}\cdot g^{v}_{s}]{v}{V}{}+\left(g^{b}_{s}-g^{a}_{s}\right)p_{a}}{\sigsum[q_{v}\cdot g^{v}_{s}]{v}{V}{}+\left(g^{b}_{s}-g^{a}_{s}\right)q_{a}}\\
		&=\dfrac{\sigsum[p_{v}\cdot g^{v}_{s}]{v}{ V\setminus\left\{a,b\right\}}{}+\left(p_{a}+p_{b}\right)\cdot g^{b}_{s}}{\sigsum[q_{v}\cdot g^{v}_{s}]{v}{ V\setminus\left\{a,b\right\}}{}+\left(q_{a}+q_{b}\right)\cdot g^{b}_{s}}
		\end{align*}
		Case 2:
		\begin{equation*}
		\dfrac{p_{a}}{q_{a}}\le\dfrac{\sigsum[p_{v}\cdot g^{v}_{s}]{v}{V}{}}{\sigsum[q_{v}\cdot g^{v}_{s}]{v}{V}{}}
		\end{equation*}
		In this case we wish to not push the ratio closer to $\frac{p_{a}}{q_{a}}$. This means that we should choose $b=\underset{v\in V}{\arg\min}\left\{g^{v}_{s}\right\}$. And so we have that:
		$\left(g^{a}_{s}-g^{b}_{s}\right)>0$, resulting in $\left(g^{a}_{s}-g^{b}_{s}\right)p_{a}>0$ and $\left(g^{a}_{s}-g^{b}_{s}\right)q_{a}>0$.
		Note also that:
		\begin{align*}
		\sigsum[q_{v}\cdot g^{v}_{s}]{v}{V}{}-\left(g^{a}_{s}-g^{b}_{s}\right)q_{a} &= \sigsum[q_{v}\cdot g^{v}_{s}]{v}{V}{}+\left(g^{b}_{s}-g^{a}_{s}\right)q_{a}\\
		&\sigsum[q_{v}\cdot g^{v}_{s}]{v}{ V\setminus\left\{a,b\right\}}{}+\left(q_{a}+q_{b}\right)\cdot g^{b}_{s}\\
		&>0
		\end{align*}
		We now use Fact \ref{fraction sub lemma} (replacing parameter $a$ with $\sigsum[p_{v}\cdot g^{v}_{s}]{v}{V}{}$, parameter $b$ with $\sigsum[q_{v}\cdot g^{v}_{s}]{v}{V}{}$, parameter $c$ with $\left(g^{a}_{s}-g^{b}_{s}\right)p_{a}$ and parameter $d$ with $\left(g^{a}_{s}-g^{b}_{s}\right)q_{a}$) to get that:
		\begin{align*}
		\frac{\sigsum[p_{v}\cdot g^{v}_{s}]{v}{V}{}}{\sigsum[q_{v}\cdot g^{v}_{s}]{v}{V}{}}&\le\dfrac{\sigsum[p_{v}\cdot g^{v}_{s}]{v}{V}{}-\left(g^{a}_{s}-g^{b}_{s}\right)p_{a}}{\sigsum[q_{v}\cdot g^{v}_{s}]{v}{V}{}-\left(g^{a}_{s}-g^{b}_{s}\right)q_{a}}\\
		&=\frac{\sigsum[p_{v}\cdot g^{v}_{s}]{v}{ V\setminus\left\{a,b\right\}}{}+\left(p_{a}+p_{b}\right)\cdot g^{b}_{s}}{\sigsum[q_{v}\cdot g^{v}_{s}]{v}{ V\setminus\left\{a,b\right\}}{}+\left(q_{a}+q_{b}\right)\cdot g^{b}_{s}}
		\end{align*}
		And so we have proven that there always exist $a,b\in V,a\ne b$ s.t.:
		\begin{gather*}
		\frac{\pr{\chainmech{\mech[A]}{\mech[B]}\left(u\right)=s}}{\pr{\chainmech{\mech[A]}{\mech[B]}\left(u^\prime\right)=s}}\le\dfrac{\sigsum[p_{v}\cdot g^{v}_{s}]{v}{ V\setminus\left\{a,b\right\}}{}+\left(p_{a}+p_{b}\right)\cdot g^{b}_{s}}{\sigsum[q_{v}\cdot g^{v}_{s}]{v}{ V\setminus\left\{a,b\right\}}{}+\left(q_{a}+q_{b}\right)\cdot g^{b}_{s}}
		\end{gather*}
		We now construct two mechanisms $\mech[\hat{A}]$ and $\mech[\hat{B}]$ that hold the induction hypothesis and have:
		\begin{equation*}
		\frac{\pr{\chainmech{\mech[\hat{A}]}{\mech[\hat{B}]}\left(u\right)=s}}{\pr{\chainmech{\mech[\hat{A}]}{\mech[\hat{B}]}\left(u^\prime\right)=s}}=\dfrac{\sigsum[p_{v}\cdot g^{v}_{s}]{v}{ V\setminus\left\{a,b\right\}}{}+\left(p_{a}+p_{b}\right)\cdot g^{b}_{s}}{\sigsum[q_{v}\cdot g^{v}_{s}]{v}{ V\setminus\left\{a,b\right\}}{}+\left(q_{a}+q_{b}\right)\cdot g^{b}_{s}}
		\end{equation*}
		
		The first mechanism is $\mech[\hat{A}]:U\rightarrow V\backslash\left\{a\right\}$ s.t.\ $\forall u\in U$ we have that:
		\[
		\pr{\mech[\hat{A}]\left(u\right)=v} =
		\begin{cases}
		p_{v} & \text{for }v\in V\backslash\left\{a,b\right\}\\
		p_{a}+p_{b} & \text{otherwise}
		\end{cases}
		\]
		And similarly for $q_v$. Notice that for all $u,u^\prime$ we have that $\forall v\in V\backslash\left\{a,b\right\}$ the ratio between the probability of having output $v$ given that the input was $u$ and the probability of having output $v$ given that the input was $u^\prime$ is upper bounded by:
		\begin{align*}
		\dfrac{\pr{\mech[\hat{A}]\left(u\right)=v}}{\pr{\mech[\hat{A}]\left(u^\prime\right)=v}} & = \dfrac{p_{v}}{q_{v}}\\
		& \le e^{\varepsilon_1}
		\end{align*}
		And
		\begin{align*}
		\dfrac{\pr{\mech[\hat{A}]\left(u\right)=b}}{\pr{\mech[\hat{A}]\left(u^\prime\right)=b}} & = \dfrac{p_{a}+p_{b}}{q_{a}+q_{a}}\\
		& \le \dfrac{e^{\varepsilon_1}q_{a}+e^{\varepsilon_1}q_{b}}{q_{a}+q_{a}}\\
		& = e^{\varepsilon_1}
		\end{align*}
		And so $\mech[\hat{A}]$ is \ldp[$\varepsilon_1$]. Notice that this probability distribution is well defined, since $\forall u\in U$ the following equality holds:
		\begin{align*}
		\sigsum[\pr{\mech[\hat{A}]\left(u\right)=v}]{v}{V\backslash\left\{a\right\}}{} & = \sigsum[\pr{\mech[\hat{A}]\left(u\right)=v}+\pr{\mech[\hat{A}]\left(u\right)=b}]{v}{V\backslash\left\{a,b\right\}}{}\\
		& = \sigsum[p_{v}]{v}{ V\setminus\left\{a,b\right\}}{}+p_{a}+p_{b}\\
		& = 1
		\end{align*}
		We define $\mech[\hat{B}]$ as $\mech[\hat{B}]:V\backslash\left\{a\right\}\rightarrow O$ s.t.\ $\forall s\in O$, $\forall v\in V\backslash\left\{a\right\}$ we have that $\pr{\mech[\hat{B}]\left(v\right)=s}=g^{v}_{s}$.
		Notice that this probability distribution is well defined, since $\forall v\in V\backslash\left\{a\right\}$ the following equation holds:
		\begin{gather*}
		\sigsum[\pr{\mech[\hat{B}]\left(v\right)=s}]{s}{O}{}=\sigsum[g^{v}_{s}]{s}{O}{}=1
		\end{gather*}
		Notice that for all $v,v^\prime$ we have that $\forall s\in O$:
		\begin{align*}
		\dfrac{\pr{\mech[\hat{B}]\left(v\right)=s}}{\pr{\mech[\hat{B}]\left(v^\prime\right)=s}} & = \dfrac{g^{v}_{s}}{g^{v^\prime}_{s}}\\
		& \le e^{\varepsilon_2}
		\end{align*}
		And so $\mech[\hat{B}]$ is \ldp[$\varepsilon_2$].
		
		We have that $\forall u,u^\prime\in U$, $s\in O$, using the chaining of mechanisms $\mech[\hat{A}]$ and $\mech[\hat{B}]$, the ratio between the probability of seeing output $s$ given that the input was $u$ and the probability of seeing output $s$ given that the input was $u^\prime$ is:
		\begin{align*}
		\dfrac{\pr{\chainmech{\mech[\hat{A}]}{\mech[\hat{B}]}\left(u\right)=s}}{\pr{\chainmech{\mech[\hat{A}]}{\mech[\hat{B}]}\left(u^\prime\right)=s}} & = \dfrac{\sigsum[\pr{\mech[\hat{A}]\left(u\right)=v}\cdot\pr{\mech[\hat{B}]\left(v\right)=s}]{v}{V\backslash\left\{a\right\}}{}}{\sigsum[\pr{\mech[\hat{A}]\left(u^\prime\right)=v}\cdot\pr{\mech[\hat{B}]\left(v\right)=s}]{v}{V\backslash\left\{a\right\}}{}}\\
		& = \dfrac{\sigsum[p_{v}\cdot g^{v}_{s}]{v}{ V\setminus\left\{a,b\right\}}{}+\left(p_{a}+p_{b}\right)\cdot g^{b}_{s}}{\sigsum[q_{v}\cdot g^{v}_{s}]{v}{ V\setminus\left\{a,b\right\}}{}+\left(q_{a}+q_{b}\right)\cdot g^{b}_{s}}
		\end{align*}
		But since $\left|V\backslash\left\{a\right\}\right|=k$ we can use the induction hypothesis to say that:
		\begin{gather*}
		\dfrac{\pr{\chainmech{\mech[\hat{A}]}{\mech[\hat{B}]}\left(u\right)=s}}{\pr{\chainmech{\mech[\hat{A}]}{\mech[\hat{B}]}\left(u^\prime\right)=s}}\le\dfrac{e^{\varepsilon_1+\varepsilon_2}+1}{e^{\varepsilon_1}+e^{\varepsilon_2}}
		\end{gather*}
		And as we have proven before, $\exists a,b\in V$ such that:
		\begin{align*}
		\dfrac{\pr{\chainmech{\mech[A]}{\mech[B]}\left(u\right)=s}}{\pr{\chainmech{\mech[A]}{\mech[B]}\left(u^\prime\right)=s}} & \le\dfrac{\sigsum[p_{v}\cdot g^{v}_{s}]{v}{ V\setminus\left\{a,b\right\}}{}+\left(p_{a}+p_{b}\right)\cdot g^{b}_{s}}{\sigsum[q_{v}\cdot g^{v}_{s}]{v}{ V\setminus\left\{a,b\right\}}{}+\left(q_{a}+q_{b}\right)\cdot g^{b}_{s}}\\
		& = \dfrac{\pr{\chainmech{\mech[\hat{A}]}{\mech[\hat{B}]}\left(u\right)=s}}{\pr{\chainmech{\mech[\hat{A}]}{\mech[\hat{B}]}\left(u^\prime\right)=s}}\\
		& \le\dfrac{e^{\varepsilon_1+\varepsilon_2}+1}{e^{\varepsilon_1}+e^{\varepsilon_2}}
		\end{align*}
		Which means that $\chainmech{\mech[A]}{\mech[B]}$ is \ldp[$\ln\frac{e^{\varepsilon_1+\varepsilon_2}+1}{e^{\varepsilon_1}+e^{\varepsilon_2}}$]
	\end{proof}
	\subsection{Proofs of Facts \ref{fraction sum lemma} and \ref{fraction sub lemma}}\label{facts proofs appendix}
	For completeness we now present the proof of the two facts used in section \ref{mechanism chaining section}.
	\begin{proof}[Proof of \ref{fraction sum lemma}]
		\begin{align*}
		\dfrac{a}{b}\le\dfrac{c}{d} & \Leftrightarrow ad\le cb\\
		& \Leftrightarrow ad+ab\le cb+ab\\
		& \Leftrightarrow a\left(b+d\right)\le b\left(a+c\right)\\
		& \Leftrightarrow \dfrac{a}{b}\le\dfrac{a+c}{b+d}
		\end{align*}
	\end{proof}
	\begin{proof}[Proof of \ref{fraction sub lemma}]
		\begin{align*}
		\dfrac{a}{b}\ge\dfrac{c}{d} & \Leftrightarrow ad\ge cb\\
		& \Leftrightarrow ad-ab\ge cb-ab\\
		& \Leftrightarrow a\left(d-b\right)\le b\left(c-a\right)\\
		& \Leftrightarrow \dfrac{a}{b}\le\dfrac{c-a}{d-b}=\dfrac{a-c}{b-d}
		\end{align*}
	\end{proof}

\section{Details of the Analyses of Section \ref{Untrackable in RAPPOR section}}\label{RAPPOR appendix}

\subsection{RAPPOR Detailed Operation}
RAPPOR is characterized by a few parameters:
	\begin{itemize}
		\item The so called Bloom filter array size, $s$.
		\item The number of hash functions in the Bloom filter, $h$.
		\item The probability of using a random bit instead of the bit from the Bloom filter in the permanent randomization, $f$.
		\item $p$ and $q$, the probabilities of outputting a bit 1 in the output given that the permanent randomized Bloom filter's bit was 0 and 1 respectively.
	\end{itemize}
	
Let $U$ be the set of all possible user values. Given $h$ hash functions $\left\{H_i\right\}_{i\in\left[h\right]}$ and a Bloom filter size $s$, $\forall v\in U$ we define that:
	\begin{equation*}
	\forall j\in\left[s\right]:\ B_j=\begin{cases}
	1 & \exists i\in\left[h\right]\text{ s.t.\ } H_i\left(v\right)=j\\
	0 & \text{otherwise}
	\end{cases}
	\end{equation*}
	The permanent randomized Bloom filter $B^\prime$ is constructed by the values of $B$ such that:
	\begin{equation*}
	\forall j\in\left[s\right]:\ B^\prime_j=\begin{cases}
	1 & \text{w.p. } \frac{1}{2}f\\
	0 & \text{w.p. } \frac{1}{2}f\\
	B_i & \text{w.p. } 1-f
	\end{cases}
	\end{equation*}
	And finally the report is constructed with probabilities depending on the values of $B^\prime$ such that:
	\begin{equation*}
	\forall j\in\left[s\right]:\ \pr{S_j=1}=\begin{cases}
	p & B^\prime_j=0\\
	q & B^\prime_j=1
	\end{cases}
	\end{equation*}
	Combining all of this we get that:
	\begin{align*}
	\pr{S_j=1|B_j=1}
	& = p\cdot\pr{B^\prime_j=0|B_j=1}+q\cdot\pr{B^\prime_j=1|B_j=1}\\
	& = \left(1-\frac{1}{2}f\right)q+\frac{1}{2}fp
	\end{align*}
	and similarly we get that:
	\begin{align*}
	\pr{S_j=1|B_j=0} & = p\cdot\pr{B^\prime_j=0|B_j=0}+q\cdot\pr{B^\prime_j=1|B_j=0}\\
	& = \frac{1}{2}fq+\left(1-\frac{1}{2}f\right)p
	\end{align*}

\subsection{Worst Case Untrackable Parameter Analysis of RAPPOR}\label{worst case untracable parameter analysis appendix}
As a warm up, we use both composition theorems to get an upper bound on the pure and approximate untrackable parameters. Using theorem \ref{untrackable bound for permanent state mechanisms}, we see that a 1-bit instance of RAPPOR, i.e.\ one that has $s=1$, is $\gamma_1$-untrackable for $k$ reports, where:
\begin{align*}
	\gamma_1=\floor*{\frac{k}{2}}\max\left\{\ln\abs{\frac{p}{q}},\ln\abs{\frac{1-p}{1-q}}\right\}
\end{align*}
An $s$-bit instance of RAPPOR can be considered as the an $s$-fold composition of the 1-bit instance. Using Basic Composition, we see that $s$-bit RAPPOR is $\gamma_s$-untrackable for $k$ reports, where:
\begin{align*}
	\gamma_s=s\cdot \floor*{\frac{k}{2}}\max\left\{\ln\abs{\frac{p}{q}},\ln\abs{\frac{1-p}{1-q}}\right\}
\end{align*}
Plugging in the parameters used in their paper, $s=128$, $p=0.5$, $q=0.75$ we get that their implementation is $88\cdot\floor*{\frac{k}{2}}$-untrackable for $k$ reports.

Alternatively, using Differential Privacy's advanced composition, we see that $s$-bit RAPPOR is $\left(\gamma^\prime_s, \delta^\prime\right)$-untrackable for $k$ reports, where:
\begin{align*}
\gamma^\prime_s=\sqrt{2s\ln\left(1/\delta^\prime\right)}\cdot\gamma_1 + s\cdot\gamma_1\left(e^{\gamma_1}-1\right)
\end{align*}
For the parameters of the paper, the $e^{\gamma_1}-1$ component of the advanced bound explodes in $k$, making resulting in a worse bound than the one from Basic Composition.

But how does this compare to the best bound we can generate for RAPPOR? We begin by showing a tight bound on the worst case untrackable parameter of RAPPOR for parameters used in the original paper. These tight bounds were computed by going over all possible values of a small set of parameters that are sufficient to find the instance with the worst trackability. The first parameter we enumerate over is the size of each report set, i.e.\ the size of the report set associated with the first user. Fixing a size for each report set, we examine each position in the original Bloom filter. We check what is the amount of $1$'s in that position in the reports associated with the first user and the second user seeking when the resulting untrackable parameter is maximized.

The calculation can be simplified since all the positions in the Bloom filter where the underlying bit is $1$ behave identically. The same can be said for positions where the underlying bit is $0$. This allows us to only consider one position of each. We calculate the contribution to the untrackable parameter made by each position in the Bloom filter separately to easily find a report which maximizes the ratio in the untrackable definition. From that report, we draw a tight untrackable bound. This allows us to find the tight bound quickly, as we only need to enumerate over five parameters, each having at most $k$ different options, where $k$ is the number of reports.
	
To better understand the procedure above, let $C_{T,J}$ be the trackability of a report set $T$ with $J$ being the indices associated with one user. Essentially, $C_{T,J}$ is the lower bound on the untrackable parameter that $T$ and $J$ induce for $\mech[R]$:
	\begin{equation*}
	C_{T,J}=\left\lvert\ln\dfrac{\pr{\mech[G]^{\mech[R]}_n\left(u\right)=T}}{\pr{\mech[G]^{\mech[R]}_{\left|J\right|}\left(u\right)=T_J}\cdot
\pr{\mech[G]^{\mech[R]}_{n-\left|J\right|}\left(u\right)=T_{J^\complement}}}\right\rvert
	\end{equation*}
	Let $i$ be the number of reports the set associated with the first user. Let $x$ and $y$ be the number of $1$'s in the first and second sets of reports. We denote $q^\star_{a,b}\coloneqq q^b\left(1-q\right)^{a-b}$, and similarly $p^\star_{a,b}$. We define $C_1\left(i,x,y\right)$ as the contribution to trackability made by positions where the underlying Bloom filter is $1$. We define $C_0$ similarly, only for position where the underlying Bloom filter is $0$:
	\begin{align*}
	C_1\left(i,x,y\right) & =\frac{\left(\frac{1}{2}fq^\star_{i,x}+\left(1-\frac{1}{2}f\right)p^\star_{i,x}\right)\left(\frac{1}{2}fq^\star_{k-i,y}+\left(1-\frac{1}{2}f\right)p^\star_{k-i,y}\right)}{\frac{1}{2}fq^\star_{k,x+y}+\left(1-\frac{1}{2}f\right)p^\star_{k,x+y}}\\
	C_0\left(i,x,y\right) & =\frac{\left(\left(1-\frac{1}{2}\right)fq^\star_{i,x}+\frac{1}{2}fp^\star_{i,x}\right)\left(\left(1-\frac{1}{2}\right)fq^\star_{k-i,y}+\frac{1}{2}fp^\star_{k-i,y}\right)}{\left(1-\frac{1}{2}\right)fq^\star_{k,x+y}+\frac{1}{2}fp^\star_{k,x+y}}
	\end{align*}
	
	By examining the trackability of reports in RAPPOR, one can see that:
	\begin{equation*}
	C_{T,J}=\left\lvert\ln\pimul{\ell}{\left[s\right]}{}C_{b_\ell}\left(i,x_\ell,y_\ell\right)\right\rvert
	\end{equation*}
	Where $i$ is the size of the reports associated with the first user, i.e.\ $\abs{J}$, $x_\ell$ is the number of ones in the $\ell$th position of all the reports in $T_J$, $y_\ell$ is the number of ones in the $\ell$th position of all the reports in $T_{J^\complement}$ and $b_\ell$ is the value of the underlying Bloom filter at position $\ell$. This means that we can calculate the trackability of report sets by considering only their relevant $i$, $x$ and $y$. Therefore in order to find a tight untrackable bound, we only need to find the values of $i$, $x$ and $y$ that maximize trackability. An immediate result of this is that the maximum trackability bound, $\gamma_k$ is:
	\begin{equation*}
	\gamma_k=\underset{i\in\left[k\right]}{\max}\underset{\substack{v,v^\prime\in\left[i\right]\\w,w^\prime\in\left[k-i\right]}}{\max}\left\lvert\ln C_{0}\left(i,v,w\right)^{s-h}C_{1}\left(i,v^\prime,w^\prime\right)^{s-h}\right\rvert
	\end{equation*}
	This can be easily computed by a simple program.
	
	In the original paper introducing RAPPOR~\cite{Erlingsson2014a}, in Section 5 - ``Experiments and Evaluation", the mechanism was deployed on Chrome users to collect Windows process names. The parameters used were $s=128$, $h=2$, $f=0.5$, $p=0.5$ and $q=0.75$. We calculated the tight trackability bound for RAPPOR with those parameters. Figure \ref{rappor trackability figure} shows us the tight untrackable parameter  for up to 15 reports. After as few as 6 reports, RAPPOR is already no better than $50$-untrackable, which should be considered high. At 15 reports, the optimal untrackable parameter is almost $300$. To conclude, RAPPOR behaves poorly in the framework of worst case untrackable parameter.
	\begin{figure}[ht]
		\caption{Trackability of RAPPOR for given number of reports}
		\label{rappor trackability figure}
		\centering
		\begin{tikzpicture}[scale=1.5]
		\begin{axis}[
		xlabel={Number of Reports},
		ylabel={Tight untrackble parameter},
		xmin=0, xmax=18,
		ymin=0, ymax=300,
		xtick={0,3,6,9,12,15},
		ytick={0,50,100,150,200,250,300},
		legend pos=north west,
		ymajorgrids=true,
		grid style=dashed,
		]
		
		\addplot[
		color=blue,
		mark=square,
		]
		coordinates {
			(2, 7.714008349934885) (3, 13.810059121593174) (4, 23.160702698499367) (5, 38.32143182990363) (6, 55.01599092734287) (7, 72.14651052510693) (8, 94.90089464514297) (9, 119.36532468090775) (10, 142.67375011125492) (11, 170.17748308615091) (12, 199.4380884218118) (13, 225.9396144617209) (14, 257.327381941984) (15, 288.6453266158751)
		};
		
		\end{axis}
		\end{tikzpicture}
	\end{figure}
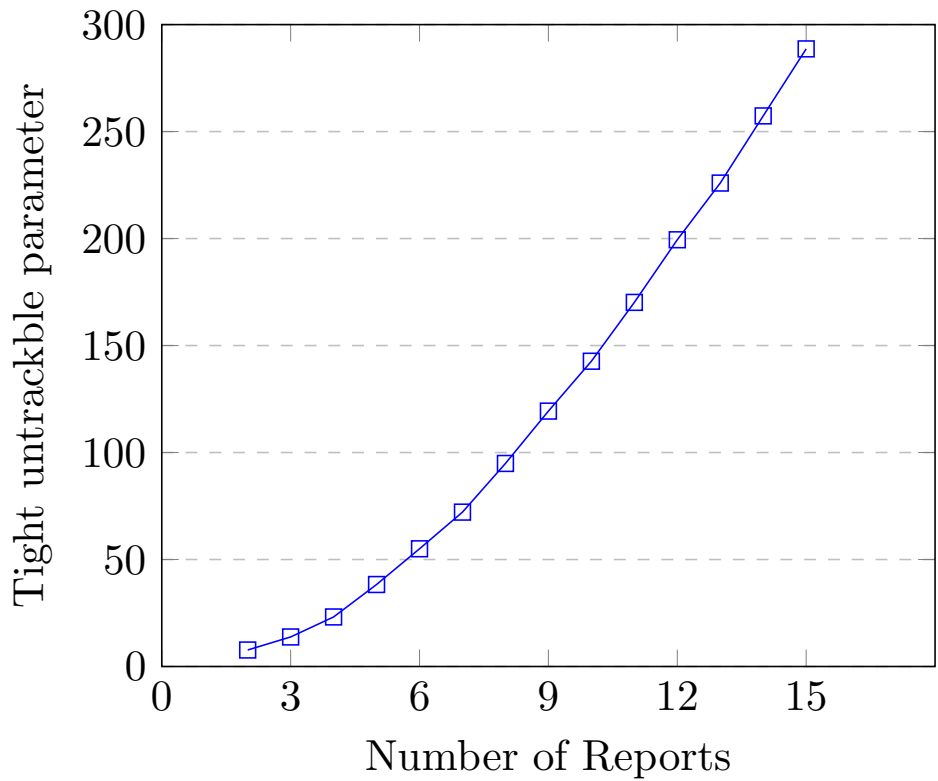
\subsection{Estimated Median and 90th Percentile Of RAPPOR Figure}
\begin{figure}[H]
	\caption{Growth of the estimated median and $90^{th}$ percentile of the trackability random variable of RAPPOR as a function of the number of reports}
	\label{rappor empirical trackability figure}
	\centering
	\begin{tikzpicture}
	\begin{axis}[
	width=13cm,
	height=19cm,
	xlabel={Number of Reports},
	ylabel={Trackability value},
	xmin=0, xmax=16,
	ymin=0, ymax=15,
	xtick={0,3,6,9,12,15},
	ytick={0,3,6,9,12,15},
	legend pos=north west,
	legend style={font=\small},
	ymajorgrids=true,
	grid style=dashed,
	legend entries={median,
		$90^{th}$ percentile
	}
	]
	
	\addplot[
	color=blue,
	mark=.,
	error bars/.cd, y dir=both, y explicit,
	]
	coordinates {(0,0)};
	\addplot[
	color=red,
	mark=.,
	error bars/.cd, y dir=both, y explicit,
	]
	coordinates {(0,0)};
	\addplot[
	only marks,
	color=blue,
	mark=.,
	error bars/.cd, y dir=both, y explicit,
	]
	coordinates {
		(2, 0.3193007504358434) +=(0, 0.010745153817822484) -=(0, 0.01231062248211856)
		(3, 0.47086424401783233) +=(0, 0.016584952895073002) -=(0, 0.01615885158821584)
		(4, 1.1042603767334072) +=(0, 0.017044448400554302) -=(0, 0.019366418493746096)
		(5, 1.4510770519450489) +=(0, 0.02205601714894101) -=(0, 0.025106895839030585)
		(6, 2.282780780058829) +=(0, 0.031770055779816175) -=(0, 0.024677574665986413)
		(7, 2.6563933536170907) +=(0, 0.03484729391232122) -=(0, 0.03821818315088876)
		(8, 3.689720500045496) +=(0, 0.04196878605807797) -=(0, 0.04061226702435761)
		(9, 4.183104590805215) +=(0, 0.046150864739274766) -=(0, 0.035151277354316335)
		(10, 5.340969664565364) +=(0, 0.039087858465450154) -=(0, 0.045457911429366504)
		(11, 5.9633857138856) +=(0, 0.04220061534783781) -=(0, 0.04622627863136586)
		(12, 7.199814161262111) +=(0, 0.04929906000029405) -=(0, 0.05668790791492029)
		(13, 7.892952626516717) +=(0, 0.06489605238448348) -=(0, 0.06258414611852459)
		(14, 9.343222828293165) +=(0, 0.05724005661886622) -=(0, 0.06333921233681394)
		(15, 10.102327281472753) +=(0, 0.06510795177882756) -=(0, 0.058399091029286865)
	};
	
	\addplot[
	only marks,
	color=red,
	mark=.,
	error bars/.cd, y dir=both, y explicit,
	]
	coordinates {
		(2, 0.9215732409547286) +=(0, 0.02056780112150136) -=(0, 0.018478454618929163)
		(3, 1.3532198978768974) +=(0, 0.02564127609497291) -=(0, 0.025174339945465363)
		(4, 2.1711208501041597) +=(0, 0.03219018680266572) -=(0, 0.03264515753460273)
		(5, 2.7427519115308883) +=(0, 0.04371164666332561) -=(0, 0.04731781590845685)
		(6, 3.8056065653744326) +=(0, 0.05157086791962229) -=(0, 0.04216381499372801)
		(7, 4.477354167400904) +=(0, 0.048495005234940436) -=(0, 0.054423526335313)
		(8, 5.722717252779148) +=(0, 0.065664535193946) -=(0, 0.06231165040958331)
		(9, 6.502512644406693) +=(0, 0.06835777324124592) -=(0, 0.06874436535747463)
		(10, 7.776868122958263) +=(0, 0.07342991566542878) -=(0, 0.08185247143353536)
		(11, 8.649237985725222) +=(0, 0.08169584824486265) -=(0, 0.07814500831671012)
		(12, 10.176025349266638) +=(0, 0.09482525994121715) -=(0, 0.09222049342997707)
		(13, 11.125639690527123) +=(0, 0.09513027662183049) -=(0, 0.09733980281157528)
		(14, 12.740815526283313) +=(0, 0.10184534432210057) -=(0, 0.10794585483290575)
		(15, 13.751703040255734) +=(0, 0.10876613815503333) -=(0, 0.1120456598587225)
	};
	\end{axis}
	\end{tikzpicture}
	\end{figure}
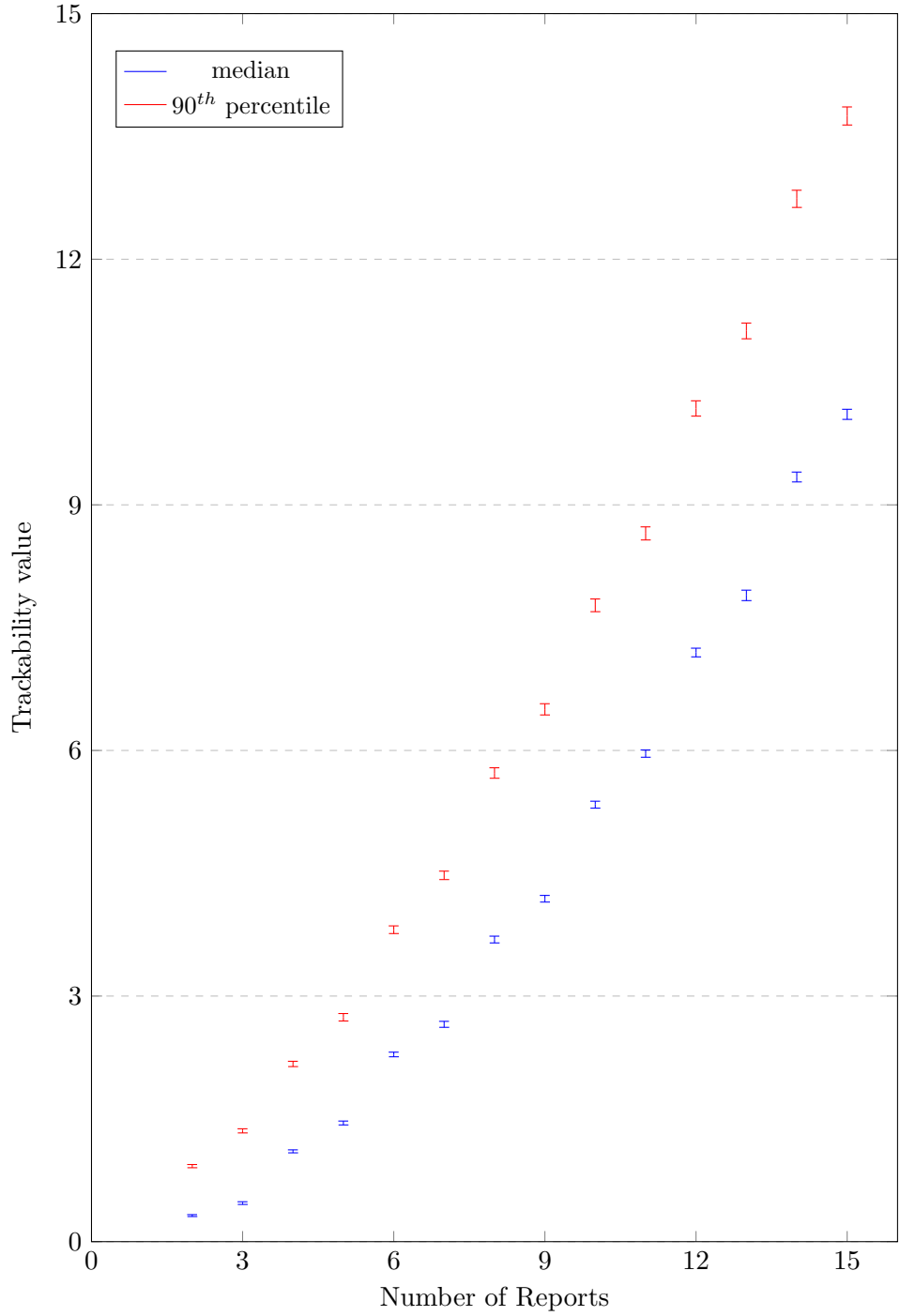
\subsection{Details of the Estimation of the Trackability Random Variable}\label{Details of Approximation of RAPPOR}
We show how to evaluate the median of the trackability random variable (the $50^{th}$ percentile) and the $90^{th}$ percentile of it. We used a $95\%$ confidence interval for the results we present in Figure \ref{rappor empirical trackability figure}.

Let $\tau$ be the random variable described above. This is the random variable whose percentiles we are interested in finding. We generated the vectors $\vec{T}$ and $\vec{T}^\prime$ according to their respective distributions. We calculate the value $\tau$ from $\vec{T}$ and $\vec{T}^\prime$ and use $\tau$ as the sample taken at each iteration. For convenience, we mark $\nu_{i}$ as the $i^{th}$ smallest sample. We generated $10000$ such samples, i.e.\ values of $\tau$, and picked $\nu_{5000}$, which is the median, namely the $50^{th}$ percentile. For the median's upper bound of the confidence interval, we picked the sample $\nu_{5100}$. This sample has the property that $51\%$ of all samples are smaller then it. Similarly, for the lower bound, we chose $\nu_{4900}$ for which $49\%$ of all samples are smaller. These bounds match a $95\%$ confidence interval for the median's estimation. We also picked $\nu_{9000}$, which is the $90^{th}$ percentile. For the $90^{th}$ percentile's upper bound of the confidence interval, we picked the sample $\nu_{9060}$. This sample has the property that $90.6\%$ of all samples are smaller then it. Similarly, for the lower bound, we chose $\nu_{8940}$ for which $89.4\%$ of all samples are smaller. These bounds match a $95\%$ confidence interval for the $90^{th}$ percentile's estimation.

Our selection of the upper and lower bounds of the confidence intervals is based on the fact that the probability that the $q^{th}$ percentile is between the $\ell^{th}$ smallest sample and the $v^{th}$ smallest sample out of $n$ samples is at least:
\begin{equation}\label{confidence equation}
B\left(v-1,n,\frac{q}{100}\right)-B\left(\ell-1,n,\frac{q}{100}\right)
\end{equation}
Where $B\left(k,n,p\right)$ is the probability of having at most $k$ $1$'s out of $n$ bits drawn from $\ber[p]$, i.e.\ the cumulative density function of the Binomial distribution. This can be seen in section 5.2.1 of the book "Statistical Intervals: a guide for Practitioners and Researchers", by Luis A. Escobar, Gerald J. Hahn and William Q. Meeker \cite{StatisticalIntervals}. To see the logic behind this calculation, first notice that:
\begin{enumerate}
	\item For the real $q^{th}$ percentile to be at most at the $v^{th}$ position in the sorted samples, less than $v$ of the samples need to be smaller.
	\item For the $q^{th}$ percentile to be at most at the $\ell^{th}$ position in the sorted samples, less than $\ell$ of the samples need to be smaller.
	\item Each sample is smaller than the $q^{th}$ percentile with probability $\frac{q}{100}$.
\end{enumerate}
This means that the probability of the first happening is $B\left(v-1,n,\frac{q}{100}\right)$ and the probability of the second happening is $B\left(\ell-1,n,\frac{q}{100}\right)$. In total, the probability that the $q^{th}$ percentile is between the $\ell^{th}$ smallest sample and the $v^{th}$ smallest sample is exactly the subtraction of the second probability, $B\left(\ell-1,n,\frac{q}{100}\right)$, from the first $B\left(v-1,n,\frac{q}{100}\right)$, resulting in Equation (\ref{confidence equation}).

For $n=10000$, plugging in $q=50$, $\ell=4900$, $v=5100$ gives the confidence level:
\begin{align*}
\pr{B\left(5099,10000,0.5\right)}-\pr{B\left(4899,10000,0.5\right)}
& = 0.95449433663\\
&\ge 0.95\\
\end{align*}
Therefore, with probability at least $0.95$, the median is between the $4900^{th}$ smallest sample and the $5100^{th}$ smallest sample.
Plugging in $q=90$, $\ell=8940$, $v=9060$ gives the confidence level:
\begin{align*}
\pr{B\left(9059,10000,0.9\right)}-\pr{B\left(8939,10000,0.9\right)}
& = 0.9545103468\\
&\ge 0.95
\end{align*}
Therefore, with probability at least $0.95$, the $90^{th}$ percentile is between the $9060^{th}$ smallest sample and the $8940^{th}$ smallest sample.

Algorithm \ref{emprirical rappor algorithm} is a pseudo code of the calculations we have performed to generate the results in Figure \ref{rappor empirical trackability figure}.\\
\begin{algorithm}[H]
	\DontPrintSemicolon
	\KwIn{RAPPOR parameters $s\in\mathbb{N}^{+}$, $h\in\left[s\right]$ and $f,p,q\in\left[0,1\right]$}
	\KwResult{Trackability random variable estimated median, $V_{50}$, and 90 percentile, $V_{90}$ and their respective $95\%$ upper confidence bound, $HighConf_{50}$ and $HighConf_{90}$ and their repsective lower confidence bound, $LowConf_{50}$, $LowConf_{90}$}
	$NSAMPS\gets 10000$\hspace{30ex}// Number of samples\\
	$Samples\gets[]$\\
	\For{$m\coloneqq1$ to $NSAMPS$} {
		$MaxSamples\gets0$\\
		\For{$i\coloneqq1$ to $\floor{\frac{k}{2}}$}{
			Choose random value $u\in U$\\
			Generate reports $T_1$ by executing $\mech[G]^{\mech[R]}_{i}\left(u\right)$\\
b			Generate reports $T_2$ by executing $\mech[G]^{\mech[R]}_{k-i}\left(u\right)$\\
			$T\gets T_1\cup T_2$\\
			$J\gets\left[i\right]$\\
			$MaxSamples\gets\max\left\{MaxSamples,C_{T,J}\right\}$
		}
		$SamplesOneUser\gets[]$\\
		\For{$i\coloneqq1$ to $\floor{\frac{k}{2}}$}{
			Choose random value $u\in U$\\
			Generate report $T$ by executing $\mech[G]^{\mech[R]}_{k}\left(u\right)$\\
			$J\gets\left[i\right]$\\
			$MaxSamples\gets\max\left\{MaxSamples,C_{T,J}\right\}$
		}
		Append $MaxSample$ to $Samples$
	}
	Sort $Samples$\\
	$V_{50}\gets Samples[0.5\cdot NSAMPS]$\\
	$LowConf_{50}\gets Samples[0.49\cdot NSAMPS]$\\
	$HighConf_{50}\gets Samples[0.51\cdot NSAMPS]$\\
	$V_{90}\gets Samples[0.9\cdot NSAMPS]$\\
	$LowConf_{90}\gets Samples[0.894\cdot NSAMPS]$\\
	$HighConf_{90}\gets Samples[0.906\cdot NSAMPS]$\\
	\caption{Estimation of the median and the $90^{th}$ percentile of RAPPOR's trackability for $k$ reports}
	\label{emprirical rappor algorithm}
\end{algorithm}
	\section{Privacy, Accuracy and Trackability of The Bitwise Mechanism}\label{privacy accuracy and tracability of bitwise appendix}
	\subsection{Privacy}
	Calculating the privacy of this mechanism is straightforward. Theorem \ref{advanced chaining} tells us that a single report is $\ln\frac{e^{\varepsilon_1+\varepsilon_2}+1}{e^{\varepsilon_1}+e^{\varepsilon_2}}$-DP. The best an adversary can do from a set of reports that originated from one user is to identify that user's state. The Construct User's Permanent State stage assures us that the state is $\varepsilon_1$-DP in the input, resulting in the mechanism being $\varepsilon_1$-EDP.
	
	\subsection{Accuracy}
	Let the group of all users who have value $0$ be $I_0$. A report $r$ generated by user $i\in I_0$ is drawn from a Bernoulli distribution: $r\sim\ber[\dfrac{e^{\varepsilon_1}+e^{\varepsilon_2}}{\left(e^{\varepsilon_1}+1\right)\left(e^{\varepsilon_2}+1\right)}]$. A report $r$ generated by user $i\notin I_0$ is drawn from a Bernoulli distribution as: $r\sim\ber[\dfrac{e^{\varepsilon_1+\varepsilon_2}+1}{\left(e^{\varepsilon_1}+1\right)\left(e^{\varepsilon_2}+1\right)}]$.
	Let $p_0$ and $p_1$ be the fraction of the users who have value $0$ and $1$ respectively.
	\begin{theorem}\label{bitwise everlasting privacy accuracy}
		Let $\beta\in\left[0,1\right]$. With probability $1-\beta$ over the randomness of all users, Bitwise Everlasting Privacy outputs $\tilde{p}_0$ such that,
		\begin{equation*}
		\abs{\tilde{p}_0-p_0}\le \dfrac{\left(e^{\varepsilon_1}+1\right)\left(e^{\varepsilon_2}+1\right)}{\left(e^{\varepsilon_1}-1\right)\left(e^{\varepsilon_2}-1\right)}\sqrt{\dfrac{2\ln\left(2/\beta\right)}{n}}
		\end{equation*}
	\end{theorem}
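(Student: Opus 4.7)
The plan is to exploit the fact that the estimator $\tilde p_0$ is an \emph{affine} function of the empirical mean of the reports, reducing the accuracy question to a standard Chernoff-style concentration inequality for a sum of independent Bernoulli variables.

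First I would verify that the estimator is unbiased. Let $\bar r \coloneqq \frac{1}{n}\sum_{i\in[n]} r_i$ be the empirical frequency of $1$'s in the reports. A user with private bit $b$ outputs a Bernoulli report whose parameter is one of two values---namely $\frac{e^{\varepsilon_1}+e^{\varepsilon_2}}{(e^{\varepsilon_1}+1)(e^{\varepsilon_2}+1)}$ if $b=0$ and $\frac{e^{\varepsilon_1+\varepsilon_2}+1}{(e^{\varepsilon_1}+1)(e^{\varepsilon_2}+1)}$ if $b=1$---and reports of distinct users are independent. Using the identity $(e^{\varepsilon_1+\varepsilon_2}+1)-(e^{\varepsilon_1}+e^{\varepsilon_2})=(e^{\varepsilon_1}-1)(e^{\varepsilon_2}-1)$, a short calculation of $\mathbb{E}[\bar r] = p_0\cdot \pr{r_i = 1 \mid b_i=0} + (1-p_0)\cdot\pr{r_i=1\mid b_i=1}$ gives
\[
(e^{\varepsilon_1}+1)(e^{\varepsilon_2}+1)\,\mathbb{E}[\bar r] \;=\; (e^{\varepsilon_1+\varepsilon_2}+1) \;-\; p_0\,(e^{\varepsilon_1}-1)(e^{\varepsilon_2}-1),
\]
which, solved for $p_0$, exactly matches the formula defining $\tilde p_0$ (once one reads $\sum r_i$ as the empirical mean $\bar r$). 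Hence $\mathbb{E}[\tilde p_0]=p_0$.

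Next, since $\tilde p_0$ is affine in $\bar r$ with slope $-\frac{(e^{\varepsilon_1}+1)(e^{\varepsilon_2}+1)}{(e^{\varepsilon_1}-1)(e^{\varepsilon_2}-1)}$, the deviation of interest collapses to
\[
\bigl|\tilde p_0 - p_0\bigr| \;=\; \frac{(e^{\varepsilon_1}+1)(e^{\varepsilon_2}+1)}{(e^{\varepsilon_1}-1)(e^{\varepsilon_2}-1)}\,\bigl|\bar r - \mathbb{E}[\bar r]\bigr|.
\]
The variables $r_1,\ldots,r_n$ are independent Bernoullis (with possibly different parameters depending on each user's private bit), so the additive Chernoff bound of Theorem~\ref{Chernoff Bound} applies directly to $\bar r$. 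Choosing the deviation parameter $a$ so that $2 e^{-2 n a^2}\le \beta$---that is, $a$ of order $\sqrt{\ln(2/\beta)/n}$---gives the concentration $|\bar r - \mathbb{E}[\bar r]|\le a$ with probability at least $1-\beta$, and multiplying by the scaling factor yields the stated bound.

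There is no genuine obstacle here; the argument is essentially ``unbiased affine estimator plus Hoeffding.'' The only content is (i) the algebraic rearrangement that both certifies unbiasedness and isolates the scaling factor $\tfrac{(e^{\varepsilon_1}+1)(e^{\varepsilon_2}+1)}{(e^{\varepsilon_1}-1)(e^{\varepsilon_2}-1)}$ (which comes out of the identity above essentially for free), and (ii) being careful that Theorem~\ref{Chernoff Bound} is stated for independent but not necessarily identically distributed Bernoullis, which is what we need since the per-user report parameters differ depending on each user's private bit.
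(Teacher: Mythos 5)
Your proposal is correct and follows essentially the same route as the paper's proof: identify the two Bernoulli report distributions conditioned on the private bit, apply the additive Chernoff bound (Theorem~\ref{Chernoff Bound}) to the empirical mean of the independent reports, verify unbiasedness via the identity $(e^{\varepsilon_1+\varepsilon_2}+1)-(e^{\varepsilon_1}+e^{\varepsilon_2})=(e^{\varepsilon_1}-1)(e^{\varepsilon_2}-1)$, and transfer the deviation through the affine scaling factor. Your observation that the direct calculation actually gives $\sqrt{\ln(2/\beta)/(2n)}$ before scaling, a factor of $2$ better than the stated bound, is consistent with the slack the paper itself leaves in its final step.
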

	If we define $p$ as the vector whose coordinates are $p_0$ and $p_1$, and similarly $\tilde{p}$, then Theorem \ref{bitwise everlasting privacy accuracy} tells us that $\norm[\infty]{\tilde{p}-p}\le \frac{\left(e^{\varepsilon_1}+1\right)\left(e^{\varepsilon_2}+1\right)}{\left(e^{\varepsilon_1}-1\right)\left(e^{\varepsilon_2}-1\right)}\sqrt{\frac{2\ln\left(2/\beta\right)}{n}}$.
	The proof follows a ``standard" Chernoff bound structure. It reduces the question of accuracy to the setting in Theorem~\ref{Chernoff Bound}.
	\begin{proof}
		Notice that $\left|I_0\right|=p_0n$. Let $X_{i}$ be the reports generated by users in $I_0$ and $Y_{i}$  the reports generated by other users.
		We examine $S=\sum_{i \in \left[n\right]}{}r_i$, which is more suitable to bound:
		\begin{align*}
		S
		& = \frac{1}{n}\sum_{i \in \left[n\right]}{}r_i\\
		& = \frac{1}{n}\left(\sum_{i \in I_0}{}X_i+\sum_{i \in \left[n\right]\setminus I_0}{}Y_i\right)
		\end{align*}
		Where, as we mentioned before, $X_i\sim\ber[\frac{e^{\varepsilon_1}+e^{\varepsilon_2}}{\left(e^{\varepsilon_1}+1\right)\left(e^{\varepsilon_2}+1\right)}]$ and $Y_i\sim\ber[\frac{e^{\varepsilon_1+\varepsilon_2}+1}{\left(e^{\varepsilon_1}+1\right)\left(e^{\varepsilon_2}+1\right)}]$. Therefore $S$ is the average of $n$ random variables that are all drawn independently from Bernoulli distributions.
		Applying Theorem~\ref{Chernoff Bound} we have that:
		\begin{align*}
		\pr{\abs{S-\exv[S]}>\sqrt{\frac{\ln\left(2/\beta\right)}{2n}}}
		& \le 2e^{-2n\left(\sqrt{\frac{\ln\left(2/\beta\right)}{2n}}\right)^2}\\
		& = \beta
		\end{align*}
		Notice that $\tilde{p}_0=\frac{e^{\varepsilon_1+\varepsilon_2}+1-S}{\left(e^{\varepsilon_1}-1\right)\left(e^{\varepsilon_2}-1\right)}$.
		
		The expectation of $\tilde{p}_u$ is:
		\begin{align*}
		\exv[\tilde{p}_0]
		& = \exv[\dfrac{e^{\varepsilon_1+\varepsilon_2}+1-S}{\left(e^{\varepsilon_1}-1\right)\left(e^{\varepsilon_2}-1\right)}]\\
		& = \dfrac{e^{\varepsilon_1+\varepsilon_2}+1-\exv[S]}{\left(e^{\varepsilon_1}-1\right)\left(e^{\varepsilon_2}-1\right)}
		\end{align*}
		And the expectation of $S$ is:
		\begin{align*}
		\exv[S]
		& = \exv[\dfrac{1}{n}\left(\sigsum{i}{I_0}{}X_i+\sigsum{i}{\left[n\right]\setminus I_0}{}Y_i\right)]\\
		& = \dfrac{e^{\varepsilon_1}+e^{\varepsilon_2}}{\left(e^{\varepsilon_1}+1\right)\left(e^{\varepsilon_2}+1\right)}p_0+\dfrac{e^{\varepsilon_1+\varepsilon_2}+1}{\left(e^{\varepsilon_1}+1\right)\left(e^{\varepsilon_2}+1\right)}\left(1-p_0\right)\\
		& = \dfrac{e^{\varepsilon_1+\varepsilon_2}+1}{\left(e^{\varepsilon_1}+1\right)\left(e^{\varepsilon_2}+1\right)}-\dfrac{\left(e^{\varepsilon_1}-1\right)\left(e^{\varepsilon_2}-1\right)}{\left(e^{\varepsilon_1}+1\right)\left(e^{\varepsilon_2}+1\right)}p_0
		\end{align*}
		Plugging this back in $\exv[\tilde{p}_0]$ we have that:
		\begin{align*}
		\exv[\tilde{p}_0]
		& = p_0
		\end{align*}
		Which means that:
		\begin{align*}
		\pr{\abs{S-\exv[S]}>\sqrt{\dfrac{\ln\left(2/\beta\right)}{2n}}}
		& = \pr{\abs{\dfrac{\left(e^{\varepsilon_1}-1\right)\left(e^{\varepsilon_2}-1\right)}{\left(e^{\varepsilon_1}+1\right)\left(e^{\varepsilon_2}+1\right)}\left(\tilde{p}_0-p_0\right)}>\sqrt{\dfrac{\ln\left(2/\beta\right)}{2n}}}\\
		& = \beta
		\end{align*}
		and we get the required bound:
		\begin{align*}
		\pr{\abs{\tilde{p}_0-p_0}>\dfrac{\left(e^{\varepsilon_1}+1\right)\left(e^{\varepsilon_2}+1\right)}{\left(e^{\varepsilon_1}-1\right)\left(e^{\varepsilon_2}-1\right)}\sqrt{\dfrac{2\ln\left(2/\beta\right)}{n}}} & \le \beta.
		\end{align*}
	\end{proof}
	We can give  a slightly more intuitive upper bound on the error:
	\begin{corollary}
		For any $\beta >0$ with probability $1-\beta$ the maximal difference between $p$ and $\tilde{p}$ is bounded by:
		\begin{equation*}
		\norm[\infty]{\tilde{p}-p}\le\dfrac{\left(\varepsilon_1+2\right)\left(\varepsilon_2+2\right)}{\varepsilon_1\cdot\varepsilon_2}\sqrt{\dfrac{32\ln\left(2/\beta\right)}{n}}
		\end{equation*}
	\end{corollary}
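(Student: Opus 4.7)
The plan is to start from Theorem~\ref{bitwise everlasting privacy accuracy} and replace the coefficient $\frac{(e^{\varepsilon_1}+1)(e^{\varepsilon_2}+1)}{(e^{\varepsilon_1}-1)(e^{\varepsilon_2}-1)}$ by a more transparent expression in $\varepsilon_1,\varepsilon_2$. First I would reduce the infinity-norm statement to the single-coordinate bound already proved: since $\tilde{p}_1 = 1-\tilde{p}_0$ and $p_1 = 1-p_0$, we have $|\tilde{p}_1-p_1| = |\tilde{p}_0-p_0|$, so $\norm[\infty]{\tilde{p}-p} = |\tilde{p}_0-p_0|$ and Theorem~\ref{bitwise everlasting privacy accuracy} gives the high-probability bound directly.

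All that remains is the scalar inequality
\begin{equation*}
\frac{e^{\varepsilon}+1}{e^{\varepsilon}-1} \;\le\; 2\cdot\frac{\varepsilon+2}{\varepsilon}\qquad\text{for every }\varepsilon>0,
\end{equation*}
applied once for $\varepsilon_1$ and once for $\varepsilon_2$; multiplying the two bounds produces the desired factor $4\cdot\frac{(\varepsilon_1+2)(\varepsilon_2+2)}{\varepsilon_1\varepsilon_2}$, and absorbing the $4$ into the square root turns $\sqrt{2\ln(2/\beta)/n}$ into $\sqrt{32\ln(2/\beta)/n}$.

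To justify the scalar inequality I would clear denominators (legitimate because $e^\varepsilon-1>0$ and $\varepsilon>0$) and reduce to showing $(\varepsilon+4)e^{\varepsilon} \ge 3\varepsilon+4$. Setting $g(\varepsilon) = (\varepsilon+4)e^{\varepsilon} - 3\varepsilon - 4$, note $g(0)=0$ and $g'(\varepsilon) = (\varepsilon+5)e^{\varepsilon}-3$, which satisfies $g'(0)=2>0$ and is monotone increasing in $\varepsilon$; hence $g'\ge 2$ on $[0,\infty)$ and $g\ge 0$ there, establishing the claim.

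This is a short calculation with no real obstacle; the only thing to double-check is that the inequality is unconditional (it holds for every $\varepsilon>0$, not just for $\varepsilon<1$), so no assumption beyond $\varepsilon_1,\varepsilon_2>0$ is hidden in the simplified form. Once the scalar bound is in place, the corollary follows by direct substitution into Theorem~\ref{bitwise everlasting privacy accuracy}.
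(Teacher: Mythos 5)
Your proposal is correct and matches the route the paper implicitly takes (the paper states this corollary immediately after Theorem~\ref{bitwise everlasting privacy accuracy} without writing out the proof): reduce to the single coordinate via $\tilde{p}_1-p_1=-(\tilde{p}_0-p_0)$, then bound $\frac{e^{\varepsilon}+1}{e^{\varepsilon}-1}\le 2\cdot\frac{\varepsilon+2}{\varepsilon}$ for each of $\varepsilon_1,\varepsilon_2$ and absorb the resulting factor $4$ into $\sqrt{32}=4\sqrt{2}$. Your verification of the scalar inequality via $g(\varepsilon)=(\varepsilon+4)e^{\varepsilon}-3\varepsilon-4$ is sound (in fact $e^{\varepsilon}\ge 1+\varepsilon$ already gives the tighter bound $\frac{e^{\varepsilon}+1}{e^{\varepsilon}-1}\le\frac{\varepsilon+2}{\varepsilon}$, so the stated constant is not optimal, but it is what the corollary claims).
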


	\subsection{Trackability}
	Theorem~\ref{untrackable bound for permanent state mechanisms} gives us an {\em upper bound} on the untrackable parameter of the mechanism, namely that the mechanism is $\floor*{\frac{k}{2}}\varepsilon_2$-untrackable.
	
	We show a lower bound for the untrackable parameter of the mechanism: for $k$ reports the mechanism is at least $\gamma_k^\star$-untrackable for $\gamma_k^\star=\frac{k}{2}\varepsilon_2-\varepsilon_1-\ln2$. Let $p_{x,y}$ be the probability of having a stream of $y$ reports $x$ of which are 1's. We have that:
	\begin{align*}
	p_{x,y}
	& = \frac{e^{\varepsilon_1}}{1+e^{\varepsilon_1}}\cdot\left(\frac{e^{\varepsilon_2}}{1+e^{\varepsilon_2}}\right)^x\left(\frac{1}{1+e^{\varepsilon_2}}\right)^{y-x}+\frac{1}{1+e^{\varepsilon_1}}\cdot\left(\frac{1}{1+e^{\varepsilon_2}}\right)^x\left(\frac{e^{\varepsilon_2}}{1+e^{\varepsilon_2}}\right)^{y-x}\\
	& = \dfrac{e^{\varepsilon_1+x\varepsilon_2}+e^{\left(y-x\right)\varepsilon_2}}{\left(1+e^{\varepsilon_1}\right)\left(1+e^{\varepsilon_2}\right)^y}
	\end{align*}
	Now examine the set of $k$ reports where one half of them is all $1$ bits and the other half is all $0$ bits. This means that in this case  the untrackable parameter is at least:
	\begin{align*}
	\dfrac{\left(e^{\varepsilon_1}+e^{\frac{k}{2}\varepsilon_2}\right)\left(e^{\varepsilon_1+\frac{k}{2}\varepsilon_2}+1\right)}{\left(e^{\varepsilon_1}\right)\left(e^{\frac{k}{2}\varepsilon_2}+e^{\varepsilon_1+\frac{k}{2}\varepsilon_2}\right)}
	& \ge \dfrac{\left(1+e^{\frac{k}{2}\varepsilon_2}\right)\left(e^{\varepsilon_1}+e^{\varepsilon_1+\frac{k}{2}\varepsilon_2}\right)}{\left(1+e^{\varepsilon_1}\right)\left(e^{\frac{k}{2}\varepsilon_2}+e^{\varepsilon_1+\frac{k}{2}\varepsilon_2}\right)}\\
	& = \dfrac{e^{\varepsilon_1}\left(e^{\frac{k}{2}\varepsilon_2}+1\right)^2}{\left(e^{\varepsilon_1}+1\right)^2e^{\frac{k}{2}\varepsilon_2}}\\
	& \ge \dfrac{1}{4}e^{\frac{k}{2}\varepsilon_2-\varepsilon_1}
	\end{align*}
	which gives us the lower bound of $\frac{k}{2}\varepsilon_2-\varepsilon_1-\ln2$ to the mechanism's untrackable parameter.

	\section{Privacy, Accuracy and Trackability of RNIP}\label{privacy accuracy and tracability of inner product appendix}
	\subsection{Privacy}
	The set of all vectors and the results of the inner products is essentially a composition of differential private mechanisms, where each mechanism is the noisy inner product with a single vector. Our privacy analysis is for approximate differential privacy. A pure differential privacy guarantee can also be calculated. However, in this case there is much to be gained from the fact that, under approximate differential privacy, the guarantees deteriorate only as a square root of the number of mechanisms composed. Assume we want approximate everlasting privacy $\left(\varepsilon,\,\delta\right)$ (for $\varepsilon<1$), if we define:
	\begin{equation*}
	\varepsilon^\prime\coloneqq\dfrac{\varepsilon}{2\sqrt{2L\ln\left(\frac{1}{\delta}\right)}}
	\end{equation*}
	We choose $\varepsilon^\prime$ as the differential privacy guarantee of each noisy inner product. We can use Advanced Composition \cite{DworkRV2010} to show that the state, consisting of the $L$ vectors and the corresponding inner products, is $\left(\varepsilon, \delta\right)$-Differentially Private in the user's data. Consider an adversary who is only given access to the reports of users. The best this adversary can do is to restore the entire state of the user, resulting in the mechanism being $\left(\varepsilon, \delta\right)$-Differentially Private.
	
	\subsection{Accuracy}\label{report noisy inner product accuracy subsection}
	
For the accuracy analysis, we first present an accurate bound that is less intuitive. This analysis is similar to the one done in the ``Practical Locally Private Heavy Hitters" paper~\cite{BassilyNST2017} discussed in~\ref{background subsection}. It is also very similar to the one done in the ``How to (not) share a password: Privacy preserving protocols for finding heavy hitters with adversarial behavior" paper~\cite{NaorPR2018}, described as well in~\ref{background subsection}, which uses a similar mechanism, only with a single vector per user. The proof uses Theorem~\ref{Chernoff Bound} to bound the probability that a single frequency is estimated poorly, and by using union bound, we bound the probability that any frequency are estimated poorly. After we prove this, we present a more intuitive, easier to work with bound.
	\begin{theorem}
		Let $\beta\in\left[0,1\right]$.
		With probability $1-\beta$ over the randomness of all users, Report noisy inner product outputs $\tilde{p}$ such that,
		\begin{equation*}
		\norm[\infty]{\tilde{p}-p}\le \dfrac{2^d-1}{2^{d-1}}\dfrac{e^{\varepsilon^\prime}+1}{e^{\varepsilon^\prime}-1}\sqrt{\dfrac{2\ln\frac{2^{d+1}}{\beta}}{n}}
		\end{equation*}
	\end{theorem}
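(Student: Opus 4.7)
The plan is to prove this accuracy bound via the standard recipe for locally private frequency estimators: show that for each fixed candidate value $u$ the estimator $\tilde{p}_u$ is an unbiased estimator of $p_u$ whose deviation from its mean is an average of $n$ independent bounded random variables, apply the additive Chernoff bound of Theorem~\ref{Chernoff Bound} to each such $u$, and finally take a union bound over the $2^d$ possible values of $u$ to obtain the uniform guarantee in $\ell_\infty$. The logarithmic factor $\ln(2^{d+1}/\beta)$ in the statement already telegraphs this union-bound structure: we will set the per-value failure probability to $\beta/2^d$.

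Fix a candidate value $u \in \{0,1\}^d$. For each user $i$, define the $\pm 1$-valued random variable $Z_i := (-1)^{\langle V_i,\,u\rangle \oplus B_i}$, so that $\tilde{p}_u - \tfrac{1}{2^d} = \tfrac{2^d-1}{2^d n}\cdot \tfrac{e^{\varepsilon'}+1}{e^{\varepsilon'}-1}\sum_i Z_i$. The $Z_i$ are independent across $i$ (the $V_i$'s and noise bits $X_i$ are chosen independently), and I would first verify the estimator is unbiased by case-splitting on whether user $i$'s private value $u_i$ equals $u$. When $u_i=u$, the inner product cancels and $Z_i=(-1)^{X_i}$, giving $\mathbb{E}[Z_i]=\tfrac{e^{\varepsilon'}-1}{e^{\varepsilon'}+1}$. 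When $u_i\neq u$, $V_i$ is uniform over $\{0,1\}^d\setminus\{\boldsymbol{\vec 0}\}$, so $\langle V_i,\,u\oplus u_i\rangle$ equals $1$ with probability $\tfrac{2^{d-1}}{2^d-1}$ and $0$ with probability $\tfrac{2^{d-1}-1}{2^d-1}$; a short calculation gives $\mathbb{E}[Z_i]=-\tfrac{e^{\varepsilon'}-1}{(2^d-1)(e^{\varepsilon'}+1)}$. Summing over users and plugging back in, the factors $\tfrac{2^d-1}{2^d}$ and $\tfrac{1}{2^d}$ conspire so that $\mathbb{E}[\tilde p_u]=p_u$ exactly; this also explains the bias-correction constants in the estimator.

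For concentration, I would reduce the $\pm 1$ variables $Z_i$ to Bernoulli variables by writing $Z_i = 1 - 2W_i$ with $W_i\in\{0,1\}$. Then $\tfrac1n\sum_i Z_i - \mathbb{E}[\tfrac1n\sum_i Z_i] = -2(\bar W - \mathbb{E}[\bar W])$ where $\bar W$ is the average of $n$ independent Bernoullis, so Theorem~\ref{Chernoff Bound} yields $\Pr[|\tilde p_u - p_u| > t]\le 2\exp(-2na^2)$, where $a$ is $t$ rescaled by the leading coefficient $\tfrac{2^d-1}{2^d}\cdot\tfrac{e^{\varepsilon'}+1}{e^{\varepsilon'}-1}$. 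Solving $2\exp(-2na^2)=\beta/2^d$ and substituting back gives the per-value tail bound with the required $\sqrt{2\ln(2^{d+1}/\beta)/n}$ deviation, modulo the leading coefficient stated in the theorem.

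The final step is a union bound over all $2^d$ possible values of $u$: since each $\{|\tilde p_u - p_u| > t\}$ event has probability at most $\beta/2^d$, with probability at least $1-\beta$ the inequality holds simultaneously for every $u$, which is exactly $\|\tilde p - p\|_\infty$. The main technical obstacle, and the only step with any real content beyond bookkeeping, is the computation of $\mathbb{E}[Z_i]$ when $u\neq u_i$: one has to be careful that $V_i$ is drawn from $\{0,1\}^d\setminus\{\boldsymbol{\vec 0}\}$ rather than the whole cube, which introduces the slight asymmetry between $\tfrac{2^{d-1}-1}{2^d-1}$ and $\tfrac{2^{d-1}}{2^d-1}$ and is precisely the source of the $\tfrac{2^d-1}{2^d}$ and $\tfrac{1}{2^d}$ bias-correction factors in the estimator.
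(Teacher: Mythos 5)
Your proposal is correct and follows essentially the same route as the paper's proof: establish unbiasedness of $\tilde{p}_u$ by conditioning on whether $u_i=u$ (with the $\Ber\bigl(\frac{2^{d-1}}{2^d-1}\bigr)$ distribution of $\langle V_i,\,u\oplus u_i\rangle$ accounting for the bias-correction constants), apply the additive Chernoff bound to the per-value estimator viewed as an average of independent Bernoullis, and union bound over the $2^d$ values; your $\pm1$ variables $Z_i=1-2W_i$ are just an affine relabeling of the paper's agreement fractions $g_u$. The only cosmetic remark is that, like the paper, your calculation actually yields a bound a factor of $2$ tighter than the one stated in the theorem, which is harmless.
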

	The proof follows a standard Chernoff bound structure. It begins by bounding the error of the estimate of every possible user data $u$ using the additive Chernoff bound, and with union bound, shows that the bound holds for all possible user data with probability $1-\beta$.
	\begin{proof}
		We begin with a lemma that bounds the probability of having a big error on the estimation of the frequency of a single private value.
		\begin{lemma}\label{noisy inner product single value bound}
			Let $\beta\in\left[0,1\right]$. For all values $u \in \{0,1\}^d \backslash \{0\}$ we have that,
			\begin{equation*}
			\pr{\lvert\tilde{p}_u-p_u\rvert>\dfrac{2^d-1}{2^{d-1}}\dfrac{e^{\varepsilon^\prime}+1}{e^{\varepsilon^\prime}-1}\sqrt{\dfrac{2\ln\frac{2^{d+1}}{\beta}}{n}}}\le\dfrac{\beta}{2^d}
			\end{equation*}
			where the probability is taken over the randomness of all the users.
		\end{lemma}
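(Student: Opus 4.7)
The plan is to establish the lemma in two stages: first prove that $\tilde{p}_u$ is an unbiased estimator of $p_u$, and then apply the additive Chernoff bound (Theorem~\ref{Chernoff Bound}) to the sum of the $\pm 1$-valued random variables inside $\tilde p_u$.

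For the unbiasedness step, I would fix a user $j$ with private value $u_j$ and analyze the random variable $Z_j \coloneqq (-1)^{\langle V_j, u\rangle \oplus B_j}$. Writing $B_j = \langle V_j, u_j\rangle \oplus X_j$ and setting $w = u \oplus u_j$, this becomes $(-1)^{\langle V_j, w\rangle}\cdot (-1)^{X_j}$. By independence of $V_j$ and $X_j$, it factors as $\mathbb{E}[(-1)^{\langle V_j,w\rangle}] \cdot \mathbb{E}[(-1)^{X_j}]$. The noise factor equals $(e^{\varepsilon'}-1)/(e^{\varepsilon'}+1)$. For the vector factor, the key combinatorial identity is that for any nonzero $w$, exactly $2^{d-1}$ vectors $v \in \{0,1\}^d$ satisfy $\langle v,w\rangle = 1$ and $2^{d-1}$ satisfy $\langle v,w\rangle = 0$; excluding $\vec 0$ from the sampling space yields $\mathbb{E}[(-1)^{\langle V_j,w\rangle}] = -1/(2^d-1)$ when $w \neq 0$ and $1$ when $w = 0$. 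Summing over users (splitting into the $p_u n$ users with $u_j = u$ and the $(1-p_u)n$ users with $u_j \neq u$) and plugging into the definition of $\tilde p_u$, the $(2^d-1)/(2^d n) \cdot (e^{\varepsilon'}+1)/(e^{\varepsilon'}-1)$ prefactor and the $+1/2^d$ additive correction are exactly what cancel the bias introduced by excluding $\vec 0$, giving $\mathbb{E}[\tilde p_u] = p_u$.

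For the concentration step, I would define $Y_i \coloneqq (1 + Z_i)/2 \in \{0,1\}$, which are independent Bernoulli random variables, and let $\bar Y = \frac1n \sum_i Y_i$. Then $\sum_i Z_i = 2n\bar Y - n$, so
\begin{equation*}
\tilde p_u - p_u \;=\; \frac{2^d-1}{2^d n}\cdot\frac{e^{\varepsilon'}+1}{e^{\varepsilon'}-1}\cdot 2n\bigl(\bar Y - \mathbb{E}[\bar Y]\bigr) \;=\; \frac{2^d-1}{2^{d-1}}\cdot\frac{e^{\varepsilon'}+1}{e^{\varepsilon'}-1}\bigl(\bar Y - \mathbb{E}[\bar Y]\bigr).
\end{equation*}
Applying Theorem~\ref{Chernoff Bound} with $a = \sqrt{\ln(2^{d+1}/\beta)/(2n)}$ bounds $\Pr[|\bar Y - \mathbb{E}[\bar Y]| > a]$ by $2e^{-2na^2} = \beta/2^d$. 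Multiplying through by the prefactor yields a deviation bound that is, in fact, a factor of $2$ stronger than the stated one, so the lemma's bound with $\sqrt{2\ln(2^{d+1}/\beta)/n}$ follows a fortiori.

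The one subtle point, and the main place one must be careful rather than mechanical, is the unbiasedness computation: the bias coming from sampling $V_j$ from $\{0,1\}^d \setminus \{\vec 0\}$ rather than from $\{0,1\}^d$ is what forces the seemingly odd $\frac{2^d-1}{2^d}$ multiplicative and $\frac{1}{2^d}$ additive corrections in the definition of $\tilde p_u$, and verifying that they precisely undo the bias requires combining the $p_u$ and $1-p_u$ contributions correctly. Once that algebra is done, the Chernoff step is routine, and the full theorem stated in the text (over all $u$) follows from this lemma by a union bound over the at most $2^d$ possible nonzero values of $u$, which absorbs the $\beta/2^d$ per-value failure probability into the desired $\beta$.
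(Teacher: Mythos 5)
Your proposal is correct and follows essentially the same route as the paper's proof: the paper also reduces to the Bernoulli average $g_u=\frac1n\sum_i\bigl(1\oplus\langle V_i,u\rangle\oplus B_i\bigr)$ (your $\bar Y$), verifies $\mathbb{E}[\tilde p_u]=p_u$ by accounting for the exclusion of $\vec 0$, and applies the additive Chernoff bound with $a=\sqrt{\ln(2^{d+1}/\beta)/(2n)}$, likewise obtaining a bound a factor of $2$ stronger than the one stated in the lemma. The only cosmetic difference is that you compute the mean via the factorization $\mathbb{E}[(-1)^{\langle V_j,w\rangle}]\cdot\mathbb{E}[(-1)^{X_j}]$ whereas the paper computes the Bernoulli parameter of each summand directly via the XOR-of-Bernoullis identity.
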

		\begin{proof}
			Let $I$ be the set of all users who have value $u$. Notice that $\left|I\right|=p_un$. To fit the setting of Theorem \ref{Chernoff Bound}, we examine $g_u$, the fraction of reports that agree with $u$. In the following sum every element will be in $\mathbb{F}_2$. We will abuse notation and denote the sum of them as the summation of these elements {\em over the Reals}, i.e.\ as if every one of them was either $0$ or $1$.
			\begin{align*}
			g_u
			& \coloneqq \dfrac{1}{n}\sigsum{i}{\left[n\right]}{}\left(1\oplus\left\langle V_i,\,u\right\rangle\oplus B_i\right)\\
			& = \dfrac{1}{n}\left(\sigsum{i}{I}{}\left(1\oplus\left\langle V_i,\,u\right\rangle\oplus\left\langle V_i,\,u\right\rangle\oplus X_i\right)+\sigsum{i}{\left[n\right]\setminus I}{}\left(1\oplus\left\langle V_i,\,u\right\rangle\oplus\left\langle V_i,\,u^\prime\right\rangle\oplus X_i\right)\right)\\
			& = \dfrac{1}{n}\left(\sigsum{i}{I}{}\left(1\oplus X_i\right)+\sigsum{i}{\left[n\right]\setminus I}{}Y_i\right)
			\end{align*}
			Where $u^\prime$ is any value that is not $u$ and $Y_i$ is the XOR of $X_i$ and $\left\langle V_i,\,u\oplus u^\prime\right\rangle$. By definition, $X_i\sim\ber[\frac{1}{1+e^\varepsilon_1}]$. It is easy to see that $\left\langle V_i,\,u\oplus u^\prime\right\rangle\sim\ber[\frac{2^{d-1}}{2^d-1}]$. Since $Y_i$ is the XOR of two Bernoulli random variables, we can use the following simple fact to calculate its distribution:
			\begin{fact}
				Let $0 \leq q_1, q_2 \leq 1$. The XOR of two random variables $s\sim\ber[q_1]$ and $t\sim\ber[q_2]$ is
				\begin{equation*}
				s\oplus t\sim\ber[2q_1 q_2 -q_1-q_2+1].
				\end{equation*}
			\end{fact}
			Applying this Fact to $Y_i$ we can see that its probability distribution is
			\begin{equation*}
			Y_i\sim\ber[\dfrac{2^{d-1}\left(e^{\varepsilon^\prime}+1\right)-e^{\varepsilon^\prime}}{\left(2^d-1\right)\left(e^{\varepsilon^\prime}+1\right)}]
			\end{equation*}
			Therefore, $g_u$ is the average of $n$ random variables that are all drawn independently from Bernoulli distributions.
			Applying Theorem \ref{Chernoff Bound} we have that:
			\begin{equation}\label{chernoff of noisy innter product equation}
			\pr{\big\lvert g_u-\exv[g_u]\big\rvert>\sqrt{\dfrac{\ln\frac{2^{d+1}}{\beta}}{2n}}} \le \dfrac{\beta}{2^d}
			\end{equation}
			Notice that:
			\begin{align*}
			g_u
			& = \dfrac{1}{n}\sigsum{i}{\left[n\right]}{}\left(1\oplus\left\langle V_i,\,u\right\rangle\oplus B_i\right)\\
			& = \dfrac{1}{n}\sigsum{i}{\left[n\right]}{}\dfrac{\left(-1\right)^{\left\langle V_i,\,u\right\rangle\oplus B_i}+1}{2}
			\end{align*}
			Which means that $\tilde{p}_u=\frac{2^d-1}{2^d}\frac{e^{\varepsilon^\prime}+1}{e^{\varepsilon^\prime}-1}\left(2g_u-1\right)+\frac{1}{2^d}$.
			
			The expectation of $\tilde{p}_u$ is:
			\begin{align*}
			\exv[\tilde{p}_u]
			& =  \dfrac{2^d-1}{2^d}\dfrac{e^{\varepsilon^\prime}+1}{e^{\varepsilon^\prime}-1}\left(2\exv[g_u]-1\right)+\dfrac{1}{2^d}
			\end{align*}
			And the expectation of $g_u$ is:
			\begin{align*}
			\exv[g_u]
			& = \exv[\dfrac{1}{n}\left(\sigsum{i}{I}{}1\oplus X_i+\sigsum{i}{\left[n\right]\setminus I}{}Y_i\right)]\\
			& = \left(\dfrac{2^{d-1}\left(e^{\varepsilon^\prime}-1\right)}{\left(2^d-1\right)\left(e^{\varepsilon^\prime}+1\right)}\right)p_u+\dfrac{2^{d-1}\left(e^{\varepsilon^\prime}+1\right)-e^{\varepsilon^\prime}}{\left(2^d-1\right)\left(e^{\varepsilon^\prime}+1\right)}
			\end{align*}
			Plugging this back in $\exv[\tilde{p}_u]$ we have that $\exv[\tilde{p}_u] = p_u$.
			This allows us to take Inequality (\ref{chernoff of noisy innter product equation}) and transform it to get the bound required:
			\begin{align*}
			\pr{\big\lvert\tilde{p}_u-p_u\big\rvert>\dfrac{2^d-1}{2^{d-1}}\dfrac{e^{\varepsilon^\prime}+1}{e^{\varepsilon^\prime}-1}\sqrt{\dfrac{2\ln\frac{2^{d+1}}{\beta}}{n}}} & \le \dfrac{\beta}{2^d}
			\end{align*}
		\end{proof}
		Now using Lemma \ref{noisy inner product single value bound} and union bound on all possible user values $u$ we have that:
		\begin{align*}
		\pr{\norm[\infty]{\tilde{p}-p}\le\dfrac{2^d-1}{2^{d-1}}\dfrac{e^{\varepsilon^\prime}+1}{e^{\varepsilon^\prime}-1}\sqrt{\dfrac{2\ln\frac{2^{d+1}}{\beta}}{n}}}\ge 1-\beta
		\end{align*}
		as required.
	\end{proof}
	One can upper bound this error with a slightly more intuitive bound
	\begin{corollary}
		Let $\beta\in\left[0,1\right]$. With probability $1-\beta$ the maximal difference between $\tilde{p}$ and $p$ is bounded by:
		\begin{equation*}
		\norm[\infty]{\tilde{p}-p}\le\dfrac{\varepsilon^\prime+2}{\varepsilon^\prime}\sqrt{\dfrac{8\ln\frac{2^{d+1}}{\beta}}{n}}
		\end{equation*}
	\end{corollary}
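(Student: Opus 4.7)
The plan is to derive the corollary as an immediate weakening of the tight bound just established, replacing the two awkward factors in $\dfrac{2^d-1}{2^{d-1}}\cdot\dfrac{e^{\varepsilon^\prime}+1}{e^{\varepsilon^\prime}-1}$ by cleaner upper bounds. There is no new probabilistic content to handle: the high-probability event is inherited verbatim from the preceding theorem, so the work is purely algebraic.

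First I would handle the combinatorial factor. Since $2^d - 1 < 2^d = 2\cdot 2^{d-1}$, we get $\dfrac{2^d-1}{2^{d-1}} < 2$ for every $d \ge 1$. Absorbing this multiplicative $2$ under the square root converts the constant $2$ inside (coming from $2\ln(2^{d+1}/\beta)$) into the target constant $8$, accounting for the change in the constant between the theorem and the corollary.

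Second I would prove the privacy-factor inequality $\dfrac{e^{\varepsilon^\prime}+1}{e^{\varepsilon^\prime}-1}\le \dfrac{\varepsilon^\prime+2}{\varepsilon^\prime}$ for all $\varepsilon^\prime>0$. Both denominators are positive, so clearing them gives the equivalent form $\varepsilon^\prime\left(e^{\varepsilon^\prime}+1\right)\le \left(\varepsilon^\prime+2\right)\left(e^{\varepsilon^\prime}-1\right)$. Expanding both sides and cancelling the common $\varepsilon^\prime e^{\varepsilon^\prime}$ term reduces this to $2\varepsilon^\prime+2\le 2e^{\varepsilon^\prime}$, i.e.\ to the standard tangent-line bound $1+\varepsilon^\prime \le e^{\varepsilon^\prime}$ at the origin, which holds for every real $\varepsilon^\prime$.

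Combining these two inequalities with the preceding theorem yields the stated bound directly. The hardest step here is only the algebraic manipulation in the second bound, and even that collapses to a familiar textbook inequality, so the corollary is essentially a cosmetic restatement that trades a small amount of tightness for a form that is easier to interpret and plug into the parameter-selection discussion that follows.
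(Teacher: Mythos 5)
Your proposal is correct and matches what the paper intends: the corollary is stated without an explicit proof precisely because it is the immediate weakening you describe, using $\frac{2^d-1}{2^{d-1}}\le 2$ (which turns the $2$ under the root into an $8$) and $\frac{e^{\varepsilon^\prime}+1}{e^{\varepsilon^\prime}-1}\le\frac{\varepsilon^\prime+2}{\varepsilon^\prime}$, the latter reducing to $1+\varepsilon^\prime\le e^{\varepsilon^\prime}$ exactly as you show. Both steps check out, so nothing further is needed.
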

	Notice that the sum of the coordinates of $\tilde{p}$ is $1$:
	\begin{lemma}
		The sum of all estimated frequencies sums up to one, i.e.\ $\sigsum{u}{}{}\tilde{p}_u=1$
	\end{lemma}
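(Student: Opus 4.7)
The plan is to compute $\sum_{u\in\{0,1\}^d}\tilde p_u$ directly by substituting the definition, interchanging the order of summation, and applying a standard character-sum identity on $\mathbb F_2^d$. Concretely, I would first expand
\begin{equation*}
\sum_{u\in\{0,1\}^d}\tilde p_u
=\frac{2^d-1}{2^dn}\cdot\frac{e^{\varepsilon^\prime}+1}{e^{\varepsilon^\prime}-1}\sum_{i\in[n]}(-1)^{B_i}\sum_{u\in\{0,1\}^d}(-1)^{\langle V_i,\,u\rangle}+\sum_{u\in\{0,1\}^d}\frac{1}{2^d},
\end{equation*}
where in the first term I have pulled $(-1)^{B_i}$ out of the inner sum since it does not depend on $u$.

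Next I would invoke the orthogonality of characters of $\mathbb F_2^d$: for any fixed $v\in\{0,1\}^d$,
\begin{equation*}
\sum_{u\in\{0,1\}^d}(-1)^{\langle v,\,u\rangle}=\begin{cases}2^d & v=\vec{0}\\ 0 & v\neq\vec{0}\end{cases}.
\end{equation*}
This is the one non-routine ingredient, but it is classical (it follows from the fact that for any nonzero $v$, fixing a coordinate where $v$ has a $1$ gives an involution on $\{0,1\}^d$ that flips the sign of $(-1)^{\langle v,u\rangle}$). The crucial observation is that by construction every $V_i$ is drawn from the state, and the state was built by choosing vectors from $\{0,1\}^d\setminus\{\vec{0}\}$; therefore $V_i\neq\vec{0}$ for every $i$, so the inner character sum is identically $0$.

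Consequently the entire first term vanishes, and we are left with $\sum_{u\in\{0,1\}^d}\frac{1}{2^d}=1$, proving the lemma. I expect no real obstacle here; the only conceptual point worth emphasizing is why the mechanism excludes $\vec{0}$ from the state vectors in the first place, namely that without this exclusion the inner sum would equal $2^d$ whenever $V_i=\vec{0}$ and the estimator would no longer sum to $1$ (which is precisely the bias that is being absorbed by the multiplicative $\frac{2^d-1}{2^d}$ and additive $\frac{1}{2^d}$ correction factors already present in the definition of $\tilde p_u$).
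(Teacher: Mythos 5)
Your proof is correct and follows essentially the same route as the paper: expand the definition, interchange the sums over $u$ and $i$, and observe that $\sum_{u}(-1)^{\langle V_i,u\rangle}=0$ whenever $V_i\neq\vec{0}$ via the standard sign-flipping involution. The only cosmetic difference is that you dispose of $B_i$ by factoring out $(-1)^{B_i}$, whereas the paper absorbs it with the change of variables $u\mapsto u\oplus u^\prime$ for some $u^\prime$ with $\langle V_i,u^\prime\rangle=B_i$; both reduce to the same character-sum identity.
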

	\begin{proof}
		By the definition of $\tilde{p}_u$ we have that:
		\begin{align*}
		\sigsum{u}{}{}\tilde{p}_u
		& = 	\sigsum{u}{}{}\left(\dfrac{2^d-1}{2^dn}\dfrac{e^{\varepsilon^\prime}+1}{e^{\varepsilon^\prime}-1}\sigsum{i}{\left[n\right]}{}\left(-1\right)^{\left\langle V_i,\,u\right\rangle\oplus B_i}+\dfrac{1}{2^d}\right)\\
		& = 	\dfrac{2^d-1}{2^dn}\dfrac{e^{\varepsilon^\prime}+1}{e^{\varepsilon^\prime}-1}\sigsum{i}{\left[n\right]}{}\sigsum{u}{}{}\left(-1\right)^{\left\langle V_i,\,u\right\rangle}+1\\
		& = 	\dfrac{2^d-1}{2^dn}\dfrac{e^{\varepsilon^\prime}+1}{e^{\varepsilon^\prime}-1}\sigsum{i}{\left[n\right]}{}0+1\\
		& = 1
		\end{align*}
		where the second equality is due to the fact that since $V_i\ne\boldsymbol{\vec{0}}$ there exists a vector $u^\prime$ such that $\left\langle V_i,\,u^\prime\right\rangle=B_i$, and third equality is due to the fact that since $V_i\ne\boldsymbol{\vec{0}}$ there exists a vector $u^\star$ such that for all $u$, $\left\langle V_i,\,u\right\rangle=1\oplus\left\langle V_i,\,u\oplus u^\star\right\rangle$, meaning $\left(-1\right)^{\left\langle V_i,\,u\right\rangle}+\left(-1\right)^{\left\langle V_i,\,u\oplus u^\star\right\rangle}=0$.
	\end{proof}
	\subsection{Trackability}
	Our scheme is untrackable for ``not so large" number of reports. We prove that, when $k\ll\sqrt{L}$  (and $L\ll2^{\frac{d}{2}}$) then  Report Noisy Inner Product has fair chance of remaining perfectly untrackable (but this probability is {\em not} negligible):
	\begin{theorem}
		Let $k,L,d$ be positive integers. Report Noisy Inner Product, with state size $L$ for a domain of size $2^d$, is $\left(0,\frac{k^2}{L}+\frac{L^2}{2^d}\right)$-untrackable for $k$ reports.
	\end{theorem}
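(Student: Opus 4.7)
Since $\gamma=0$, the required inequality reduces to a statistical-distance bound: for every partition $J\subseteq[k]$, the plan is to show that the law $P_1$ of the one-user stream $\mech[G]^{\mech}_k(u)$ is within total variation $k^2/L+L^2/2^d$ of the two-user product law $P_2$ that assigns to $T$ the mass $\pr{\mech[G]^{\mech}_{|J|}(u)\in T_J}\cdot\pr{\mech[G]^{\mech}_{k-|J|}(u)\in T_{J^\complement}}$. The approach is to couple both processes to a common reference distribution $\mu_u^{\otimes k}$, where $\mu_u$ is the marginal law of a single report: sample $v$ uniformly from $\{0,1\}^d\setminus\{\vec 0\}$ and $x\sim\ber[1/(e^{\varepsilon^\prime}+1)]$, and output $(v,\,\langle v,u\rangle\oplus x)$.

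The key structural observation is that, because each user's $L$ state entries are i.i.d.\ $\mu_u$-samples, any $k$ reports that use all-distinct state indices within each user are themselves $k$ genuine i.i.d.\ $\mu_u$-samples. Conditioning on this ``no in-user state-index collision'' event therefore pins both $P_1$ and $P_2$ to $\mu_u^{\otimes k}$. Two birthday-style bounds then close the argument: drawing $k$ indices with replacement from $[L]$ produces a collision with probability at most $\binom{k}{2}/L$ in $P_1$, and the total in-user collision probability across the two users in $P_2$ is at most $\binom{|J|}{2}/L+\binom{k-|J|}{2}/L\le\binom{k}{2}/L$. Combining via the triangle inequality $\mathrm{SD}(P_1,P_2)\le\mathrm{SD}(P_1,\mu_u^{\otimes k})+\mathrm{SD}(\mu_u^{\otimes k},P_2)$, together with the standard fact that altering a distribution only on an event of probability $p$ costs at most $p$ in statistical distance, gives $\mathrm{SD}(P_1,P_2)\le k^2/L$.

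The residual $L^2/2^d$ absorbs the unlikely birthday event that two of the (at most $2L$) state vectors in play coincide across the relevant user states; this event has probability at most $\binom{2L}{2}/(2^d-1)=O(L^2/2^d)$, and its exclusion is what ensures the conditional reference distributions on the two sides are genuinely equal (with no spurious intra-user or cross-user repeated state vectors that could reveal which process is at play). Assembling the contributions yields $\mathrm{SD}(P_1,P_2)\le k^2/L+L^2/2^d$, which is precisely the claimed $(0,\,k^2/L+L^2/2^d)$-untrackability. The main delicate step is the conditional-i.i.d.\ reduction: one must verify that sampling distinct indices from an i.i.d.\ $L$-sample of $\mu_u$ indeed reproduces an i.i.d.\ $\mu_u$-sample, and that the two birthday bounds aggregate additively, not multiplicatively, when passing from ``bad event in one state'' to ``bad event in either state.''
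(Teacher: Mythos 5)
Your proof is correct, and it is worth noting that it is in fact \emph{stronger} than what the paper proves. Both arguments follow the same template (condition on a bad event under which the one-user and two-user laws coincide, then pay the bad event's probability in $\delta$), but you choose a better bad event. The paper conditions on ``some two \emph{reported vectors} coincide,'' which forces it to account for the possibility that two distinct state entries happen to contain the same vector; that is the sole source of the $L^2/2^d$ term. You instead condition only on ``some two \emph{state indices} coincide within a user,'' and observe that, given all-distinct indices, subsampling without replacement from an i.i.d.\ $L$-sample of $\mu_u$ yields exactly $\mu_u^{\otimes k}$ --- on \emph{both} sides of the comparison, since the two users' states are independent i.i.d.\ samples from the same $\mu_u$. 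This pins both conditional laws to the common reference $\mu_u^{\otimes k}$ with no further caveats, and the two birthday bounds plus the triangle inequality give statistical distance at most $2\binom{k}{2}/L\le k^2/L$, i.e.\ $(0,k^2/L)$-untrackability outright. Consequently, your final paragraph introducing the $L^2/2^d$ term is superfluous, and its justification is not right: excluding coincidences among the $2L$ state vectors is \emph{not} needed to make the two conditional reference distributions equal --- they are already both exactly $\mu_u^{\otimes k}$, a distribution which itself permits (rare) repeated vectors across coordinates, identically in either case. Since padding $\delta$ with a nonnegative quantity only weakens the claim, this does not invalidate your proof of the stated theorem, but you should either drop that paragraph or state plainly that you are discarding slack to match the theorem as written. (Minor quibble: $\binom{2L}{2}/(2^d-1)$ is about $2L^2/2^d$, not at most $L^2/2^d$, so even if that term were needed your constant would be off.)
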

	\begin{proof}
		The strategy of proving this theorem is:
		\begin{enumerate}
			\item Define an event $B$ so that conditioned on it not occurring, both the single user case and the two users case are identically distributed.
			\item $B$ has a probability of at most $\frac{k^2}{2L}+\frac{L^2}{2^d}$ of occurring both in the single user case and the two users case.
		\end{enumerate}
		The event $B$ occurs when any two vectors used for two of the $k$ reports are identical. To see this, first consider the stateless version of the Report Noisy Inner Product. At each report, this mechanism chooses a random vector and reports the vector and the noisy inner product of that vector with the private value. Conditioned on $B$ not happening, the Report Noisy Inner Product is equivalent to its stateless version. Since for the stateless mechanism each report is completely independent it does not matter if the reports were generated by a single user or two. This means that, conditioned on $B$ not happening, both cases are identically distributed.
		
		For the single user case, event $B$ happens with probability at most $\binom{k}{2}/{L} \le\frac{k^2}{2L}+\frac{L^2}{2^d}$. For the two user set, let $k_1$ and $k_2$ be the number of reports attributed to the first and second user respectively (so that $k_1+k_2=k$). The probability of having two identical reports in the first user's reports is $\binom{k_1}{2}/L$, and similarly $\binom{k_2}{2}/L$ for the second user's reports. The probability of a first user's report being equal to a second user's report is upper bounded by the probability that both users choose an identical vector for one of their stored reports. This probability is upper bounded by $\frac{L^2}{2^d}$. This results in the probability of having two identical reports if the reports originated from two distinct users being at most $\frac{1}{L}\cdot\binom{k_1}{2}+\frac{1}{L}\cdot\binom{k_2}{2}+\frac{L^2}{2^d}\le\frac{k^2}{2L}+\frac{L^2}{2^d}$.
	\end{proof}
\end{document}